\setlist[itemize]{label=--} 
\setlist[enumerate]{label=(\arabic*),labelindent=\parindent,leftmargin=*}
\definecolor{citecolor}{HTML}{0000C0}
\definecolor{urlcolor}{HTML}{000080}
\newtheorem{theorem}{Theorem}
\newtheorem{corollary}[theorem]{Corollary}
\newtheorem{lemma}[theorem]{Lemma}
\newtheorem{proposition}[theorem]{Proposition}
\DeclareMathOperator*{\radius}{rad}
\DeclareMathOperator*{\diam}{diam}
\newcommand{\none}{\textsc{none}}
\newcommand{\St}{\mathbb{G}_t}
\newcommand{\Sr}{\mathbb{G}_r}
\newcommand{\Sto}{\mathbb{G}_{t+1}}
\newcommand{\Sro}{\mathbb{G}_{r+1}}
\newcommand{\Xt}[1]{\mathbb{X}_t(#1)}
\newcommand{\Rt}[1]{\mathbb{R}_t(#1)}
\newcommand{\Nt}[1]{\mathbb{N}_t(#1)}
\newcommand{\Rto}[1]{\mathbb{R}_{t+1}(#1)}
\newcommand{\Tt}[1]{\mathbb{T}_t(#1)}
\newcommand{\Tto}[1]{\mathbb{T}_{t+1}(#1)}
\newcommand{\Tr}[1]{\mathbb{T}_r(#1)}
\newcommand{\config}[1]{{\mathcal{#1}}}
\newcommand{\namedref}[2]{\hyperref[#2]{#1~\ref*{#2}}}
\newcommand{\sectionref}[1]{\namedref{Section}{#1}}
\newcommand{\appendixref}[1]{\namedref{Appendix}{#1}}
\newcommand{\figureref}[1]{\namedref{Figure}{#1}}
\newcommand{\equationref}[1]{\hyperref[#1]{Eq~(\ref*{#1})}}
\newcommand{\theoremref}[1]{\hyperref[#1]{Theorem~\ref*{#1}}}
\newcommand{\lemmaref}[1]{\hyperref[#1]{Lemma~\ref*{#1}}}
\newcommand{\remarkref}[1]{\hyperref[#1]{Remark~\ref*{#1}}}
\newcommand{\definitionref}[1]{\hyperref[#1]{Definition~\ref*{#1}}}
\newcommand{\corollaryref}[1]{\hyperref[#1]{Corollary~\ref*{#1}}}
\newenvironment{mycover}
{\list{}{\listparindent 0pt
        \itemindent    \listparindent
        \leftmargin    0cm
        \rightmargin   0cm
        \parsep        0pt}%
    \raggedright
    \item\relax}
{\endlist}
\newcommand{\myemail}[1]{\,$\cdot$\, {\small #1}}
\newcommand{\myaff}[1]{\,$\cdot$\, {\small #1}\par\medskip}
\title{Wait-free approximate agreement on graphs}
\begin{document}

\begin{mycover}
    {\huge\bfseries\boldmath Wait-free approximate agreement on graphs \par}
    \bigskip
    \bigskip
    \textbf{Dan Alistarh}
    \myemail{dan.alistarh@ist.ac.at}
    \myaff{IST Austria}

    \textbf{Faith Ellen}
    \myemail{faith@cs.toronto.edu}
    \myaff{University of Toronto}

    \textbf{Joel Rybicki}
    \myemail{joel.rybicki@ist.ac.at}
    \myaff{IST Austria}

\end{mycover}

\medskip
\noindent\textbf{Abstract.}
  Approximate agreement is one of the few variants of consensus that can be solved in a wait-free manner in asynchronous systems where processes communicate by reading and writing to shared memory.
  In this work, we consider a natural generalisation of approximate agreement on arbitrary undirected connected graphs.
Each process is given a vertex of the graph as input and, if non-faulty, must output a vertex such that
\begin{itemize}[noitemsep]
\item all the outputs are within distance 1 of one another, and 
\item each output value lies on a 
shortest path 
between two input values.
\end{itemize}
From prior work, it is known that there is no wait-free algorithm among $n \ge 3$ processes for this problem on any cycle of length $c \ge 4$, by reduction from 2-set agreement (Casta\~neda et al., 2018).

In this work, we investigate the solvability and complexity of this task on general graphs.
We give a new, direct proof of the impossibility of approximate agreement on cycles of length $c \ge 4$, via a generalisation of Sperner's Lemma to convex polygons. We also extend the reduction from 2-set agreement to a larger class of graphs, showing that approximate agreement on on these graphs is unsolvable. 
Furthermore, we show that combinatorial arguments, used by both existing proofs, are necessary, by showing that the impossibility of a wait-free algorithm in the nonuniform iterated snapshot model cannot be proved via an extension-based proof. On the positive side, we present a wait-free algorithm for a class of graphs that properly contains the class of chordal graphs.

\thispagestyle{empty}
\setcounter{page}{0}
\newpage

\section{Introduction}

Understanding the solvability and complexity of coordination tasks is one of the key questions in distributed computing.
The difficulty of coordination often arises from \emph{uncertainty}: processes have limited knowledge about each other's inputs, the relative speed of computation and communication between processes can vary, and processes may fail during computation.

Tasks which require perfect agreement, such as 
\emph{consensus}~\cite{pease80reaching},
are typically hard to solve:
Fischer, Lynch, and Paterson~\cite{fischer85impossibility} proved that consensus cannot be reached in asynchronous message-passing systems if even one process may crash. Later, this was extended to shared memory systems where processes communicate 
using shared registers~\cite{chor1987processor,loui1987memory}.

While perfect agreement is 
not needed for many applications, it is known that agreeing
on at most $k > 1$ different values is still hard:
There exists no algorithm for $k$-set agreement that tolerates $k$ crash faults in the asynchronous setting for $n > k$ processes~\cite{borowsky1993generalized,herlihy1999topological,saks2000wait}. In contrast, approximate agreement -- agreeing on values that are sufficiently close to one another -- can be considerably easier~\cite{Dolev1986Reaching,attiya1994wait-free,Schenk1995,fekete1990asymptotically,fekete1994asynchronous,mendes2014distributed}. 

\subsection{Graphical approximate agreement}
In this work, we study solvability and complexity of approximate agreement when the set of input and output values reside on a graph. Consider a distributed system with $n$ processes 
and let $G = (V,E)$ be a connected graph.
The graph $G$ is not assumed to be related to the communication topology of the distributed system, but it is assumed to be known by all processes. In approximate agreement on~$G$, each process $p_i$ is given a node $x_i \in V$ as input and has to  output a node $y_i \in V$ subject to the following constraints: 
\begin{itemize}[noitemsep]
\item agreement: every two output values are adjacent in $G$,  and
\item (shortest path) validity: each output value lies on a shortest path between two input values. 
\end{itemize}
Note that the output values form a clique. \figureref{fig:examples}(a) gives an example of graphical approximate on a tree. Prior work has mostly focused on 
the cases
when $G$ is 
a path~\cite{Dolev1986Reaching,attiya1994wait-free,Schenk1995,fekete1990asymptotically,fekete1994asynchronous}, 
a graph whose clique graph is a tree~\cite{alcantara2019topology}, or a chordal graph~\cite{nowak2019byzantine}, i.e., a graph that contains no induced cycle of length greater than three.

\begin{figure}[t]
\begin{center}
  \includegraphics[page=1,width=0.9\textwidth]{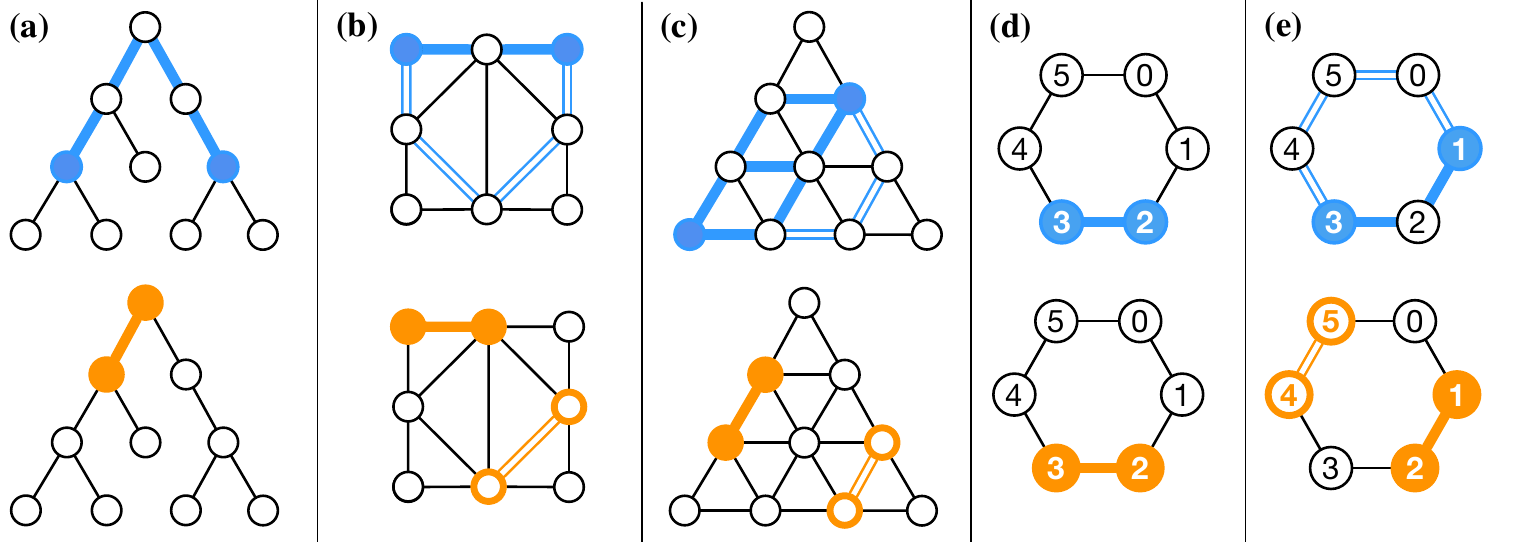}
  \caption{Examples of approximate agreement 
  with $n=2$ processes.
  In the top row, blue nodes are input values for a particular instance.
 Solid blue edges denote edges on \emph{shortest} paths, and blue double lines denote the additional edges that are also on some \emph{minimal} path connecting 
 input nodes. Solid and non-solid orange nodes denote outputs that satisfy the shortest path and minimal path validity constraints, respectively.
  (a)~Agreement on a tree. All minimal paths are also shortest paths. (b)~Agreement on a chordal graph. (c)~Agreement on a non-chordal bridged graph. (d)--(e) Instances of 6-cycle agreement.
  \label{fig:examples}}
\end{center}
\end{figure}

\paragraph{Approximate agreement on a path.} The special case when $G$ is a path is well-understood. This case is typically studied in the continuous setting, where the values reside on the real line and the goal is to output values within distance $\varepsilon > 0$ of each other.
A discrete version of the problem can be obtained by considering integer-valued inputs and outputs and taking $\varepsilon = 1$.
In the shared-memory setting, Attiya, Lynch, and Shavit~\cite{attiya1994wait-free} showed that the step complexity of wait-free solutions using single-writer registers is $\Theta(\log n)$. Using multi-writer registers, Schenk~\cite{Schenk1995} established that the step complexity of obtaining agreement 
is $O(\log D)$, where $D$ is the maximum distance between two input values.

In asynchronous message-passing systems, Dolev, Lynch, Pinter, Stark and Weihl~\cite{Dolev1986Reaching} 
showed that approximate agreement can be solved with $f < n/5$ Byzantine faults.
This was improved by Abraham, Amit, and Dolev~\cite{Abraham2005optimal} to allow $f <n/3$ Byzantine faults, matching a lower bound by Fischer, Lynch, and Merritt~\cite{fischer1986easy}.
Efficient algorithms tolerating more benign faults in the synchronous and asynchronous message-passing settings were given by Fekete~\cite{fekete1990asymptotically,fekete1994asynchronous}.

\paragraph{Approximate agreement under minimal path validity.}
Rybicki and Nowak~\cite{nowak2019byzantine} studied approximate agreement on chordal graphs under a slightly different validity condition, where output values have to lie on a \emph{minimal} path between any two input values. A path in $G$ is \emph{minimal} if no two non-consecutive nodes in the path are connected by an edge, i.e.~if $v_0,\ldots,v_k$ is a minimal path and $0 \leq i < j-1 \leq k-1$, then $\{v_i,v_j\} \not\in E$. This validity condition is weaker, since every shortest path between two nodes is a minimal path, but the converse is not true. Figures~\ref{fig:examples}(b)--(c) illustrate the difference between minimal and shortest paths.
If $G$ is chordal, then there exists an  
algorithm tolerating $f$ \emph{Byzantine} faults in the asynchronous message-passing model for $n > (\omega(G)+1)f$ processes, where $\omega(G)$ is the size of the largest clique in $G$~\cite{nowak2019byzantine}.

\paragraph{Approximate gathering on graphs.}
Alc{\'a}ntara, Casta{\~n}eda, Flores-Pe{\~n}aloza, and Rajsbaum~\cite{alcantara2019topology} investigated 
approximate agreement
with the following weaker \emph{clique gathering} validity condition: if all inputs values are adjacent, then each output value has to be one of the input values. Their validity condition arises from considering an approximate gathering problem for robots on a graph.
This condition is weaker than shortest path and minimal path validity: for example, in the instances given in Figures~\ref{fig:examples}(b)--(e), any set of outputs that lie on a clique would satisfy clique gathering validity.

They showed that this problem is solvable in a wait-free manner on graphs whose clique graphs are trees and on graphs of radius one (i.e., graphs with a dominating set of size one). 
A clique graph $K(G)$ of $G$ is the graph where vertices of $K(G)$ are the maximal cliques of $G$ and two vertices of $K(G)$ are adjacent if they correspond to cliques with a common vertex. 
Note that there are chordal graphs whose clique graphs are not trees; for example, see \figureref{fig:examples}(b). 

\paragraph{Approximate agreement on cycles.}
When $G$ is a cycle of length $c \ge 4$,
approximate agreement under
minimal path validity 
and clique gathering validity
are the same problem.
We refer to this special case as \emph{$c$-cycle agreement}. 
When $c=3$, the problem is trivial, since each process can output its input.

Casta{\~n}eda, Rajsbaum, and Roy~\cite{castaneda2018convergence} showed that 2-set agreement 
reduces to $c$-cycle agreement,
for $c \ge 4$.
This implies that there is no algorithm for 
approximate agreement on $c$-cycles (under both minimal and shortest path validity) for $c \geq 4$
that tolerates $2$ crash faults in 
in asynchronous shared memory systems consisting of registers.
Hence, approximate agreement on cycles of length at least 4 is harder than on paths and chordal graphs.

\subsection{Contributions}

In this work, we establish
additional
positive and negative results on the solvability and complexity of graphical approximate agreement.

\paragraph{Positive results.}
We present a wait-free asynchronous algorithm 
for $n \geq 2$ processes that solves approximate agreement
on a large subclass of
bridged graphs, and on any radius one graph.
A \emph{bridged graph} is a graph in which each of its
cycles of length at least 4 contains 2 vertices that are connected by a shorter path than either path in the cycle connecting them~\cite{Farber1987Local,farber1989diameters}.
All chordal graphs are bridged, but the converse is not necessarily true; for an example, see \figureref{fig:examples}(c).

Our algorithm solves the graphical approximate agreement problem on all chordal graphs and a large class of non-chordal graphs of arbitrary large radius.
This includes graphs of radius one and graphs whose clique graphs are trees.
Thus, our algorithm handles all graphs handled by previous algorithms, while guaranteeing a stronger validity condition.
See Table 1 for a~comparison.

In addition, we give a 1-resilient asynchronous algorithm for graphical approximate agreement 
using only registers
on \emph{any} connected graph for $n \ge 2$ processes. 
Note that, when $n=2$, this algorithm is wait-free.
For the fully-connected synchronous message-passing model, 
we also
present
an $f$-resilient synchronous algorithm for the fully-connected message-passing model with $n>f$ processes. The algorithm solves approximate agreement on any connected graph $G$ in $\lfloor f/2 \rfloor + \lceil \log_2 \diam(G) \rceil + 1$ rounds, where $\diam(G)$ is the diameter of $G$.

\begin{table}[t]
\center
\begin{tabular}{@{}llllll@{}}
\toprule
Graph class & Validity condition & Fault model & Reference   \\ 
\midrule
Clique graph is a tree  &  Clique gathering  &  Wait-free & \cite{alcantara2019topology}      \\
Radius one &   &  Wait-free & \cite{alcantara2019topology}  \\
\midrule
Chordal  &  Minimal paths & Byzantine, $n > (\omega+1)f$  & \cite{nowak2019byzantine}  \\
\midrule
Paths & Shortest paths & Wait-free & \cite{attiya1994wait-free,Schenk1995} \\
Paths &  & Byzantine, $n > 3f$ & \cite{Abraham2005optimal} \\
Nicely bridged or radius one &  & Wait-free  &  {\bf this work}  \\ 
Any & & 1-resilient & {\bf this work} \\
\bottomrule
\end{tabular}
\caption{Algorithms for asynchronous approximate agreement on graphs.}
\label{table:summary}
\end{table}

\paragraph{Negative results.}
We provide a new, direct proof of the impossibility of approximate agreement on cycles of length
$c \geq 4$. It uses a generalisation of Sperner's Lemma to convex polygons.
It follows from known simulation techniques~\cite{gafni1998round,borowsky2001bg} that
there is no 2-resilient asynchronous algorithm using registers  and
any $f$-resilient synchronous algorithm requires at least $\lfloor f/2 \rfloor +1 $ rounds for $n > f$ processes. Furthermore, we present a simplified version of the existing reduction
from $k$-set agreement to cycle agreement and use it to
extend the impossibility of graphical approximate agreement 
to a larger class of
graphs.

\paragraph{Extension-based proofs.} Finally, in \sectionref{sec:extension}, we show that extension-based proofs~\cite{alistarh2019extension},
such as valency arguments, are not powerful enough to  show the impossibility of 4-cycle agreement in the non-uniform iterated snapshot model.
Note that this result does not follow from the fact that there are no extension-based proofs of the impossibility of 2-set agreement in the 
non-uniform iterated snapshot model~\cite{AAEGZ20}, even though there is a reduction from 2-set agreement to $c$-cycle agreement for $c \geq 4$.

\section{Related work}

\paragraph{Multidimensional approximate agreement.}
Mendes, Herlihy, Vaidya and Garg~\cite{MHVG15} generalised approximate agreement to the multidimensional setting, where the input values are points in $m$-dimensional Euclidean space $\mathbb{R}^m$, for $m > 0$. In the multidimensional approximate agreement problem, the 
output values should be within distance $\varepsilon > 0$ of one another
and be contained in the convex hull of 
the input values of the non-faulty processes.
When $m=1$, this is
approximate agreement on a line.
Multidimensional approximate agreement on $\mathbb{R}^m$ is solvable with $f$ Byzantine faults in the asynchronous completely-connected message-passing setting if and only if $n > (m+2)f$~\cite{MHVG15}. In the synchronous setting, the problem is solvable if and only if $n > \max \{ 3f, (m+1)f \}$. Recently, F\"ugger and Nowak~\cite{fuegger2018fast} established asymptotically tight convergence rates for multidimensional approximate agreement by removing the dependence on the dimension $m$ of the space.

Unlike approximate agreement on the real line, 
it is not straightforward to obtain a discrete version of multidimensional approximate agreement when $m \geq 2$.
For example, in the two-dimensional integer lattice $\mathbb{Z}^2 \subseteq \mathbb{R}^2$, one can find a pair of points arbitrarily far apart such that they are the only integral points in their convex hull.
In this case, solving approximate agreement is the same as solving consensus. More generally, Herlihy and Shavit~\cite{herlihy1993asynchronous}
showed that approximate agreement in a multidimensional setting with Euclidean convex hulls cannot be solved in a wait-free manner
when processes communicate using registers
if the space of values has holes of size $\varepsilon$. Since the Euclidean convex hull of two antipodal points around the hole consists of only the two points, outputting values within distance $\varepsilon$ of one another in this convex hull would amount to solving consensus.

Barycentric agreement~\cite{herlihy2013book} is a multidimensional problem that can be solved wait-free manner: processes are given inputs that lie on a simplex $\sigma$ of a simplicial complex and must output values that are on a simplex of the barycentric subdivision of $\sigma$. This problem can be solved, for example, using $m$-dimensional approximate agreement~\cite{mendes2014distributed}.

\paragraph{Approximate robot gathering in graphs.}
Robot gathering problems have been studied in the continuous setting~\cite{agmon2006fault,cieliebak2012distributed},
but we focus on the discrete setting, where $n$ robots reside on nodes in a graph $G$.
The inputs represent the initial positions of the robots, 
the outputs represent the final positions of the robots, and
the goal is that the outputs are close to one another.

\emph{Exact} gathering of asynchronous robots, where the goal is to get all robots to the same vertex, has been studied extensively in various models. See a recent survey of Cicerone, Di Stefano, and Navarra~\cite{cicerone2019asynchronous}.
Casta{\~n}eda, Rajsbaum, and Roy~\cite{castaneda2018convergence} and Alc{\'a}ntara, Casta{\~n}eda, Flores-Pe{\~n}aloza, and Rajsbaum~\cite{alcantara2019topology} studied several variants of 
approximate
gathering of asynchronous robots moving on a graph that communicate via snapshots.
In \emph{edge gathering}~\cite[Definition 4]{alcantara2019topology}, agreement is satisfied if all outputs 
belong to
the same edge. Validity requires that (i) if all inputs values are the same, then the output values are the same as the input values, and (ii) if all inputs belong to the same edge, then the output values also belong to this edge. The \emph{1-gathering} task~\cite[Definition 5]{alcantara2019topology} is a relaxation of edge gathering, where agreement is satisfied if the output values form a clique, and validity requires that the output values must be a subset of the input values if the input values form a clique.

Note that neither edge gathering or 1-gathering  solve graphical approximate agreement, as the validity constraint of graphical approximate agreement is stronger: each output value has to lie on some shortest path between two input values. The difference is best illustrated by the simple case of a path, where approximate agreement requires that the outputs always
lie
between 
the minimal and maximal input values, while edge gathering and 1-gathering do not have this requirement. 

Edge gathering is solvable if and only if $G$ is a tree~\cite{alcantara2019topology}.
On cliques, edge gathering
is the same as the 2-set agreement task, whereas 1-gathering and graphical approximate agreement are trivial. For 1-gathering, Alc{\'a}ntara et al.~\cite{alcantara2019topology} gave an algorithm for trees, which can also be used to solve 1-gathering on any graph whose clique graph is a tree.

When the graph $G$ is a cycle of length $c \ge 4$, edge gathering and 1-gathering are the same as $c$-cycle agreement. Casta{\~n}eda et al.~\cite{castaneda2018convergence} and Alc{\'a}ntara et al.~\cite{alcantara2019topology} gave a clever reduction showing that this problem is as hard as 2-set agreement 
for $n=3$ processes. In \sectionref{sec:impossibility-wait-free},
we give a direct proof of this result. Moreover, 
in \sectionref{sec:reductions},
we simplify and adapt the reduction from 2-set agreement to 
prove that wait-free graphical approximate agreement is impossible on a much larger class of graphs.

\section{Models}

We consider distributed systems with $n$ processes, where some processes may fail by crashing.
In particular, we 
focus on the setting where processes communicate using 
atomic snapshot objects,
which can be implemented from registers.
We also consider the 
synchronous message-passing model under fully-connected communication topologies.

\subsection{Asynchronous shared memory models}

In the  \emph{f-resilient non-uniform iterated snapshot} ($f$-NIS) model, $n$ processes, $p_0,\dots,p_{n-1}$, communicate using an infinite sequence, $S_1,S_2,\dots$, of shared \emph{single-writer atomic snapshot} objects.
A \emph{single-writer atomic snapshot} object has $n$ components, each of which has initial value $-$. It supports two atomic operations, $\mathsf{update}$ and $\mathsf{scan}$.
An $\mathsf{update}(x)$ by process $p_i$ changes the value of component $i$ to $x \neq -$.
A $\mathsf{scan}$ returns the value of each component.

Each process performs an $\mathsf{update}$ on a snapshot object, starting with $S_1$, and then repeatedly performs $\mathsf{scan}$s of this object until at most $f$ components have value $-$. 
(Note that, if $f = n-1$, then one $\mathsf{scan}$ of the snapshot object suffices, since the process has already performed an $\mathsf{update}$ on its own component.)
Next, it updates its state and applies a function, $\Delta$, to its new state to determine whether it should output a value. If the value of $\Delta$ is not $\bot$, then $p_i$ outputs this value and terminates. If the value of $\Delta$ is $\bot$, then, at its next step, it $\mathsf{update}$s the next snapshot object in the sequence with a value determined by its new state.

Note that it suffices to consider schedules where all accesses to each snapshot object
occur before any accesses to the next snapshot object in the sequence.
This is because if process $p_j$ performs its $\mathsf{update}$ of a particular snapshot
object after process $p_i$ performs its $\mathsf{scan}$s to this object, then it is indistinguishable to both processes how much later this occurs.

A \emph{configuration} consists of the contents of each shared object and the state of each process.
From any configuration $C$, a \emph{scheduler} decides the order in which the processes take steps. 
The sequence 
of processes selected by the scheduler is called a \emph{schedule from} $C$. Given a finite schedule $\alpha$ from $C$, we use $C\alpha$ to denote the resulting configuration. 
An algorithm
is \emph{wait-free} if there is no infinite schedule from any initial configuration.

The \emph{non-uniform iterated immediate snapshot} (NIIS) model,
introduced by Hoest and Shavit~\cite{hoest2006toward},
is like a full-information $(n-1)$-NIS model, except that the scheduler is restricted in how it can schedule processes: 
It repeatedly selects a set of processes that are all poised to perform $\mathsf{update}$s on the same snapshot object. Each of the processes in the set performs its $\mathsf{update}$. Then, each of these processes performs one $\mathsf{scan}$ of this snapshot object.
Note that, since each process performs an $\mathsf{update}$ to a snapshot object before performing a $\mathsf{scan}$, the $\mathsf{scan}$ will return a vector containing at most $n-1$ components with value $-$.
Initially, the state of process $p_i$ consists of its identifier, $i$, and its input.
When it performs an $\mathsf{update}$, the value it uses is
its current state.  
After performing a $\mathsf{scan}$, its new state consists of $i$ and the result of the $\mathsf{scan}$. 

Each initial configuration in the NIIS model or $f$-NIS model corresponds to a \emph{simplex} (or an $n$-vertex clique) containing one vertex for each process, which specifies its input. 
The collection of all such simplexes is called the \emph{input complex} (or \emph{input graph}).
Likewise, for any algorithm, each reachable terminal configuration
corresponds to a simplex (or $n$-vertex clique) containing one vertex for each process, which specifies its state,
including the value it outputs.
The collection of all such simplexes (or $n$-vertex cliques) is called
the \emph{protocol complex} (or \emph{protocol graph}).
We may assume that the sets of possible states of different processes are disjoint.
There is an edge between two vertices 
if they represent the states of different processes and there is a reachable configuration containing both these states.

A nice feature of the NIIS model is that the protocol complex of any wait-free algorithm 
can be obtained from the input complex by performing a finite number of \emph{non-uniform chromatic subdivisions} of simplexes.
In the special case when there are $n=3$ processes,
each simplex is a triangle and the
non-uniform chromatic subdivision of a simplex is
a triangulation of the simplex.
Likewise, in the $(n-1)$-NIS model, Alistarh, Aspnes, Ellen, Gelashvili, and Zhu~\cite{AAEGZ20}
have shown that the protocol graph of any wait-free, full-information algorithm can be obtained from the input graph by performing a finite number of \emph{subdivisions}
of $n$-vertex cliques.
For deterministic, wait-free computation, both the NIIS model and the $(n-1)$-NIS model 
are equivalent to the asynchronous shared memory model in which processes communicate using shared registers (which support only $\mathsf{read}$ and $\mathsf{write}$)~\cite{BG97}.

\subsection{The synchronous message-passing model}

In the synchronous message-passing model,
there is no uncertainty regarding the relative speeds of processes. A computation is divided into synchronous rounds. In each round, each process sends its entire state to every other process. Any message sent by a non-faulty process in round~$r$ is guaranteed to arrive at its destination before the end of round~$r$.
A synchronous algorithm is
an $f$-resilient solution to a task using $T$ rounds 
if all non-crashed processes decide on an output value by the end of round $T$
in any execution with at most $f$ crashes.

\section{Impossibility of asynchronous wait-free cycle agreement}
\label{sec:impossibility-wait-free}

In this section, we give a proof of the following result. 

\begin{theorem}\label{thm:no-wait-free}
For $c \ge 4$,
there is no wait-free algorithm for the $c$-cycle agreement problem
among $n \geq 3$ processes in the
NIIS
model.
\end{theorem}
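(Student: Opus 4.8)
The plan is to give a direct, topological impossibility proof in the spirit of the classical 2-set agreement argument, but adapted so that the relevant combinatorial obstruction is a Sperner-type lemma for convex polygons rather than for simplices. Suppose, for contradiction, that a wait-free algorithm for $c$-cycle agreement exists among $n=3$ processes in the NIIS model (the case $n>3$ then follows by a standard simulation/restriction argument, having the extra processes not participate). As noted in the excerpt, the protocol complex of any wait-free NIIS algorithm is obtained from the input complex by a finite sequence of non-uniform chromatic subdivisions, so for $n=3$ the protocol complex is a triangulated disk whose boundary is a triangulated triangle. The output function labels each vertex of this triangulated disk with a vertex of the cycle $C_c$, and the agreement constraint forces any two labels sharing an edge (hence any two labels in a common triangle) to be equal or adjacent in $C_c$; that is, each triangle maps to a clique of $C_c$, i.e.\ to a single vertex or a single edge.

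The core step is to pick three input configurations placed at the three corners of the input triangle so that validity pins down the behaviour on the boundary in a way that a polygonal Sperner lemma can detect. Concretely, I would choose three inputs $a,b,d$ on the cycle that are pairwise at distance roughly $c/3$ around $C_c$, so that the shortest path between any two of them is unique and uses a known arc. Then shortest-path validity constrains the labels along each edge of the boundary triangle to lie on the corresponding arc of the cycle; tracing around the whole boundary, the induced labelling ``winds once'' around $C_c$. The generalised Sperner Lemma for convex polygons (stated as the tool this section introduces) then guarantees that some triangle of the subdivision receives three labels that do not all lie on a common edge of $C_c$ — three pairwise-adjacent-or-equal vertices would have to be a clique, and $C_c$ for $c\ge 4$ has no triangles, so the only cliques are vertices and edges. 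This is the contradiction: that triangle corresponds to a reachable terminal configuration violating agreement.

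To make the winding argument precise I would set up the correspondence between $C_c$ and the boundary of a convex $c$-gon $P$, identify each cycle vertex with a corner of $P$, and observe that mapping the triangulated boundary triangle into $\partial P$ via the labelling is a degree-one (``properly oriented'') boundary map once the inputs are chosen as above; validity is exactly what certifies the boundary hypothesis of the polygonal Sperner Lemma. The interior conclusion of that lemma — existence of a ``full'' cell not contained in a single edge of $P$ — translates back to a monochromatic-free triangle, i.e.\ a clique of size $\ge 3$ in $C_c$, which is impossible.

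The main obstacle I anticipate is the bookkeeping that connects shortest-path validity to the boundary hypothesis of the polygonal Sperner Lemma: one must verify that along each edge of the input triangle the algorithm's outputs are forced to stay on the correct arc (not jump across the cycle), which uses that the three chosen inputs have unique shortest paths and that a process that sees only one input must output that input (solo executions). A secondary subtlety is handling the case $n>3$ cleanly and making sure the NIIS-specific structure (non-uniform chromatic subdivision, the fact that boundary faces are subdivided consistently) is used only where legitimate; but these are standard, and the real content is the reduction of agreement-violation to the polygonal Sperner obstruction.
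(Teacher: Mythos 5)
Your overall strategy (restrict to three processes, view the protocol complex as a triangulated disk, use validity to pin down the boundary labelling, and derive a contradiction from the fact that $C_c$ with $c\ge 4$ has no triangles) is the right one, but the specific construction has a genuine gap at $c=4$, which is the critical base case. You propose taking a \emph{single} input simplex with three inputs pairwise at distance roughly $c/3$ and with unique shortest paths between them. On $C_4$ every set of three distinct vertices contains an antipodal pair, whose shortest path is not unique; validity then allows the outputs on that boundary edge to wander over the whole cycle (either arc), so the boundary map is not forced to wind once around $C_c$. Concretely, with inputs $0,1,2$ on $C_4$ an algorithm could map the entire boundary (and disk) into the path $0\text{--}1\text{--}2$, which is contractible, and approximate agreement on a path \emph{is} wait-free solvable --- so no contradiction arises. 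For $c\ge 5$ your argument can be made to work (three inputs with pairwise distances all below $c/2$ exist, the three arcs cover the cycle once, and the degree-one boundary map cannot extend over the disk into the $1$-complex $C_c$), but as written the proof does not establish the theorem for $c=4$.

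The paper avoids this by not using a single input triangle: it builds a subcomplex $\mathbb H$ out of $c-2$ input simplexes whose union is a polygon with $c$ boundary vertices, one per input value, arranged so that \emph{every} boundary edge joins two configurations whose differing inputs are adjacent on the cycle. On such edges validity forces each output to be one of the two (adjacent) inputs, which is exactly the boundary hypothesis of Atanassov's polygonal Sperner lemma; that lemma then yields a triangle with three distinct labels, violating agreement. Note also that even where your construction works, what you actually verify on the boundary is a winding-number/degree condition, not the literal Sperner boundary condition (corner vertices with distinct labels, boundary vertices labelled by a flanking corner), so you would need either the homotopy-theoretic form of the obstruction or a degree-based restatement of the lemma rather than the version the paper states. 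To repair your proof, either adopt the paper's polygonal input subcomplex, or supply a separate argument for $c=4$ (e.g.\ a reduction of $4$-cycle agreement from $c'$-cycle agreement for some $c'\ge 5$, which is not immediate).
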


Our proof relies on a slight generalisation of Sperner's lemma to convex polygons, originally shown by Atanassov~\cite{atanassov1996sperner} and generalised to convex polytopes of any dimension by de Loera, Peterson, and Su~\cite{deloera2002polytopal}. However, for us, a special case in the two-dimensional setting suffices.

\begin{figure}[t]
  \centering
  \includegraphics[page=2,width=0.95\textwidth]{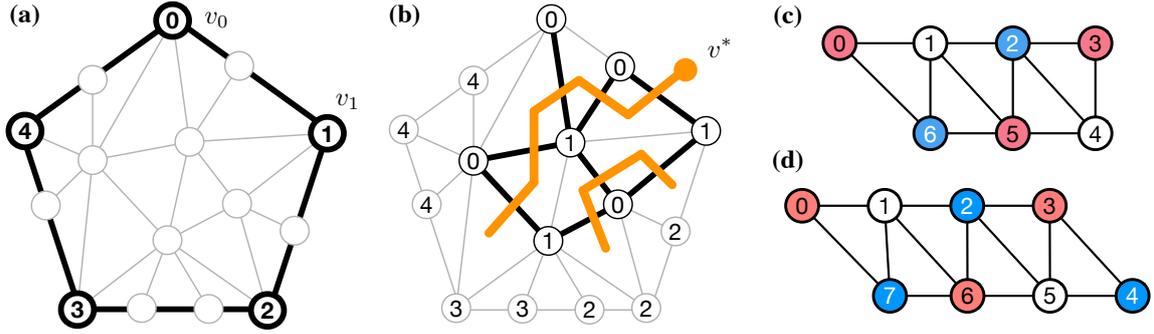}
\caption{(a) A triangulation $T$ of a pentagon. (b) A Sperner labelling of $T$ with the edges of the graph $G'$ superimposed in orange. (c) The subcomplex $\mathbb{H}$ for $c=7$. (d) The subcomplex $\mathbb{H}$~for~$c=8$. \label{fig:triangulations-and-complexes}}
\end{figure}

Let $H$ be a 
polygon with $c$ vertices and let $T$ be a triangulation of $H$.
A \emph{Sperner labelling} of $T$ is a function
from the vertices of $T$ to the set
$\{ 0, \ldots, {c-1} \}$ such that each vertex of $H$ gets a different label and each vertex on the boundary of $T$ between two vertices of $H$ gets the same label as one of those two vertices. 
Please see Figures~\ref{fig:triangulations-and-complexes}(a) and \ref{fig:triangulations-and-complexes}(b).

\begin{lemma}
Let $H$ be a convex polygon with $c$ vertices. Any Sperner labelling of a triangulation of $H$
has a triangle whose vertices have three different labels.
\label{lem:Sperner}
\end{lemma}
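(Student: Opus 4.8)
The plan is to reduce the statement to the classical Sperner Lemma for triangles by cutting the polygon into triangular pieces from a single new interior vertex. Concretely, pick a point $v^\ast$ in the interior of $H$ and add it to the triangulation $T$ (refining $T$ if necessary so that $v^\ast$ is a vertex), together with the $c$ segments from $v^\ast$ to the vertices $h_0,\dots,h_{c-1}$ of $H$ in cyclic order. This partitions $H$ into $c$ closed "slices" $H_0,\dots,H_{c-1}$, where $H_i$ is bounded by the segments $v^\ast h_i$, $v^\ast h_{i+1}$, and the boundary arc of $H$ from $h_i$ to $h_{i+1}$ (indices mod $c$). Each $H_i$ is a triangle with a possibly subdivided side along $\partial H$, and $T$ restricted to $H_i$ is a triangulation of it. I would assign $v^\ast$ some label, say $0$; then I need a parity/counting argument showing that for \emph{some} slice the induced labelling is a genuine Sperner labelling of a triangle, so the ordinary Sperner Lemma produces the desired rainbow triangle.

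The counting step is the heart of the argument, and I would run it via the standard "door/room" technique rather than ad hoc case analysis. Call an edge of $T$ a \emph{door} if its two endpoints carry the labels $\{0,1\}$. A triangle of $T$ is \emph{rainbow} if its vertices carry three distinct labels; otherwise a triangle has either zero or exactly two doors on its boundary. So, modulo $2$, the number of rainbow triangles equals the number of boundary doors of $H$ — i.e.\ doors lying on $\partial H$ — counted appropriately. The point of introducing $v^\ast$ with label $0$ is to control where the doors can sit: on each internal cut segment $v^\ast h_i$, the Sperner condition on the side means all labels along that segment (other than $h_i$ itself) lie in... hmm, actually the cut segments are \emph{not} sides of $H$, so this needs care. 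Let me instead keep the argument purely on $\partial H$: by the Sperner labelling condition, on the boundary arc from $h_i$ to $h_{i+1}$ the only place the label can switch between consecutive $T$-vertices is when the two endpoint-labels of that arc are involved, so doors on $\partial H$ can only occur on the arc between $h_{j}$ and $h_{j+1}$ where $\{\text{label}(h_j),\text{label}(h_{j+1})\}=\{0,1\}$. Since the $c$ vertices of $H$ get $c$ distinct labels from $\{0,\dots,c-1\}$, exactly one of them is labelled $0$ and exactly one is labelled $1$; if these two are adjacent on $H$ there is an odd number of boundary doors on that single arc (a walk from label $0$ to label $1$ crosses an odd number of times), hence an odd number of rainbow triangles, so at least one exists. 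If the $0$-vertex and the $1$-vertex are \emph{not} adjacent on $H$, there are no boundary doors at all and the parity argument as stated gives nothing — so this case is the main obstacle.

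To handle the non-adjacent case I would relabel: the existence of a rainbow triangle is a property of the partition of the vertex set into the fibers $L^{-1}(0),\dots,L^{-1}(c-1)$, and the Sperner hypothesis is symmetric under relabelling the \emph{boundary} polygon as long as the cyclic structure is respected, but we cannot freely permute labels because the boundary-arc condition is tied to which two labels bound each arc. The clean fix is to pick the two labels used in the door definition to be the labels of two \emph{cyclically adjacent} vertices of $H$ — say $h_0$ has label $a$ and $h_1$ has label $b$ — and define doors as $\{a,b\}$-edges. Then the boundary-door count is odd exactly as above (all $\{a,b\}$-transitions on $\partial H$ occur on the single arc $h_0 h_1$, by the Sperner side condition applied arc-by-arc), so modulo $2$ the number of triangles of $T$ whose label set \emph{contains} $\{a,b\}$ is odd; such a triangle is either rainbow or is a ``degenerate'' triangle whose third vertex is also labelled $a$ or $b$. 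Ruling out the degenerate possibility or absorbing it into the parity bookkeeping (degenerate triangles come in pairs, since they have two doors) finishes the proof and, pleasingly, makes the auxiliary vertex $v^\ast$ unnecessary — the induction/counting runs directly on $T$ and $H$. I would therefore present the proof as: (i) fix cyclically adjacent boundary vertices with labels $a,b$; (ii) define $\{a,b\}$-doors; (iii) show each non-rainbow triangle has $0$ or $2$ doors and each rainbow triangle has exactly $1$; (iv) show the number of boundary doors is odd using the side condition on each of the $c$ boundary arcs; (v) conclude by the handshake/parity lemma that the number of rainbow triangles is odd, hence positive.
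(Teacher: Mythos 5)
Your final argument (steps (i)--(v)) is correct and is essentially the paper's own proof: the paper's auxiliary vertex $v^\ast$ and graph $G'$ are just a packaging of your door-counting, with edges to $v^\ast$ playing the role of boundary doors and the handshaking lemma supplying the parity step, and the initial interior-vertex decomposition you rightly abandon never appears. The only nit is that in (iii) a rainbow triangle has exactly one door only when its label set contains both $a$ and $b$; but the implication you actually need (exactly one door $\Rightarrow$ rainbow) is the true one, so the conclusion stands.
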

\begin{proof}
Let $T$ be a triangulation of $H$ and 
consider
any Sperner labelling of $T$.
Without loss of generality, suppose there are two adjacent vertices $v_0$ and $v_1$ of $H$ labelled with 0 and 1, respectively.
Consider the graph $G'=(V'\cup \{v^*\},E')$, where $V'$ is the set of triangles of $T$.
There is an edge in $E'$ between  triangles $\tau$ and $\tau'$ if and only if they have exactly
two vertices in common, one of which is labelled 0 and the other of which is labelled 1. 
There is an edge in $E'$ between $v^*$ and triangle $\tau$ if and only if two of the vertices
of $\tau$ lie on the boundary of $T$ between 
$v_0$ and $v_1$
and they have different labels.
This is illustrated in \figureref{fig:triangulations-and-complexes}(b).

Each of the nodes of $T$ on the boundary between $v_0$ and $v_1$ is labelled by 0 or 1.
The labels of the nodes on this path change an odd number of times,
since $v_0$ and $v_1$ have different labels.
Thus, there are an odd number of edges on the boundary whose endpoints are labelled 0 and 1,
so $v^*$ has odd degree.
If a triangle has two nodes labelled 0 and one node labelled 1 or vice versa, it has degree 2 in $G'$.
If a triangle has one node labelled 0, one node labelled 1, and one node with some other label, it has degree 1 in $G'$. Otherwise, it has degree~0~in~$G'$.

The handshaking lemma~\cite{euler1741solutio} says that any finite graph contains an even number of nodes with odd degree. Since $v^*$ has odd degree, there exists a triangle $\tau \in V$ with odd degree.
The vertices of this triangle have three different labels.
\end{proof}

\begin{proof}[Proof of \theoremref{thm:no-wait-free}.]
  Let $\mathbb H$ denote the part of the input complex for $c$-cycle agreement among 3 processes
  $p_0$, $p_1$, and $p_2$,
  consisting of the simplexes corresponding to the following $c-2$ input configurations:
  \begin{itemize}
\item
$p_0$ has input $3a$, $p_1$ has input $3a+1$, and $p_2$ has input $c-3a-1$, for $0 \leq a \leq\lfloor (c-3)/6 \rfloor$.
\item
$p_1$ has input $3a+1$, $p_2$ has input $c-3a - 1$, and $p_0$ has input $c-3a-2$, for $0 \leq a \leq \lfloor (c-4)/6 \rfloor$.
\item
$p_1$ has input $3a+1$, $p_2$ has input $3a+2$, and $p_0$ has input $c-3a-2$, for $0 \leq a \leq \lfloor (c-5)/6 \rfloor$.
\item
$p_2$ has input $3a+2$, $p_0$ has input $c-3a-2$, and $p_1$ has input $c-3a-3$, for $0 \leq a \leq \lfloor (c-6)/6 \rfloor$.
\item
  $p_2$ has input $3a+2$, $p_0$ has input $3a+3$, and $p_1$ has input $c-3a-3$, for $0 \leq a \leq \lfloor (c-7)/6 \rfloor$.
\item
$p_0$ has input $3a+3$, $p_1$ has input $c-3a - 3$, and $p_2$ has input $c-3a-4$, for $0 \leq a \leq \lfloor (c-8)/6 \rfloor$.
\end{itemize}
Note that $\mathbb H$ contains $c$ vertices, one for each possible input value.
The cases $c = 7$ and $c=8$ are illustrated in Figures~\ref{fig:triangulations-and-complexes}(c) and \ref{fig:triangulations-and-complexes}(d).
The processes $p_0$, $p_1$, and $p_2$
are denoted by the colours red, white, and blue, respectively. 
The border of $\mathbb H$ is a polygon $H$ with $c$ vertices.

Consider any wait-free algorithm for 3 processes in the NIIS model.
Let $\mathbb S$ denote its protocol complex. It is finite, since the algorithm is wait-free.
Let $\mathbb T$ denote the subcomplex of $\mathbb S$ consisting of all terminal configurations reachable from configurations in $\mathbb H$, where each vertex is labelled with
the output value it contains.
The vertices and edges of $\mathbb T$ form a triangulation $T$ of $H$.
For each 
input value $x \in \{0, \ldots, c-1\}$,
there is a vertex $v_x$ on the boundary of $\mathbb T$ that corresponds to the solo execution by 
some
process $p_i$ with input $x$. If it is not labelled by the value $x$,
then the algorithm does not solve $c$-cycle agreement.
The edges on the border of $\mathbb T$ between $v_x$ and $v_{x'}$,
where $x' = (x +1)\bmod c$, correspond to executions by only two processes, one with input $x$ and the other with input $x'$. If the endpoints of all such edges are not labelled by $x$ or $x'$, the algorithm does not solve $c$-cycle agreement.
Label each vertex of $T$ with the label of the corresponding vertex in $\mathbb T$.
If the algorithm is correct, then this is a Sperner labelling.
By Lemma \ref{lem:Sperner},
the triangulation 
$T$ contains a triangle whose vertices have three different labels.
The corresponding configuration is the result of an execution in which the three processes output different values,
so the algorithm cannot be solving $c$-cycle agreement among three processes.

Since all but three processes can crash before taking any steps, any algorithm that solves
$c$-cycle agreement among $n \geq 3$ processes is also an algorithm that solves $c$-cycle agreement among 3 processes. Therefore no such algorithm exists.
\end{proof}

\section{Impossibility results via reductions}
\label{sec:reductions}

In this section, we show that the impossibility of wait-free cycle agreement implies the impossibility of 2-resilient cycle agreement in the asynchronous shared memory model
(where processes communicate by reading from and writing to registers) and a lower bound on the round complexity of cycle agreement in the synchronous message model.
Finally, we show that approximate agreement is impossible on graphs that admit a certain 
labelling
of the vertices.

\subsection{There exists no 2-resilient asynchronous algorithm}

A task is {\em colourless} if the input of any process may be the input of any other process, the output of any process may be the output of any other process, and the specifications of valid outputs only depend on the set of inputs of the processes. Cycle agreement is an example of a colourless task.
The BG simulation technique~\cite{borowsky2001bg} shows that the impossibility of wait-free algorithms
for a colourless task
for $n \ge 3$ processes implies the impossibility of 2-resilient algorithms
for that task.

\begin{theorem}{\rm \cite{borowsky2001bg}}
  If there exists a $k$-resilient asynchronous algorithm for $n > k$ processes that solves a colourless task,
then there is a wait-free asynchronous algorithm for $(k+1)$ processes that solves the task.
\end{theorem}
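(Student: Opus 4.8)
The plan is to prove this via the classical BG simulation. Given a $k$-resilient algorithm $A$ for $n$ processes $p_0,\dots,p_{n-1}$, I would build a wait-free algorithm $B$ for $k+1$ \emph{simulators} $q_0,\dots,q_k$ in which the simulators cooperatively run one execution of $A$ on all $n$ simulated processes. Because the task is colourless, the inputs fed to the simulated processes may be chosen from the simulators' inputs: $q_j$ supplies its own input to $p_j$ for $j\le k$, and the remaining simulated processes $p_{k+1},\dots,p_{n-1}$, if any, reuse one of those values (agreed upon by the mechanism below). The set of simulated inputs is then a subset of the set of simulators' inputs, so by colourlessness any output valid for the simulated execution is valid for $B$; agreement of the outputs is immediate since all simulators return values produced by the \emph{same} simulated execution of $A$.

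The engine is the \emph{safe agreement} object, implementable wait-free from registers, with three properties: validity (a returned value was proposed by some simulator), agreement (all simulators that obtain a value obtain the same one), and progress (each access has a short \emph{unsafe} interval such that any simulator is inside the unsafe interval of at most one safe-agreement object at a time, and if no simulator crashes inside the unsafe interval of a given object then every simulator that proposes to it eventually obtains its value). There is one object $SA_{i,\ell}$ for the $\ell$-th step of simulated process $p_i$. To advance $p_i$ by one step, a simulator computes the value $p_i$ would \texttt{update} next (a deterministic function of $p_i$'s simulated state), writes it into the simulated copy of $p_i$'s memory, performs its own \texttt{scan} of the simulated snapshot object, proposes the scan result to $SA_{i,\ell}$, and checks whether that object has returned; if so, it records the returned scan as the outcome of $p_i$'s $\ell$-th step and updates $p_i$'s simulated state. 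Whenever a simulator sees that some simulated process has produced an output, it outputs that value and halts.

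For liveness, each simulator cycles round-robin through $p_0,\dots,p_{n-1}$, attempting one step per process and skipping any $p_i$ whose pending step has not yet returned a value. Suppose for contradiction that $B$ has an infinite execution in which some non-crashed $q_j$ never halts. At most $k$ of the $k+1$ simulators crash. A simulated process stays blocked forever only if some simulator crashed inside the unsafe interval of that process's pending safe-agreement object; since each simulator occupies at most one unsafe interval at a time, the at most $k$ crashed simulators block at most $k$ simulated processes, and every safe-agreement object with no crash in its unsafe interval eventually returns (in particular those $q_j$ itself keeps retrying). Hence $q_j$ makes progress on the $\ge n-k$ unblocked simulated processes infinitely often, producing an infinite fair execution of $A$ in which at most $k$ processes take only finitely many steps. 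This simulated execution is a legal linearizable execution of $A$ — all simulators agree on each simulated scan's outcome via the safe-agreement objects — and $A$ is $k$-resilient, so every unblocked simulated process decides after finitely many steps; then $q_j$ sees a decision and halts, a contradiction. Therefore $B$ is wait-free, and by the first paragraph it solves the task.

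The main obstacle is the design and analysis of safe agreement and of the simulation of atomic snapshots: one must verify (i) that the simulated execution is genuinely a linearizable execution of $A$, so that $A$'s correctness applies; (ii) the invariant that each simulator occupies the unsafe interval of at most one safe-agreement object, which is precisely what converts ``$k$ crashed simulators'' into ``at most $k$ blocked simulated processes''; and (iii) that safe agreement is implementable wait-free from registers, so that $B$ uses only registers. The colourless hypothesis is used exactly to license feeding the simulators' inputs to the simulated processes and to transfer validity back to $B$.
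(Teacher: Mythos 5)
The paper does not prove this statement; it is quoted verbatim from Borowsky, Gafni, Lo, and Zaharoglou and used as a black box, so there is no in-paper proof to compare against. Your proposal is the canonical BG simulation argument and is correct in outline: safe agreement per simulated step, the invariant that each simulator occupies at most one unsafe interval at a time, hence at most $k$ blocked simulated processes, hence a simulated $k$-resilient execution in which the $\ge n-k$ unblocked processes must decide, with colourlessness licensing both the input reassignment and the adoption of any simulated output. One refinement: rather than having $q_j$ ``supply its own input to $p_j$'' directly, the standard construction determines the input of \emph{every} simulated process $p_i$ (including $i\le k$) by a safe-agreement object to which each simulator proposes its own input. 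This matters because a simulator that crashes before announcing its input would otherwise block $p_j$ through a mechanism outside your unsafe-interval accounting; folding input selection into the same safe-agreement machinery keeps the ``one blocked simulated process per crashed simulator'' budget intact. The remaining obstacles you name --- the register implementation of safe agreement and the linearizability of the simulated snapshot execution --- are exactly the technical content of the cited proof, and your sketch identifies them correctly.
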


Together with \theoremref{thm:no-wait-free}, the BG simulation immediately implies that there is no 2-resilient asynchronous algorithm for the cycle agreement problem.

\begin{corollary}\label{corollary:2-resilient}
  For any $n \ge 3$ and $c \ge 4$, there is no 2-resilient asynchronous algorithm that solves $c$-cycle agreement.
\end{corollary}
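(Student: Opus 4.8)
The plan is to combine \theoremref{thm:no-wait-free} with the BG simulation theorem stated just above, since cycle agreement has already been observed to be a colourless task. First I would verify the colourlessness claim carefully enough to invoke the theorem: the inputs and outputs are vertices of the cycle $C_c$, any vertex can be assigned to any process, and the validity predicate --- outputs pairwise adjacent and each output on a shortest path between two inputs --- depends only on the \emph{set} of input vertices, not on which process holds which. Hence cycle agreement is colourless.

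Next, I would argue by contradiction. Suppose that for some $n \ge 3$ and $c \ge 4$ there were a $2$-resilient asynchronous algorithm (using registers) for $c$-cycle agreement among $n > 2$ processes. Applying the BG simulation theorem with $k = 2$ yields a wait-free asynchronous algorithm among $k+1 = 3$ processes that solves $c$-cycle agreement. Since the register model and the NIIS model are equivalent for deterministic wait-free computation (as recalled in the model section), this gives a wait-free algorithm for $c$-cycle agreement among $3$ processes in the NIIS model, contradicting \theoremref{thm:no-wait-free}. Therefore no such $2$-resilient algorithm exists.

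There is essentially no hard step here: the corollary is a direct instantiation of an already-cited black box. The only point requiring a line of care is the transfer between the register model and the NIIS model, so that \theoremref{thm:no-wait-free} (stated for NIIS) actually rules out the register-model algorithm produced by BG simulation --- but this equivalence is exactly what was stated in the models section, so it can simply be cited. I would keep the proof to three or four sentences.

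\begin{proof}
Cycle agreement is a colourless task, since its inputs and outputs are vertices of $C_c$, any vertex may serve as the input or output of any process, and the validity condition depends only on the set of input values. Suppose, for contradiction, that there is a $2$-resilient asynchronous algorithm using registers that solves $c$-cycle agreement among $n > 2$ processes. By the BG simulation theorem with $k = 2$, there is a wait-free asynchronous algorithm using registers that solves $c$-cycle agreement among $3$ processes. Since, for deterministic wait-free computation, the register model is equivalent to the NIIS model, this yields a wait-free algorithm for $c$-cycle agreement among $3$ processes in the NIIS model, contradicting \theoremref{thm:no-wait-free}.
\end{proof}
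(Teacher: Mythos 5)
Your proof is correct and follows exactly the route the paper takes: noting that cycle agreement is colourless, applying the BG simulation theorem with $k=2$ to obtain a wait-free $3$-process algorithm, and contradicting \theoremref{thm:no-wait-free} via the equivalence of the register and NIIS models. The paper compresses this into a single sentence; your version just spells out the same steps.
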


\subsection{Time lower bounds for synchronous algorithms}

We can now lift the impossibility results 
to time lower bounds for the synchronous model using the round-by-round simulation by Gafni~\cite{gafni1998round}, who showed the following. 

\begin{theorem}{\rm \cite{gafni1998round}}  Let $0 < k < f < n$ such that $n-k-f>0$. Fix $T \le f / k $. Suppose there exists a synchronous $f$-resilient algorithm for $n$ nodes that solves a colourless task 
in $T$ rounds. Then there exists a $k$-resilient asynchronous algorithm that solves the task.
\label{RoundByRound}
\end{theorem}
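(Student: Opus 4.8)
The plan is to give a \emph{round-by-round simulation} that converts the hypothesised synchronous $f$-resilient algorithm $A$ (running in $T \le f/k$ rounds on $n$ nodes) into a $k$-resilient asynchronous algorithm in the snapshot model, and hence, by~\cite{BG97}, one using only registers. Reserve single-writer atomic snapshot objects $M_1,\dots,M_T,M_{T+1}$, one per round to be simulated plus one for the decisions. Asynchronous process $p_i$ simulates synchronous process $i$, one round at a time. With the convention that the round-$0$ state is the input, to carry out round $r$ the process $p_i$ writes the round-$(r-1)$ state of synchronous process $i$ into component $i$ of $M_r$, then repeatedly performs $\mathsf{scan}$s of $M_r$ until at least $n-k$ components differ from $-$, and it imports the round-$(r-1)$ states it finds there as the round-$(r-1)$ messages received by synchronous process $i$. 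The at most $k$ synchronous processes whose round-$(r-1)$ state $p_i$ fails to obtain are treated as having crashed no later than round $r-1$, and $p_i$ never reads from them again. After simulating round $T$, each $p_i$ writes the round-$T$ state of synchronous process $i$ (which contains a decision value) to $M_{T+1}$, scans until it sees $n-k$ entries, and outputs the decision of some synchronous process that $p_i$ never declared crashed.

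\noindent\textbf{Why it works.}
Two counting facts carry the argument. First, $p_i$ drops at most $k$ synchronous processes in each of the $T$ simulated rounds, so in the run of $A$ that is jointly simulated at most $Tk \le f$ synchronous processes ever fail; since $A$ is $f$-resilient, every never-dropped synchronous process halts by the end of round $T$ with a valid decision value. Second, in its final scan $p_i$ sees at least $n-k$ of the round-$T$ states, while at least $n-f$ synchronous processes are never dropped, so since $n-k-f>0$ the two sets intersect and at least one entry that $p_i$ sees carries a valid decision; hence the output step is well defined. Validity of the adopted value for the \emph{actual} input configuration uses colourlessness: the set of inputs of the simulated synchronous processes equals the set of inputs of the real processes, and validity of a set of outputs for a colourless task depends only on the set of inputs. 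Finally, $k$-resilience is immediate: $p_i$ only ever blocks waiting for $n-k$ components of some $M_r$ to be written, which eventually happens because at most $k$ processes crash, and $p_i$ performs only $T$ rounds before halting.

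\noindent\textbf{Main obstacle.}
The delicate step is certifying that the states written by the various $p_i$ really do constitute a single legal $f$-resilient synchronous run of $A$ (and that the value $p_i$ adopts belongs to a process that is non-faulty in that run, not merely one $p_i$ itself never dropped). The transition structure is essentially automatic, since $p_i$ computes synchronous process $i$'s round-$r$ state from exactly the round-$(r-1)$ states it read, so the whole collection of states is closed under $A$'s transition function; the work is in showing that the induced message-delivery pattern is consistent with the crash-fault model, i.e.\ that the set of processes a given process still hears from is non-increasing in the round number (the ``drop forever'' rule), and that a process which some simulator stops hearing from in round $r$ is heard from by \emph{no} simulator in later rounds. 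This is where linearisability of atomic snapshots enters: all $\mathsf{scan}$s of a fixed $M_r$ are totally ordered by containment, so the partial views obtained from it are nested, and by additionally recording in each message the set of process ids dropped so far and propagating these sets, one forces the local views into a globally coherent crash pattern with at most $Tk\le f$ failures. Making this gluing precise is the crux; the conditions $T \le f/k$ and $n-k-f>0$ are exactly what the two counting steps above require.
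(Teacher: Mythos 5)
First, note that the paper does not prove this statement at all: it is imported verbatim from Gafni's round-by-round fault detector paper \cite{gafni1998round} and only applied. So your attempt can only be judged on its own terms. Your architecture is the right one (one snapshot object per simulated round, wait for $n-k$ entries, hence at most $k$ processes newly missed per round and at most $Tk\le f$ in total; the final adoption step correctly uses $n-k-f>0$ to guarantee the intersection of ``seen in the last scan'' with ``never dropped'' is non-empty; colourlessness is invoked exactly where it is needed). However, the step you yourself flag as the crux is a genuine gap, and the patch you propose does not close it.

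Concretely: suppose simulator $p_m$ fails to see $j$'s round-$r$ value in $M_r$. In any legal synchronous crash execution this forces $j$ to have crashed during round $r$, so no process may receive a message from $j$ in any round $s>r$. Your fix is to have $p_m$ attach its dropped set to its round-$(r{+}1)$ write so that others discard $j$ thereafter. But a simulator $p_{i'}$ scanning $M_{r+1}$ may stop after seeing $n-k$ entries that do not include $p_m$'s write; it then never learns that $j$ was dropped and happily uses $j$'s round-$(r{+}1)$ value. You cannot repair this by declaring $p_m$ crashed instead: $p_m$ was ``alive'' throughout round $r$ (it computed a round-$r$ state that other simulators may already have consumed), and in the crash model a process that is alive in round $r$ receives messages from every process that has not crashed, so $p_m$'s missing $j$ still forces $j$'s crash. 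Nested snapshot views give you a common core \emph{within} each round, but they do not by themselves give the cross-round monotonicity (``once missed by anyone, heard by no one later'') that the synchronous semantics demands; this is precisely the part of Gafni's construction that requires additional machinery (his fault-detector formalisation of the two models, or a BG-style reduction to $k+1$ wait-free simulators with per-round safe agreement), and it is absent from your argument. Until that gluing is done, you have not exhibited a single legal $f$-resilient synchronous execution of $A$ whose outputs the asynchronous processes adopt.
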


\noindent 
Applying \corollaryref{corollary:2-resilient} and
\theoremref{RoundByRound},
we obtain a time lower bound for synchronous algorithms.

\begin{corollary}\label{corollary:synchronous-lb}
 For any $n > f \ge 0$, any $f$-resilient synchronous message-passing algorithm for $c$-cycle agreement requires at least $\lfloor f/2 \rfloor + 1$
 rounds.
\end{corollary}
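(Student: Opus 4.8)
The plan is to argue by contradiction, reducing to \corollaryref{corollary:2-resilient} through Gafni's round-by-round simulation, \theoremref{RoundByRound}. Suppose that for some $n > f$ there were an $f$-resilient synchronous message-passing algorithm $\mathcal A$ solving $c$-cycle agreement among $n$ processes in $T$ rounds with $T \le \lfloor f/2\rfloor$. The goal is to turn $\mathcal A$ into a $2$-resilient \emph{asynchronous} algorithm for $c$-cycle agreement, which \corollaryref{corollary:2-resilient} forbids for every number of processes at least $3$.

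The core step is to invoke \theoremref{RoundByRound} with $k = 2$: its hypotheses are $0 < 2 < f < n$, $n - 2 - f > 0$, and $T \le f/2$, and the last is immediate since $T \le \lfloor f/2\rfloor \le f/2$. When the other two inequalities hold, \theoremref{RoundByRound} produces a $2$-resilient asynchronous algorithm for $c$-cycle agreement, contradicting \corollaryref{corollary:2-resilient}; hence $\mathcal A$ cannot exist and $T \ge \lfloor f/2\rfloor + 1$. So the real work is to arrange the conditions $2 < f$ and $n > f + 2$, and to dispatch the cases they exclude.

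First I would handle the process-count condition. Since $c$-cycle agreement is colourless, we may assume $n \ge f + 3$ by padding $\mathcal A$ with additional dummy participants, a standard reduction for colourless tasks that preserves both the round count and the resilience parameter; this makes $n - 2 - f > 0$. The condition $2 < f$ fails only for $f \in \{0,1,2\}$, where the claimed bound $\lfloor f/2\rfloor + 1$ equals $1$ or $2$. The bound $T \ge 1$ is elementary: in a $0$-round algorithm each process decides from its own input alone; the instances in which all inputs coincide force that decision map to be the identity (validity leaves no other option), and then an instance with two inputs at cycle-distance $2$, say $0$ and $2$, which are non-adjacent for $c \ge 4$, makes two processes (with $n \ge 2$) output non-adjacent values, breaking agreement. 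This settles $f \in \{0,1\}$; for $f = 2$ one must additionally exclude $T = 1$, which can be argued directly, since a one-round $2$-resilient synchronous algorithm would have to tolerate exactly the incomparable view patterns underlying the wait-free impossibility and would thus yield a wait-free NIIS algorithm contradicting \theoremref{thm:no-wait-free}.

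I expect the main obstacle to be this boundary bookkeeping rather than the reduction itself, which is a single application of \theoremref{RoundByRound} with $k = 2$ against \corollaryref{corollary:2-resilient}. What needs care is, first, the colourless-padding claim, which should be stated precisely enough that it genuinely preserves the round count and the resilience level, and second, the small-resilience regime $f \le 2$, where \theoremref{RoundByRound} does not apply and the bounds $1$ and $2$ have to be established by hand.
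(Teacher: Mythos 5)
Your proposal is correct and takes essentially the same route as the paper, which proves this corollary in a single line by applying \theoremref{RoundByRound} with $k=2$ against \corollaryref{corollary:2-resilient}. The boundary bookkeeping you identify (the hypotheses $f>2$ and $n-2-f>0$ of \theoremref{RoundByRound}, and the small cases $f\le 2$) is not addressed in the paper at all, so your treatment is, if anything, more careful than the published argument.
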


\subsection{Graphs on which approximate agreement is impossible}

We now show that approximate agreement is hard on graphs that admit a certain 
labelling of its vertices.
We do so by a reduction from 2-set agreement among $n \geq 3$ processes.
In this problem, each process has an input value in $\{0,1,2\}$
and, if it does not crash, it must output one of the inputs such that at most two different values are output.

A labelling $\ell \colon V \to \{0,1,2\}$ of the vertices of a graph $G = (V,E)$
is a \emph{lower bound labelling} if the following conditions hold:
\begin{enumerate}
    \item 
    $G$ contains no triangle with three different labels and
    \item 
    $G$ contains a cycle $C$ in which exactly one vertex has label 1 and its two neighbours in $C$ have 
    labels 0 and 2.
\end{enumerate}
It is easy to check that any cycle graph of length $c \ge 4$ admits
a lower bound labelling:
pick three consecutive vertices, label them with 0,1,2, and label all other vertices with 2. 
A wheel graph, which consists of a cycle and one central vertex that is a neighbour of all vertices in the cycle,
does not admit a lower bound labelling. 
On the other hand, if one edge adjacent to the central vertex is removed, the resulting graph does admit
a lower bound labelling: label the other endpoint of the removed edge with 1, label one of its neighbours with 0, and 
label all other vertices with 2.

\begin{theorem}
Suppose $G$ is a graph that admits a lower bound labelling. Then there is no wait-free algorithm among $n \ge 3$ processes that solves graphical approximate agreement on $G$.
\end{theorem}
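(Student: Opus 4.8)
The plan is to reduce $2$-set agreement among $n \ge 3$ processes to graphical approximate agreement on $G$, using the lower bound labelling $\ell$ to translate between the two problems. Suppose, for contradiction, that there is a wait-free algorithm $\mathcal{A}$ solving graphical approximate agreement on $G$. We use $\mathcal{A}$ to build a wait-free algorithm for $2$-set agreement, contradicting the impossibility of wait-free $2$-set agreement for $n \ge 3$ (which follows from the cited results, and is a special case consistent with \theoremref{thm:no-wait-free} via the known reductions).

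First I would set up the reduction. Let $C = u_0, u_1, \dots, u_{k-1}, u_0$ be the cycle guaranteed by condition~(2), where, say, $u_0$ has label $1$ and its neighbours $u_1$ and $u_{k-1}$ have labels $0$ and $2$ respectively, and every other vertex of $C$ has a label in $\{0,2\}$ (after possibly re-examining which vertices of $C$ carry which labels; the key point is that $u_0$ is the unique label-$1$ vertex on $C$). Given a $2$-set agreement instance where process $p_i$ has input $s_i \in \{0,1,2\}$, each process maps its input to a vertex of $C$: input $1 \mapsto u_0$, input $0 \mapsto u_1$, and input $2 \mapsto u_{k-1}$. Processes then run $\mathcal{A}$ on $G$ with these vertices as inputs, obtaining output vertices that form a clique in $G$ and lie on shortest paths between the mapped inputs. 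Finally each process takes its output vertex $w$ and returns $\ell(w)$ as its $2$-set agreement output. The bulk of the argument is checking that this is a correct wait-free $2$-set agreement algorithm.

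The key steps are: (a) \emph{Agreement:} the output vertices of $\mathcal{A}$ form a clique, hence a triangle (or edge or single vertex) in $G$; by condition~(1), $G$ has no triangle with three distinct labels, so the $\ell$-values of the output vertices take at most two distinct values in $\{0,1,2\}$ — this gives the $2$-set agreement bound. (b) \emph{Validity:} I must show each returned value $\ell(w)$ is one of the inputs $s_i$ actually present. Here I would argue that since only vertices among $\{u_0, u_1, u_{k-1}\}$ are used as inputs to $\mathcal{A}$, every output vertex $w$ lies on a shortest path between two of these, and such a shortest path can be taken inside $C$ near $u_0$ — specifically, if inputs $0$ and $2$ are present (mapped to $u_1, u_{k-1}$), a shortest path between them is the two-edge path $u_1, u_0, u_{k-1}$ (using that $u_0$ is adjacent to both), so any output lies in $\{u_0, u_1, u_{k-1}\}$ and its label is in $\{1,0,2\}$, each of which is a present input; if only a subset of inputs appears, the shortest paths degenerate further and the argument is easier. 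I would handle the sub-cases by which of $\{0,1,2\}$ appear among the $s_i$. (c) \emph{Wait-freedom:} immediate, since $\mathcal{A}$ is wait-free and the pre- and post-processing are local.

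The main obstacle I anticipate is the validity step (b): condition~(2) only tells us $u_0$ is adjacent to $u_1$ and $u_{k-1}$ along $C$, and I need to be sure that a \emph{shortest} path in $G$ between the two input vertices actually passes through vertices whose labels are among the inputs — a priori a shortest path between $u_1$ and $u_{k-1}$ in $G$ might be a single edge $\{u_1, u_{k-1}\}$ (if that edge exists) or route through vertices off $C$ with arbitrary labels. I would resolve this by observing that $\dist_G(u_1, u_{k-1}) \le 2$ via $u_0$, so any shortest path has length $1$ or $2$; if length $2$, its middle vertex $m$ is a common neighbour of $u_1$ and $u_{k-1}$, and then $\{u_1, m, u_{k-1}\}$ together with the fact that $u_1 u_0 u_{k-1}$ is also a path means we need $\ell(m) \in \{0,2\}$ — which follows because $\{u_1, u_{k-1}, m\}$ need not be a triangle, so instead I would note that the output vertices of $\mathcal{A}$ themselves form a clique with the used inputs' shortest-path constraint, and carefully pick the instance/labels of $C$ so that condition~(1) forces $\ell(m) \ne 1$ whenever $m$ is adjacent to a label-$0$ and a label-$2$ vertex. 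If that is not automatic, the cleaner fix is to strengthen the reduction by having processes with input $0$ or $2$ ignore shortest paths through unexpected vertices, or to appeal directly to the $c$-cycle agreement impossibility by restricting $\mathcal{A}$'s behaviour to the induced cycle $C$; I expect one of these to go through cleanly, and I would present whichever keeps the case analysis shortest.
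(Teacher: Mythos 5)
Your overall plan --- reduce from $2$-set agreement, use condition~(1) of the lower bound labelling for the agreement bound, and use condition~(2) to control validity --- is the same as the paper's, and your agreement step (a) is correct. But there is a genuine gap in the validity step, concentrated exactly where you suspect, and it is worse than you describe. Consider the instance of $2$-set agreement in which every process has input $0$ or $2$ and \emph{no} process has input $1$. In your direct reduction the inputs to $\mathcal{A}$ are $u_1$ (label $0$) and $u_{k-1}$ (label $2$), and the two-edge path $u_1, u_0, u_{k-1}$ is itself a shortest path between them (or at least $u_0$ is a legitimate midpoint of one when $d(u_1,u_{k-1})=2$). So $\mathcal{A}$ may validly output $u_0$, whose label is $1$ --- a value that is \emph{not} among the present inputs, violating $2$-set agreement validity. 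Your claim that the labels $\{1,0,2\}$ are ``each a present input'' only holds when input $1$ actually occurs; the sub-case with inputs $\{0,2\}$ does not ``degenerate'' into an easier one, it is the hard case, and neither of your suggested fixes works: $\mathcal{A}$ is a black box solving approximate agreement on all of $G$, so you cannot ``restrict its behaviour to $C$'' or make it ``ignore'' shortest paths through $u_0$ or through off-cycle vertices of arbitrary label.

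The missing idea is a \emph{two-stage} reduction. Before invoking the hypothetical algorithm $B$ on $G$, the processes with input $0$ or $2$ first run a known wait-free approximate agreement algorithm $A$ on the \emph{path} $C \setminus \{v_1\}$ (solvable by the classical path algorithms), with inputs $v_0$ and $v_2$ respectively. This collapses their values to two \emph{adjacent} vertices $u,v$ on that path, all of whose vertices have labels in $\{0,2\}$ since $v_1$ is the unique label-$1$ vertex of $C$. These adjacent vertices are then used as inputs to $B$; shortest-path validity of $B$ forces every output into $\{u,v\}$, hence every returned label into $\{0,2\}$. Processes with input $1$ skip stage one and feed $v_1$ directly to $B$; the mixed cases $\{0,1\}$ and $\{1,2\}$ then work because $A$ run with a single input value returns that value, making the inputs to $B$ an adjacent pair $\{v_0,v_1\}$ or $\{v_1,v_2\}$. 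Without this pre-processing step your reduction does not establish validity, so the proof as proposed does not go through.
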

\begin{proof}
Consider a lower bound labelling $\ell$ of $G$. Let $C$ be a cycle in $G$ that contains exactly one vertex, $v_1$, with label 1, a neighbour $v_0$ of $v_1$ with label 0, and
a neighbour $v_2$ of $v_1$ with label 2.
Let $A$ be a wait-free approximate agreement algorithm on the path $C\setminus\{ v_1 \}$.

To obtain a contradiction, suppose there is a wait-free algorithm $B$
for graphical approximate agreement on $G$. The following wait-free algorithm solves 2-set agreement:
\begin{itemize}
\item Processes with input value $x \in \{0,2\}$ run the approximate agreement algorithm $A$ 
on the path $C \setminus \{ v_1 \}$ 
using $v_x$ as input. The vertex each of these processes outputs in $A$ is used as its input for algorithm $B$.
\item Processes with input value 1 use $v_1$ as their input for algorithm $B$.
\item Each process $p_i$ outputs the label $\ell(y_i)$ of the vertex $y_i$ it outputs in $B$.
\end{itemize}
By the agreement property of graphical approximate agreement,
the values output in $B$ lie on a clique.
The first property of a lower bound labelling implies that
the nodes in this clique have at most two distinct labels. Thus, at most two different values are output by the processes.

If there are three distinct input values, then validity is immediately satisfied. If all input values are the same,
then all output values are this input value, since this is true for algorithms $A$ and $B$.
It remains to
consider instances of set agreement with exactly two input values. First, suppose the inputs for 
set agreement are in $\{0,1\}$. 
All processes with input 0 output $v_0$ in algorithm $A$,
since $v_0$ is the only value input to $A$.
Thus, 
each process uses either $v_0$ or $v_1$ as its input to
algorithm $B$. As $v_0$ and $v_1$ are adjacent in $G$,
each process outputs one of these two values in $B$,
by validity of graphical approximate agreement.
Hence, 
each process outputs a value 
in $\{\ell(v_0), \ell(v_1)\} = \{0,1\}$ for
set agreement, satisfying validity.
The case $\{1,2\}$ is symmetric.

Now, suppose that the inputs for set agreement are in $\{0,2\}$. Then each process uses either $v_0$ or $v_1$
as its input to algorithm $A$. Their outputs in $A$
and, hence their inputs to algorithm $B$,
all lie on some edge $\{u,v\}$ on the path $C \setminus \{ v_1 \}$. By validity of graphical approximate agreement, 
each process outputs either $u$ or $v$ in $B$.
From the second property of a lower bound labelling, all values in $C \setminus \{v_1\}$ are labelled with either 0 or 2. 
Thus, each process outputs 0 or 2 for set agreement,
satisfying validity.
\end{proof}

\section{Impossibility of extension-based proofs}\label{sec:extension}

 Extension-based proofs were introduced by Alistarh, Aspnes, Ellen, Gelashvili and Zhu~\cite{alistarh2019extension} to model inductive impossibility arguments, such as the valency-based impossibility of consensus in asynchronous message-passing systems by Fisher, Lynch and Paterson~\cite{fischer85impossibility}. These are in contrast to the combinatorial arguments used to show the impossibility of set agreement~\cite{borowsky1993generalized,herlihy1999topological,saks2000wait}. 
It is known that extension-based proofs cannot be used to prove the impossibility of $(n-1)$-set agreement among $n > 2$ processes in the NIIS model~\cite{alistarh2019extension} or in the $(n-1)$-NIS model~\cite{AAEGZ20}.

We show that extension-based proofs cannot be used to prove Theorem~\ref{thm:no-wait-free} in the $(n-1)$-NIS model. 
This is the first application of the extension-based proof framework to a task other than set agreement. 
We emphasise that this result does not follow directly via reduction from the result for $k$-set agreement.
The main source of novelty in our argument is in carefully extending their adversarial protocol to the $c$-cycle agreement task. 
Specifically, our main result is the following. 

\begin{restatable}{theorem}{extension}
\label{thm:extension}
There is no extension-based proof of the impossibility of a wait-free algorithm solving 4-cycle agreement for $n \geq 3$ processes in the $(n-1)$-NIS model.
\end{restatable}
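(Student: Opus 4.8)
The plan is to adapt the adversarial strategy of \cite{AAEGZ20}, which shows that no extension-based proof establishes the impossibility of $(n-1)$-set agreement in the $(n-1)$-NIS model. Recall that an extension-based proof is a game: a prover tries to derive a contradiction from the assumption that a correct wait-free algorithm exists, while an adversary incrementally reveals a full-information protocol graph together with the output values of terminated processes. Over a sequence of phases the prover builds chains of configurations inside the currently revealed part of the protocol graph and, at each step, either asks the adversary to refine that part by subdividing $n$-vertex cliques (which for $n=3$ are triangles) or asks it to commit to --- or to rule out --- an output value for some process in some configuration; the adversary's answers must be consistent with a single deterministic full-information algorithm, and no revealed terminal configuration may violate the agreement or validity requirements of $4$-cycle agreement. \theoremref{thm:extension} follows once we exhibit an adversary strategy that can answer forever. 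We describe the strategy for three processes; its extension to $n \ge 3$ processes follows as in \cite{AAEGZ20}. Note that this does \emph{not} follow from the corresponding statement for $2$-set agreement via the reduction of \sectionref{sec:reductions}: that reduction does not turn an extension-based proof for $4$-cycle agreement into one for $2$-set agreement.

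First I would isolate the forbidden configurations. A terminal configuration of $3$-process $4$-cycle agreement is illegal exactly when its output values do not form a clique of $C_4$ --- equivalently, when they include one of the two antipodal pairs $\{0,2\}$, $\{1,3\}$, which are the only non-edges of $C_4$ --- or when some output does not lie on a shortest $C_4$-path between two of the inputs. Over most of the input complex these constraints leave the adversary no real choice and pose no danger; the crux is the part $\mathbb H$ of the input complex used in the proof of \theoremref{thm:no-wait-free} with $c=4$, namely the $4$-gon whose boundary vertices carry the inputs $0,1,2,3$ in cyclic order. There the relevant part of the protocol graph is a triangulation of the $4$-gon, validity forces the boundary labels into a Sperner pattern (each corner keeps its own label and each side uses only the two labels of its endpoints), and \lemmaref{lem:Sperner} shows that a \emph{complete} triangulation of the $4$-gon must then contain a triangle with three distinct labels, i.e.\ an illegal configuration. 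The content of the theorem is that at every finite stage the prover has revealed only a finite triangulation, which never tiles the whole $4$-gon, so there is always enough slack to keep the revealed part legal.

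Concretely, I would have the adversary maintain a ``nice'' strategy in the sense of \cite{alistarh2019extension,AAEGZ20}: a finite partial triangulation of the $4$-gon (together with the essentially forced data over the rest of the input complex) carrying a partial labelling by $\{0,1,2,3\}$ that satisfies (i) validity at every labelled vertex; (ii) the labels on every fully labelled triangle form an edge or a single vertex of $C_4$; and (iii) a reachability/connectivity condition guaranteeing that the still-unlabelled region remains flexible enough to absorb any future demand. The argument then reduces to three response lemmas: the adversary can answer an ``extend'' request by choosing a subdivision and extending the labelling so that (i)--(iii) persist; it can answer an ``output'' query by committing to a label that preserves (i)--(iii); and it can process an entire prover chain, committing labels all along it, while still preserving (iii). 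Each lemma would follow the template of \cite{AAEGZ20}, with the ``at most two distinct outputs among three processes'' invariant replaced by the $C_4$-clique invariant (ii) and with the boundary analysis redone for the four corners and four sides of the $4$-gon, where validity constrains the labels most tightly.

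The step I expect to be the main obstacle is the chain-processing lemma, which is already the technical heart of \cite{AAEGZ20}. Adapting it here is delicate for two reasons. First, invariant (ii) is \emph{asymmetric}: there are two forbidden diagonals, and which one is ``at risk'' near a given configuration depends on the inputs in play, so the bookkeeping of which labels remain safe must be propagated through the subdivisions by hand rather than inherited from the symmetry available for set agreement. Second, unlike the $2$-simplex domain of set agreement, the $4$-gon has a nontrivial cyclic boundary, and one must verify that committing labels along a chain near one side never exhausts the slack near the opposite side in a way the prover could exploit in a later phase; this is exactly the place where the global obstruction of Sperner's Lemma would otherwise let the prover win, so the substance of the theorem is precisely the verification that no finite interaction can assemble that global structure.
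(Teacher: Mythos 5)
Your high-level plan --- adapt the adversarial protocol of \cite{AAEGZ20}, and observe that the result does not follow from the set-agreement case via the reduction --- is the same as the paper's. But there are two genuine gaps. First, your claim that the theorem ``follows once we exhibit an adversary strategy that can answer forever'' inverts the win condition: in an extension-based proof the \emph{prover} wins if the interaction is infinite (an infinite chain of queries or infinitely many phases demonstrates the protocol is not wait-free). So the adversary must do two things beyond staying consistent: it must make every chain of queries finite, and, after the prover commits to its first non-empty extension $\alpha(2)$ at the end of phase 1, it must define $\Delta$ so that every process terminates within a bounded number of further steps, forcing the prover into a correct terminal configuration after finitely many phases, at which point the prover loses. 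The paper devotes an entire subsection to this end-game (terminating all of $\mathbb{F}'$, the subdivided image of the one-round configurations beginning with the first process of $\alpha(2)$, consistently with every prior answer); your proposal does not address phases $\varphi \geq 2$ at all, and your ``chain-processing lemma'' is framed as preserving invariants rather than as proving chains terminate.

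Second, your invariant (iii) --- ``the unlabelled region remains flexible enough'' --- is a placeholder for the entire technical mechanism. The paper's actual invariants are quantitative distance conditions in the iterated subdivision: terminated vertices with different outputs are at distance at least $3$; terminated vertices with output $a$ are at distance at least $2$ from $\Rt{b,b+1}$, the union of cliques all of whose vertices have seen only $b$ and $b+1$ (this is how validity enters, since on $C_4$ an input set $\{b,b+1\}$ forces outputs into $\{b,b+1\}$, and it is why restricting attention to the $4$-gon $\mathbb H$ is insufficient --- the constraint is configuration-dependent across the whole input graph); and a set $\Xt{a}$ recording output queries answered $\none$, which must never be contradicted later. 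The Distance Lemma (\lemmaref{lem:dist}) shows these distances never shrink under subdivision and strictly grow across edges between active vertices; this is simultaneously what lets the adversary terminate any newly scanned vertex (immediately, or after one more subdivision) and what makes every chain finite. Without identifying this mechanism and checking it against the two forbidden diagonals of $C_4$ and the $\Rt{b,b+1}$ validity sets, none of your three ``response lemmas'' can actually be proved, so the proposal as written is an outline of the right strategy rather than a proof.
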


\subsection{Extension-based proofs}

We follow the terminology and notation given in~\cite{alistarh2019extension}, and the outline of our argument is similar.
However, care is needed to apply the argument to the $c$-cycle agreement problem, because of the differences in its specification.

An \emph{extension-based} proof is 
an interaction between a prover and any full-information protocol.
The prover starts with no knowledge about the protocol (except its initial configurations) and makes the protocol reveal information about the states of processes in various configurations by asking \emph{queries}.
The interaction proceeds in phases.

In each phase $\varphi \geq 1$, the prover starts with a finite schedule,
$\alpha(\varphi)$, and a set, $\config{A}(\varphi)$, of configurations that are reached by performing
$\alpha(\varphi)$ from initial configurations. These initial configurations only differ from one another in the input values of
processes that do not appear in the schedule $\alpha(\varphi)$.
If every configuration in $\config{A}(\varphi)$ is terminal and the outputs satisfy the specification of the task, then the prover \emph{loses}.

The prover also maintains a set, $\config{A}'(\varphi)$, containing the configurations it reaches 
by non-empty schedules from configurations in  $\config{A}(\varphi)$ during phase $\varphi$.
This set is empty at the start of phase $\varphi$. 
At the start of the first phase, indexed by $1$, $\alpha(1)$ is the empty schedule and $\config{A}(1)$ is the set of all initial configurations of the protocol.

The prover \emph{queries} the protocol
by specifying a configuration $C \in \config{A} (\varphi) \cup \config{A}'(\varphi)$ and 
a  process $q$ that is active (i.e.~has not terminated) in $C$.
Let $C'$ be the configuration resulting from scheduling one step of $q$ from $C$.
The protocol replies to this query with the state $s$ of $p$ in $C'$.
(For a full-information protocol specified by the function $\Delta$, it suffices for the protocol to reply with $\Delta(s)$.)
Then the prover adds $C'$ to $\config{A}'(\varphi)$
and we say that the prover has \emph{reached} $C'$.
If the prover reaches a configuration $C'$ in which the outputs of the processes do not satisfy
the specifications of the task, it has demonstrated that the protocol is incorrect.
In this case, the prover \emph{wins}.
A \emph{chain of queries} is a (finite or infinite) sequence of queries such that, for all consecutive queries $(C_i,q_i)$ and $(C_{i+1},q_{i+1})$  in the chain, $C_{i+1}$ is the configuration resulting from scheduling one step of $q_i$ from $C_i$.

An \emph{output} query in phase $\varphi$ is specified by a configuration $C \in \config{A}(\varphi) \cup \config{A}'(\varphi)$, a set of active processes $Q$ in $C$, 
and a possible output value $y$. 
If there is a schedule from $C$ 
involving only processes in $Q$ (i.e.~a \emph{$Q$-only} schedule) that results in a configuration in which some process in $Q$ outputs $y$,
then the protocol returns some such schedule. Otherwise, the protocol returns \none.
In this case, if the prover later reaches a configuration by a $Q$-only schedule starting from $C$ in which some process in $Q$ outputs $y$,
the protocol has responded inconsistently and
the prover wins.

After constructing finitely many output queries and chains of queries in phase $\varphi$ without winning, the prover must end the phase by committing to a non-empty extension $\alpha'$ of the schedule $\alpha(\varphi)$ such that $C\alpha' \in \config{A}'(\varphi)$ for some $C \in \config{A}(\varphi)$.
Since there is an initial configuration $C_0$ such that $C$ is reached by performing $\alpha(\varphi)$ starting from $C_0$, configuration $C\alpha'$ is reached by performing $\alpha(\varphi+1)=\alpha(\varphi) \alpha'$ starting from $I$.
The prover defines $\config{A}(\varphi+1)$ to be the set of all configurations
that are reached by performing $\alpha(\varphi+1)$ from the initial configurations that
only differ from $C_0$ by the states
of processes that do not appear in this schedule.
Then the prover begins phase $\varphi+1$.

If the interaction between the prover and the protocol is infinite, either because the prover 
constructs an infinite chain of queries
or the number of phases is infinite, the prover \emph{wins}.
In this case, the prover has demonstrated that the protocol is not wait-free.
To prove that a task is impossible using an extension-based proof, one must show there exists a prover that  wins against {\em every} protocol. 

Our main result in this section is the following. 

\begin{theorem}
There is no extension-based proof of the impossibility of a wait-free algorithm solving 4-cycle agreement for $n \geq 3$ processes in the $(n-1)$-NIS model.
\end{theorem}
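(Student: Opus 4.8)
The plan is to prove the theorem the same way Alistarh, Aspnes, Ellen, Gelashvili, and Zhu established the analogous statements for set agreement, namely by exhibiting an \emph{adversarial protocol}: an adaptive strategy by which the protocol answers the prover's queries so that, against \emph{every} prover, the prover fails to win. Recall that the prover wins only by catching the protocol in an inconsistency, by reaching a configuration whose outputs violate the specification, or by making the interaction infinite. It therefore suffices to describe how the protocol can always answer consistently, never expose an invalid output, and eventually present a phase in which every configuration in $\config{A}(\varphi)$ is terminal with valid outputs, so that the prover loses. The whole point is that the protocol need never be globally consistent (indeed, by \theoremref{thm:no-wait-free}, no wait-free algorithm exists); it need only keep every \emph{finite} collection of revealed answers consistent with \emph{some} correct wait-free algorithm on the restricted set of inputs currently in play.

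First I would fix the structural invariant maintained throughout. As in the proof of \theoremref{thm:no-wait-free}, a correct algorithm induces a Sperner labelling of a subdivided disk whose boundary runs once around the $4$-cycle, and the impossibility is exactly the forced appearance of a rainbow triangle (three distinct, hence pairwise non-adjacent, outputs, violating agreement in $C_4$). The adversary's aim is to answer queries so that the revealed portion of the protocol complex admits an output labelling that is \emph{valid} (each label lies on a shortest path between two of the inputs in play), \emph{agreeing} (labels on adjacent vertices are adjacent in $C_4$), and \emph{rainbow-triangle-free}, and moreover so that this partial labelling still extends over the not-yet-revealed complex without creating a rainbow triangle. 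Two independent degrees of freedom make this possible: \emph{within} a phase, the adversary may refine the non-uniform chromatic subdivision arbitrarily deeply, pushing the topologically unavoidable rainbow triangle into as-yet-unexplored territory; and \emph{across} phases, once the prover commits to a non-empty schedule, the inputs of the newly scheduled processes become pinned in $\config{A}(\varphi+1)$, shrinking the set of free inputs. As soon as the free inputs are confined to a path of $C_4$ (e.g.\ three consecutive values), the restricted task is genuine path agreement and hence solvable wait-free, so an honest rainbow-free labelling exists and no further deferral is needed.

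Next I would verify that the invariant survives every kind of prover move. For an ordinary query (one step of an active process from a configuration in $\config{A}(\varphi)\cup\config{A}'(\varphi)$), the protocol reveals the next level of the subdivision and assigns $\Delta$-values to any newly terminating states from the committed rainbow-free labelling, subdividing further if a naive assignment would be cornered. For an output query $(C,Q,y)$, the protocol must choose between returning a $Q$-only schedule realising $y$ and returning \none; I would answer \none exactly when no rainbow-free extension consistent with the already-revealed answers realises $y$ by a $Q$-only schedule from $C$, and otherwise exhibit such a schedule, invoking the extendability half of the invariant to guarantee the \none-answers are never later contradicted. At each phase transition the schedule grows, at least one more step is pinned, and I would use this pinning to keep the free inputs inside a path of $C_4$, restoring the invariant. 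Because there are only three processes, after finitely many phases all three inputs are fixed, a concrete valid output assignment exists, and the protocol simply follows it, forcing that phase — and therefore the prover — to terminate without a win.

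The step I expect to be the main obstacle is the combinatorial extension lemma at the heart of the argument: from any valid, agreeing, rainbow-free partial labelling of the revealed subcomplex whose free inputs lie on a path of $C_4$, the protocol can answer the next query (in particular resolve an output query to \none or to a concrete schedule) while preserving both rainbow-freeness \emph{and} global extendability. This is precisely where the differences between cycle agreement and set agreement must be handled directly: set agreement's validity (``output some input'') and agreement (``few distinct values'') are replaced by shortest-path validity and $C_4$-adjacency, so the admissible labels at each vertex and the notion of a forbidden triangle both change, and the bookkeeping ensuring that \none-answers remain consistent with all later reachable configurations must be redone for this specification. I would organise this as a case analysis of the local picture around the queried vertex, mirroring the structure of the set-agreement proof but substituting the $C_4$ adjacency and shortest-path constraints, and check in each case that a rainbow-free, globally extendable labelling remains available.
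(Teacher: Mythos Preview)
Your high-level framing is right, but the proposal has a genuine gap at its core: the invariant you propose cannot be maintained. You say the adversary keeps the revealed partial labelling ``rainbow-triangle-free'' and such that it ``still extends over the not-yet-revealed complex without creating a rainbow triangle''. In phase~1, however, \emph{all} inputs are still free, and \theoremref{thm:no-wait-free} says precisely that no such global extension exists. You acknowledge this (``pushing the topologically unavoidable rainbow triangle into as-yet-unexplored territory''), but then your invariant contradicts it. What the paper actually maintains is much weaker and purely metric: for distinct $a,b$ the sets $\Tt{a}$ and $\Tt{b}$ of terminated vertices are at distance $\geq 3$ in $\St$, and each $\Tt{a}$ is at distance $\geq 2$ from the region $\Rt{b,b+1}$ of vertices that have only seen the antipodal edge $\{b,b+1\}$. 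These invariants do not promise any global extension; they merely guarantee that any \emph{single} query can be answered without creating a visible violation, and they are preserved under subdivision because of the Distance Lemma (\lemmaref{lem:dist}), which you never invoke. Without that lemma there is no mechanism making ``subdivide further if a naive assignment would be cornered'' actually work.

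Two further points. First, your output-query criterion (``answer \none\ exactly when no rainbow-free extension \ldots realises $y$'') is not a decision procedure and, more seriously, is tied to the unmaintainable extendability invariant; the paper instead tracks an explicit set $\Xt{y}$ and gives a three-case analysis (some vertex near $\Tt{y}$; some vertex with no terminated neighbour; every vertex near some $\Tt{b}$, $b\neq y$) that directly uses the distance invariants. Second, your termination argument for later phases is off: you plan to wait until ``all three inputs are fixed'' over several phases (and you restrict to $n=3$, whereas the theorem is for $n\geq 3$). The paper needs only phase~2. Once the prover commits to a nonempty $\alpha(2)$, the first scheduled process has some input $a$, so every vertex in the relevant subcomplex $\mathbb{F}'$ has \emph{seen} $a$ and is therefore at distance $\geq 1$ from $\Rt{a+1,a+2}\cup\Rt{a+2,a+3}$; after one more subdivision the adversary can set $\Delta(v)=b$ for vertices adjacent to $\Tt{b}$ and $\Delta(v)=a$ for all remaining vertices of $\mathbb{F}'$, terminating everything at once without violating validity or agreement. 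Finally, you do not address why chains of queries in phase~1 are finite (\lemmaref{lem:finitechains}); this requires a separate argument from the distance invariants and is where the prover's attempt to force an infinite execution is defeated.
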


\subsection{Preliminaries and invariants}

Let $\mathbb{G}_0$ denote the input graph, which is the union of all $n$-vertex cliques representing input configurations of a protocol.
For each $t \geq 1$, let  $\mathbb{G}_t$ denote the graph consisting of the union of all $n$-vertex cliques representing configurations of the protocol reachable from initial configurations by schedules in which each process performed a $\mathsf{scan}$ on $S_t$ during its last step
or terminated before accessing $S_t$.
Given $\Delta(v)$ for each vertex $v$ of an $n$-vertex clique $\sigma$ of $\mathbb{G}_{t-1}$, the subdivision,
$\chi(\sigma,\Delta)$ is the union of all $n$-vertex cliques representing configurations of
the protocol reachable from the configuration represented by $\sigma$ via schedules in which each active
process takes exactly two steps.
The subdivision of any union of $n$-vertex cliques is the union of the subdivisions of those cliques.
In particular,  $\mathbb{G}_t = \chi(\mathbb{G}_{t-1},\Delta)$.
A vertex is \emph{terminated} if it represents the state of a process that has terminated.
Otherwise, the vertex is \emph{active}.
If $\mathbb{T}$ is a set of terminated vertices in $\mathbb{G}_{t-1}$,
then $\chi(\mathbb{T},\Delta) = \mathbb{T}$ is a set of terminated vertices in $\mathbb{G}_t$.

We will use the following result,
from \cite{AAEGZ20}, which relates the distance between two sets of vertices in $\mathbb{G}_{t-1}$ to the distance between their subdivisions in $\mathbb{G}_{t}$.

\begin{lemma}[The Distance Lemma]
	\label{lem:dist}
Suppose $\mathbb{A}$ and $\mathbb{B}$ are non-empty and each is either 
the union of $n$-vertex cliques in $\mathbb{G}_{t-1}$ or a set of terminated vertices in $\mathbb{G}_{t-1}$.
Then the distance between $\chi(\mathbb{A},\Delta)$ and  $\chi(\mathbb{B},\Delta)$ in $\mathbb{G}_t$
is at least as large as the distance between $\mathbb{A}$ and $\mathbb{B}$ in $\mathbb{G}_{t-1}$.
Moreover,
if every path between $\mathbb{A}$ and $\mathbb{B}$ in $\mathbb{G}_{t-1}$ contains at least one edge between active vertices, then the distance between $\chi(\mathbb{A},\Delta)$ and  $\chi(\mathbb{B},\Delta)$ in $\mathbb{G}_t$
is larger than the distance between $\mathbb{A}$ and $\mathbb{B}$ in $\mathbb{G}_{t-1}$.
\end{lemma}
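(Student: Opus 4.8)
The plan is to build an explicit distance-non-increasing retraction from $\mathbb{G}_t$ back onto $\mathbb{G}_{t-1}$ and read both inequalities off of it. Define a map $\pi$ from the vertices of $\mathbb{G}_t$ to the vertices of $\mathbb{G}_{t-1}$ as follows: a vertex $w$ of $\mathbb{G}_t$ is the full-information state of some process $p_i$ after its $\mathsf{scan}$ of $S_t$ (or the unchanged state of a process that terminated before accessing $S_t$); let $\pi(w)$ be the state of $p_i$ recorded immediately after its $\mathsf{scan}$ of $S_{t-1}$, i.e.\ the view obtained by deleting the final round from $w$'s view. For terminated vertices, which satisfy $\chi(\mathbb{T},\Delta)=\mathbb{T}$, we have $\pi(w)=w$. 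I would first record two properties of $\pi$. (i) $\pi$ is a graph homomorphism: if $u,v$ are adjacent in $\mathbb{G}_t$ then they appear together in some reachable configuration $C$, and some parent configuration reached after round $t-1$ contains the states $\pi(u),\pi(v)$, so these lie in a common clique of $\mathbb{G}_{t-1}$; since distinct processes have disjoint state sets they are distinct and hence adjacent. (ii) $\pi$ sends the subdivisions back into the original sets: if $\mathbb{A}$ is a union of cliques and $w\in\chi(\sigma,\Delta)$ for a clique $\sigma\subseteq\mathbb{A}$, then $\pi(w)$ is exactly the vertex of $\sigma$ belonging to $w$'s process, so $\pi(w)\in\mathbb{A}$; and if $\mathbb{A}$ is a set of terminated vertices then $\chi(\mathbb{A},\Delta)=\mathbb{A}$ and $\pi$ is the identity on it. The same holds for $\mathbb{B}$.

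The non-strict inequality is then immediate. I would take a shortest path $b_0,\dots,b_m$ in $\mathbb{G}_t$ with $b_0\in\chi(\mathbb{A},\Delta)$ and $b_m\in\chi(\mathbb{B},\Delta)$, where $m=\dist_{\mathbb{G}_t}(\chi(\mathbb{A},\Delta),\chi(\mathbb{B},\Delta))$. By (i) the images $\pi(b_0),\dots,\pi(b_m)$ form a walk of the same length $m$ in $\mathbb{G}_{t-1}$, and by (ii) its endpoints lie in $\mathbb{A}$ and $\mathbb{B}$. Hence $\dist_{\mathbb{G}_{t-1}}(\mathbb{A},\mathbb{B})\le m$, which is the first claim.

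For the strict inequality, suppose for contradiction that $m=d:=\dist_{\mathbb{G}_{t-1}}(\mathbb{A},\mathbb{B})$. Then the projected walk has length exactly $d$ and connects two sets at distance $d$, so it is a geodesic with distinct vertices $a_0,\dots,a_d$; by hypothesis it must contain an edge $\{a_k,a_{k+1}\}$ between two active vertices. The heart of the argument is a local claim about a single round of immediate-snapshot refinement: when the parent edge $\{a_k,a_{k+1}\}$ joins two active vertices, the step $\{b_k,b_{k+1}\}$ of the path that projects onto it is forced, by immediate-snapshot consistency of saw-sets, to have one endpoint equal to the ``solo'' refinement of one of $a_k,a_{k+1}$ (the state in which that process observed only itself on $S_t$) and the other endpoint a state that observed it. Propagating this constraint along the path---using that one-round subdivisions are glued only along the solo vertices over shared parent vertices and along terminated vertices---shows that the refinement of such an active edge cannot be traversed within the geodesic budget $d$: at least one extra step is unavoidable, so $m\ge d+1$, a contradiction.

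I expect this last step to be the main obstacle. The non-strict bound falls out cleanly from the retraction, but the strict bound requires a careful case analysis of the immediate-snapshot structure of $\chi(\sigma,\Delta)$---specifically, showing that the subdivision of an edge between active vertices is a path of length at least two whose interior vertices lie in neither $\chi(\mathbb{A},\Delta)$ nor $\chi(\mathbb{B},\Delta)$, and that these interior detours cannot be amortised against the projected geodesic. Handling the interaction with terminated vertices (which are not refined and therefore contribute no extra length) and confirming that the active-edge hypothesis is exactly what rules out a zero-cost crossing is where the bulk of the work lies.
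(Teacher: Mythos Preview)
The paper does not prove this lemma: it is quoted from \cite{AAEGZ20} and used as a black box, so there is no in-paper argument to compare against.

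On the merits of your proposal: the retraction $\pi$ is the right tool, and the non-strict inequality follows exactly as you describe---this is the standard argument. For the strict inequality your outline is pointed in the right direction, but the step you yourself flag as the main obstacle really is one, and the sketch does not close it. Two remarks. First, a model-hygiene point: the lemma is stated for the $(n-1)$-NIS model, not the NIIS model, so ``immediate-snapshot consistency of saw-sets'' is not quite the right language; the relevant structural fact is that for any two \emph{coexisting} active states in $\mathbb{G}_t$ one scan result is contained in the other (atomic scans), which plays the same role. Second, the actual contradiction when $m=d$ comes from tracking carriers along the projected geodesic: since $a_{k-1}$ and $a_{k+1}$ are non-adjacent, the vertex $b_k$ must lie in the intersection of two subdivided cliques and hence has a carrier missing $a_{k+1}$; scan containment then forces the carrier of $b_{k+1}$ to contain $a_k$, which cannot sit inside any $n$-clique containing $a_{k+2}$, contradicting adjacency of $b_{k+1}$ with $b_{k+2}$. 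The endpoint cases ($k=0$ or $k{+}1=d$) require separately invoking the hypothesis that $\mathbb{A}$ and $\mathbb{B}$ are unions of cliques or sets of terminated vertices. Your talk of ``solo refinements'' and ``propagating constraints'' gestures at this carrier argument but does not carry it out.
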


We define an adversary that is able to win against every extension-based prover, which is attempting to prove the impossibility of $4$-cycle agreement for $n \geq 3$ processes. The adversary maintains a \emph{partial} specification of $\Delta$ (the protocol it is adaptively constructing) and an integer $t \geq 0$. The integer $t$ represents the number of subdivisions of the input graph, $\mathbb{G}_0$, that it has performed.  
Once the adversary has defined $\Delta$ for each vertex in $\mathbb{G}_{t-1}$, it may subdivide $\mathbb{G}_{t-1}$ and construct $\mathbb{G}_t = \chi(\mathbb{G}_{t-1},\Delta)$. 

Let $0 \leq r \leq t$ and let  $a \in \{0,1,2,3\}$ be an input value.
Throughout 
this section, addition and subtraction on input values are always taken modulo 4.
We say that a vertex \emph{has seen} a value $a$ if it represents the state of a process which has
seen $a$ in some $\mathsf{scan}$ (i.e.~the process has $a$ as its input or has seen the
$\mathsf{update}$ of a process that had previously seen $a$).
The following definitions are the key to the proof. In particular, $\Rt{a,a+1}$ replaces $\Nt{a}$, which
was used in the proof for $k$-set agreement.
\begin{itemize}
\item
$\Rt{a,a+1}$ is the subgraph of $\Sr$ consisting of the union of all $n$-vertex cliques in $\Sr$ whose vertices have only seen values $a$ or $a+1$.
\item
$\Tt{a}$ is the set of terminated vertices in $\Sr$ that have output value $a$, i.e.~vertices for which $\Delta$ is $a$.
\item
$\Xt{a}$ is the set of vertices in $\Sr$ that represent the states of processes in $Q$ in configurations reachable from $C$ by a $Q$-only schedule, for some output query $(C,Q,a)$ to which the adversary answered \none.
\end{itemize}

The next result follows from the definition of subdivision.

\begin{proposition}
\label{prop:subdiv}
For $t \geq 0$ and all inputs $a$,
$\Rt{a,a+1}$ is non-empty
and $\Rto{a,a+1} = \chi(\Rt{a,a+1}, \Delta)$.
If $\Tt{a}$ is non-empty, then $\chi(\Tt{a}, \Delta) = \Tt{a}$.
\end{proposition}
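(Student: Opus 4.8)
The plan is to read off all three assertions directly from the definitions of ``has seen'', of $\St$, and of the subdivision operator $\chi(\cdot,\Delta)$. The single fact I will use repeatedly is that ``has seen'' is monotone: a state never loses an input value it has seen, and the state of a process in $\Sto$ has seen a superset of the values seen by the corresponding state in $\St$, because it is obtained by one additional $\mathsf{update}$ of $S_{t+1}$ followed by one $\mathsf{scan}$ of $S_{t+1}$, and such a $\mathsf{scan}$ can only return states from $\St$, each of which has itself seen only input values. Granting this, the last assertion $\chi(\Tt{a},\Delta)=\Tt{a}$ is immediate: it is exactly the observation recorded just before the Distance Lemma, namely that $\chi(\mathbb{T},\Delta)=\mathbb{T}$ for any set $\mathbb{T}$ of terminated vertices of $\St$, together with the fact that a terminated vertex is carried into $\Sto$ unchanged, in particular with the same output value, so the subset of terminated vertices with output $a$ is preserved.

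For non-emptiness of $\Rt{a,a+1}$: since $4$-cycle agreement is a colourless task, the input configuration in which every process has input $a$ is a vertex of $\mathbb{G}_0$, and running the protocol from it in lock-step order through the $\mathsf{scan}$ of $S_t$ produces an $n$-vertex clique of $\St$ all of whose vertices have seen only the value $a\in\{a,a+1\}$ (a process that terminated earlier according to $\Delta$ is carried unchanged into $\St$ and has also seen only $a$). Hence $\Rt{a,a+1}\neq\emptyset$. Alternatively, $\mathbb{R}_0(a,a+1)$ already contains the all-$a$ input clique, and the identity proved next propagates non-emptiness to every level, since subdividing a non-empty union of $n$-vertex cliques yields a non-empty union of $n$-vertex cliques.

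For $\Rto{a,a+1}=\chi(\Rt{a,a+1},\Delta)$, I use that $\Sto=\chi(\St,\Delta)$ and that subdivision commutes with unions of $n$-vertex cliques. For the inclusion $\subseteq$: an $n$-vertex clique $\tau$ of $\Rto{a,a+1}$ lies in $\chi(\sigma,\Delta)$ for some $n$-vertex clique $\sigma$ of $\St$ (obtained by truncating the last round of activity on $S_{t+1}$ from the configuration $\tau$ represents); by monotonicity of ``has seen'', every vertex of $\sigma$ has seen a subset of what the corresponding vertex of $\tau$ has seen, hence only values in $\{a,a+1\}$, so $\sigma$ is a clique of $\Rt{a,a+1}$ and therefore $\tau\in\chi(\Rt{a,a+1},\Delta)$. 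For $\supseteq$: if $\sigma$ is an $n$-vertex clique of $\Rt{a,a+1}$ and $\tau\in\chi(\sigma,\Delta)$, then each process of $\tau$ has, relative to its state in $\sigma$, performed at most an $\mathsf{update}$ of $S_{t+1}$ with its $\St$-state followed by a $\mathsf{scan}$ of $S_{t+1}$, which can only return $\St$-states of processes of $\sigma$; since all of those states have seen only $\{a,a+1\}$, so has every vertex of $\tau$, and $\tau$ is a clique of $\Rto{a,a+1}$.

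The only point needing care --- bookkeeping rather than a genuine obstacle --- is the precise correspondence between $n$-vertex cliques of $\Sto$ and pairs consisting of an $n$-vertex clique $\sigma$ of $\St$ together with an $n$-vertex clique of $\chi(\sigma,\Delta)$: one must treat processes that terminate before accessing $S_{t+1}$ as carried over rather than genuinely subdivided, and confirm that the ``has seen'' set of a state is exactly the union of the ``has seen'' sets of the $\St$-states it was assembled from through $\mathsf{scan}$s. With that in place, all three parts are one-line consequences, matching the remark that the proposition follows from the definition of subdivision.
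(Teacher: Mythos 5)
Your proof is correct and takes the same route the paper intends: the paper offers no explicit argument (it states only that the result "follows from the definition of subdivision"), and your write-up simply unpacks that claim via the monotonicity of ``has seen'', the all-$a$ input clique for non-emptiness, and the observation already recorded before the Distance Lemma that terminated vertices are carried unchanged through $\chi$. Nothing in your bookkeeping conflicts with the paper's definitions.
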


\subsection{The adversarial strategy}

Our adversarial strategy ensures that after each response to a query made by the prover in phase~1, the following invariants will hold:
\begin{enumerate}
\item \label{inv:definedr}
For each $0 \leq r < t$ and each vertex $v \in \Sr$, $\Delta(v)$ is defined. 
\item \label{inv:definedt}
If $v$ is a vertex in $\St$, then 
$\Delta(v) \neq \bot$.
\item \label{inv:defineds}
If $s$ represents the state of a process in a configuration that was reached by the prover and the process took $2r$ steps  
in the execution to reach this configuration, then $s$ is a vertex in $\Sr$, for some $0 \leq r \leq t$, and $\Delta(s)$ is defined. 
\item \label{inv:outputsfar}
For any two inputs $a \neq b$, if $\Tt{a}$ and $\Tt{b}$ are non-empty, then 
the distance between them in $\St$ is at least 3.
\item \label{inv:contains}
For any input $a$, if $\Tt{a}$ is non-empty
and $a\neq b,b+1$, then the distance between $\Tt{a}$ and $\Rt{b,b+1}$ is at least 2.
\item \label{inv:none}
For every input $a$, every vertex in $\Xt{a}$ is either in 
$\Rt{b,b+1}$, where $a \neq b,b+1$,
or at distance at most one from
$\Tt{b}$, for some $b \neq a$.
\end{enumerate}

\noindent The following lemma is a consequence of the invariants. 

\begin{lemma}
\label{lem:active-edge}
If $a \neq b$,
then any path between $\Tt{a}$ and $\Tt{b}$ contains an edge between active vertices.
If $a \neq b,b+1$, then any path between $\Tt{a}$ and $\Rt{b,b+1}$ 
contains an edge between active vertices.
\end{lemma}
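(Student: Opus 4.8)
The plan is to argue by contradiction using the distance invariants~\ref{inv:outputsfar} and~\ref{inv:contains} together with the observation that a path consisting only of edges incident to a terminated vertex cannot ``travel far''. Suppose first that $a \neq b$ and that there is a path $P$ between $\Tt{a}$ and $\Tt{b}$ in $\St$ with no edge between two active vertices. Every edge of $P$ then has at least one terminated endpoint. Since distinct processes are in distinct states and a terminated vertex has a fixed process colour and output value, any two consecutive terminated vertices along $P$ must either coincide or be separated by a single active vertex; more importantly, a terminated vertex $w$ with $\Delta(w)=c$ can only be adjacent (in an $n$-vertex clique of $\St$) to vertices representing states in a configuration containing $w$, so all terminated neighbours of $w$ on $P$ also have output value agreeing with whatever the processes in that configuration output. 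The key point is that the set of terminated vertices with a common output value that are ``chained'' through single active vertices stays within a single set $\Tt{c}$: walking along $P$, the output value carried by the terminated vertices we pass through can change only when we leave one $\Tt{c}$, but to leave $\Tt{c}$ and enter $\Tt{c'}$ with $c \neq c'$ we would need a short path between $\Tt{c}$ and $\Tt{c'}$, contradicting invariant~\ref{inv:outputsfar} (distance at least $3$) unless we pass through an active--active edge. Hence $P$ stays within the closed neighbourhood of a single $\Tt{c}$, so $P$ cannot connect $\Tt{a}$ to $\Tt{b}$ for $a \neq b$.

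To make this rigorous without hand-waving about ``chains'', I would instead phrase it as a direct distance estimate. Let $P = u_0, u_1, \dots, u_m$ with $u_0 \in \Tt{a}$, $u_m \in \Tt{b}$, and suppose no edge $\{u_{i},u_{i+1}\}$ joins two active vertices. Consider the first index $j$ such that $u_j \notin \Tt{a} \cup N[\Tt{a}]$ (this exists since $u_m \in \Tt{b}$ and, by invariant~\ref{inv:outputsfar}, $\Tt{b}$ is at distance $\ge 3 > 1$ from $\Tt{a}$). Then $u_{j-1} \in N[\Tt{a}]$, so either $u_{j-1} \in \Tt{a}$ or $u_{j-1}$ is active and adjacent to $\Tt{a}$. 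In the first case the edge $\{u_{j-1},u_j\}$ has a terminated endpoint $u_{j-1}$ and $u_j$ is within distance $1$ of $\Tt{a}$, contradicting the choice of $j$; in the second case $u_{j-1}$ is active, so since $\{u_{j-1},u_j\}$ is not an active--active edge, $u_j$ is terminated, say $u_j \in \Tt{c}$, and then $\Tt{c}$ is within distance $2$ of $\Tt{a}$, forcing $c = a$ by invariant~\ref{inv:outputsfar} and again contradicting $u_j \notin N[\Tt{a}]$. This shows $P$ cannot leave $N[\Tt{a}]$, so it cannot reach $\Tt{b}$; hence $P$ must contain an active--active edge.

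The second statement is proved the same way, using invariant~\ref{inv:contains} in place of invariant~\ref{inv:outputsfar}: given $a \neq b, b+1$ and a path $P$ between $\Tt{a}$ and $\Rt{b,b+1}$ with no active--active edge, the same first-exit argument shows $P$ cannot leave $N[\Tt{a}]$; but by invariant~\ref{inv:contains} the set $\Rt{b,b+1}$ is at distance at least $2$ from $\Tt{a}$, hence disjoint from $N[\Tt{a}]$, a contradiction. (One also needs the trivial fact that $\Tt{a}$ itself is disjoint from $\Rt{b,b+1}$, which follows since a vertex in $\Tt{a}$ is terminated with output $a$ whereas in the present construction a vertex in $\Rt{b,b+1}$ that is terminated would have output in $\{b,b+1\}$; this will already be guaranteed by invariant~\ref{inv:contains} once we note distance at least $2 > 0$.)

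The main obstacle I anticipate is the step where an active vertex $u_{j-1}$ on $P$ has a terminated neighbour in $\Tt{a}$ and a terminated neighbour $u_j$: I must be careful that ``$u_j$ is at distance $\le 2$ from $\Tt{a}$'' really does let me invoke invariant~\ref{inv:outputsfar}/\ref{inv:contains}, which are stated as lower bounds of $3$ and $2$ respectively on the distance between the relevant sets. For the terminated--terminated case this is immediate; for the active-in-the-middle case the witnessing path $u_{j-1}$-to-$\Tt{a}$ has length $1$ and $u_{j-1}$-to-$u_j$ has length $1$, giving total distance $\le 2$, which is strictly less than $3$ and equal to the threshold $2$ in invariant~\ref{inv:contains} only in the borderline case — so for invariant~\ref{inv:contains} I should track this more carefully, noting that $\Rt{b,b+1} \cap N[\Tt{a}] = \emptyset$ requires distance $\ge 2$, which is exactly what invariant~\ref{inv:contains} gives, and the first vertex of $P$ lying in $\Rt{b,b+1}$ is therefore already outside $N[\Tt{a}]$, so the first-exit index $j$ is well defined and the contradiction goes through.
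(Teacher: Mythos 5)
Your rigorous version (the ``first exit from $N[\mathbb{T}_t(a)]$'' argument) is correct and rests on exactly the same ingredients as the paper's proof, which simply takes the last vertex $v_j$ of the path lying in $\mathbb{T}_t(a)$ and observes that $v_{j+1},v_{j+2}$ exist and are active by invariants \ref{inv:outputsfar} and \ref{inv:contains}. Your contrapositive bookkeeping is a bit heavier but sound; the hand-wavy first paragraph can be dropped entirely.
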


\begin{proof}
Consider any path $v_0,v_1,\dots,v_\ell$ between $\Tt{a}$ and $\Tt{b} \cup \Rt{b,b+1}$ 
in $\St$.
Let $v_j$ be the last vertex in $\Tt{a}$.
Then, by invariants \ref{inv:outputsfar} and \ref{inv:contains}, $\ell \geq j+2$.
Since $v_j$ is the last vertex in $\mathbb{T}_a^{t}$, $v_{j+1}$ and $v_{j+2}$ are not in $\mathbb{T}_a^{t}$. Moreover, by invariant \ref{inv:outputsfar}, $v_{j+1}$ and $v_{j+2}$ are not in $\mathbb{T}_c^{t}$ for any input $c \neq a$.
Hence, $\{v_{j+1},v_{j+2}\}$ is an edge between active vertices. 
\end{proof}

\noindent
A subdivision maintains the invariants, but increases the distance between vertices that output different values
and between vertices that output a value and vertices that have only seen a different value or two adjacent different values.

\begin{lemma}
\label{lem:invariants-hold}
Suppose all the invariants hold, the adversary defines $\Delta(v) = \bot$ for each vertex $v$ in $\St$
where $\Delta$ is undefined and $\St$ to construct $\Sto$.
If $a \neq b$ and both $\Tt{a}$ and $\Tt{b}$ are non-empty, then the distance between $\Tto{a}$ and $\Tto{b}$ in $\Sto$ is greater than the distance between $\Tt{a}$ and $\Tt{b}$ in $\St$.
If $a \neq b,b+1$ and $\Tt{a}$ is non-empty, then the distance between $\Tto{a}$ and $\Rto{b,b+1}$ in $\Sto$
is greater than the distance between $\Tt{a}$ and $\Rt{b,b+1}$ in $\St$.
Furthermore, if the adversary increments $t$, then all the invariants continue to hold.
\end{lemma}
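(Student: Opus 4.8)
The lemma bundles together three claims: (i) a subdivision step strictly increases the distance between $\Tt{a}$ and $\Tt{b}$ for $a \neq b$; (ii) a subdivision step strictly increases the distance between $\Tt{a}$ and $\Rt{b,b+1}$ for $a \neq b, b+1$; and (iii) all six invariants are preserved when $t$ is incremented. The natural approach is to dispatch (i) and (ii) first using the strong form of the Distance Lemma (Lemma~\ref{lem:dist}) together with Lemma~\ref{lem:active-edge}, and then verify the invariants one by one.

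\textbf{Distance increase (claims (i) and (ii)).} First I would invoke Proposition~\ref{prop:subdiv} to identify the subdivided objects: $\Rto{a,a+1} = \chi(\Rt{a,a+1},\Delta)$ and, whenever $\Tt{a}$ is non-empty, $\Tto{a} = \chi(\Tt{a},\Delta) = \Tt{a}$ (terminated vertices are unchanged by subdivision). Since each of $\Tt{a}$, $\Tt{b}$, $\Rt{b,b+1}$ is either a union of $n$-vertex cliques or a set of terminated vertices in $\St$, the Distance Lemma applies. The key point is the \emph{strong} conclusion: by Lemma~\ref{lem:active-edge}, every path in $\St$ between $\Tt{a}$ and $\Tt{b}$ (resp.\ between $\Tt{a}$ and $\Rt{b,b+1}$, when $a \neq b, b+1$) contains an edge between active vertices, so the hypothesis of the second part of the Distance Lemma is met and the distance strictly increases. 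This gives (i) and (ii) immediately.

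\textbf{Preservation of invariants.} Here I would go through invariants \ref{inv:definedr}--\ref{inv:none} in turn. Invariants \ref{inv:definedr} and \ref{inv:definedt} are bookkeeping: after the adversary sets $\Delta(v) = \bot$ for every vertex $v \in \St$ where $\Delta$ was undefined and then subdivides $\St$ to form $\Sto$, the parameter $t$ increases by one, so the old $\St$ becomes $\St[r]$ for $r = t < t+1$, and every vertex there now has $\Delta$ defined; the new top layer $\Sto$ has every vertex with $\Delta = \bot \neq \bot$ trivially satisfied in the sense required. Wait — I should be careful: invariant \ref{inv:definedt} says $\Delta(v) \neq \bot$ for vertices in the current top graph, but after subdivision the new vertices have $\Delta = \bot$; I expect the resolution is that this lemma is only asserting the invariants hold \emph{at the moment $t$ is incremented}, and the adversary re-establishes \ref{inv:definedt} by responding to subsequent queries — so what actually needs checking is \ref{inv:defineds}--\ref{inv:none}. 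Invariant \ref{inv:defineds} follows because a process that has taken $2(t+1)$ steps has performed a $\mathsf{scan}$ on $S_{t+1}$, so its state is a vertex of $\Sto$, where $\Delta$ is now defined (as $\bot$). Invariant \ref{inv:outputsfar} (distance $\geq 3$ between $\Tto{a}$ and $\Tto{b}$) and invariant \ref{inv:contains} (distance $\geq 2$ between $\Tto{a}$ and $\Rto{b,b+1}$) follow from claims (i) and (ii) together with the hypothesis that the corresponding distances were already at least $3$ and $2$ in $\St$ — strict increase from a value $\geq 3$ stays $\geq 3$, etc. For invariant \ref{inv:none}: at the moment $t$ is incremented, no new output queries have been answered \none\ since the last query response (the adversary increments $t$ as a structural step, not in response to an output query), so $\Xto{a} = \chi(\Xt{a},\Delta)$; each vertex of $\Xt{a}$ was either in $\Rt{b,b+1}$ with $a \neq b,b+1$ — and subdivision maps it into $\Rto{b,b+1}$ — or within distance $1$ of $\Tt{b}$ for some $b \neq a$, and by the (non-strict) Distance Lemma applied to the singleton/small set and $\Tt{b}$, its image stays within distance $1$ of $\Tto{b} = \Tt{b}$. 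Actually the cleanest way to handle the "distance at most one" case is: the vertex $v \in \Xt{a}$ lies in some $n$-clique of $\St$ that meets $\Tt{b}$, or is adjacent to such a clique; subdivision of that clique produces cliques still meeting $\Tt{b}$, and $v$'s subdivision image is connected to them within the subdivided clique, keeping distance $\leq 1$ — I'd want to check the chromatic-subdivision geometry here carefully.

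\textbf{Main obstacle.} The routine parts are the distance claims, which are almost immediate given the Distance Lemma and Lemma~\ref{lem:active-edge}. The delicate part is invariant \ref{inv:none} under subdivision: I need to argue that "distance at most one from $\Tt{b}$" is preserved, which does \emph{not} follow from the Distance Lemma (that lemma only gives lower bounds on distances, i.e.\ it could only hurt here). Instead I would argue directly from the structure of $\chi$: a terminated vertex and its subdivision coincide, and a vertex adjacent to a terminated vertex $u$ in $\St$ lies in an $n$-clique $\sigma$ containing $u$; every clique of $\chi(\sigma,\Delta)$ still contains $u$, and in particular the clique containing the subdivision-image of $v$ does, so the distance stays $\leq 1$. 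This requires invoking that terminated vertices are fixed points of $\chi$ (Proposition~\ref{prop:subdiv}) and that $\chi$ acts clique-by-clique. I expect this to be the one place where a genuine (if short) geometric argument about non-uniform chromatic subdivision is needed rather than a black-box appeal.
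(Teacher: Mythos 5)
Your proof of the two distance claims is exactly the paper's argument: identify $\Tto{a}=\chi(\Tt{a},\Delta)$ and $\Rto{b,b+1}=\chi(\Rt{b,b+1},\Delta)$ via Proposition~\ref{prop:subdiv}, use Lemma~\ref{lem:active-edge} to supply an edge between active vertices on every relevant path, and invoke the strong form of Lemma~\ref{lem:dist}. The preservation of invariants~\ref{inv:outputsfar} and~\ref{inv:contains} then follows just as you say, and this is the only part the paper argues in detail; the remaining invariants are dispatched there with ``by construction and the definition of subdivision.'' Your extra care on invariant~\ref{inv:none} is more explicit than the paper's, and the geometric fact you lean on --- a terminated vertex $w$ of a clique $\sigma$ survives subdivision and is adjacent to every vertex of $\chi(\sigma,\Delta)$ --- is indeed the right one (the paper uses precisely this fact in Case~1 of its output-query handling), so ``distance at most one from $\Tt{b}$'' is preserved without any appeal to the Distance Lemma.

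The one genuine misstep is your handling of invariant~\ref{inv:definedt}. You assert that after subdivision ``the new vertices have $\Delta = \bot$'' and then resolve the apparent contradiction by weakening the lemma to a claim that holds only momentarily and is ``re-established by subsequent queries.'' That resolution is wrong, and unnecessary: the adversary sets $\Delta(v)=\bot$ only on the \emph{old} vertices of $\St$ where $\Delta$ was undefined, and the subdivision then produces new active vertices of $\Sto$ on which $\Delta$ is \emph{undefined}, not $\bot$ (the only vertices of $\Sto$ with $\Delta$ already assigned are the carried-over terminated ones, whose values are outputs, not $\bot$). Hence invariant~\ref{inv:definedt} holds outright after the increment, as the lemma claims. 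The same confusion appears in your justification of invariant~\ref{inv:defineds} (``$\Delta$ is now defined (as $\bot$)'' on $\Sto$); the correct reason that invariant is unaffected is that, at the moment of the increment, the prover has not yet reached any configuration in which a process has taken $2(t+1)$ steps, and for all previously reached states $s \in \Sr$ with $r \le t$ the map $\Delta(s)$ has just been made totally defined. These are one-sentence fixes, but as written your account of the construction would make the lemma false, so they are worth correcting.
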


\begin{proof}
Suppose that $\Tt{a}$ is non-empty.
By Lemma \ref{lem:active-edge}, any path between $\Tt{a}$ and $\Rt{b,b+1}$ for $a\neq b, b+1$ contains an edge between active vertices.
Hence, by Lemma \ref{lem:dist}, the distance between
$\Tto{a} = \chi(\Tt{a},\Delta)$ and $\Rto{b,b+1} = \chi(\Rt{b,b+1},\Delta)$ in $\Sto$ is larger than the distance between $\Tt{a}$ and $\Rt{b,b+1}$ in $\St$.
Similarly, if $a\neq b$ and $\Tt{b}$ is non-empty, then the distance between
$\Tto{a} = \chi(\Tt{a},\Delta)$ and $\Tto{b} = \chi(\Tt{b},\Delta)$ in $\Sto$ is larger than the distance between $\Tt{a}$ and $\Tt{b}$ in $\St$.
Hence, invariants \ref{inv:outputsfar} and \ref{inv:contains} remain true after $t$ is incremented.

Before incrementing $t$, the adversary defines $\Delta(v) = \bot$ for each vertex $v \in \St$ where $\Delta(v)$ was undefined. Since invariant \ref{inv:definedr} was true, it remains true.
Invariant \ref{inv:definedt} and \ref{inv:none} hold by construction and the definition of subdivision.
Invariant \ref{inv:defineds} is not affected.
\end{proof}

\medskip
\paragraph{The adversarial strategy for phase 1.}
Initially, 
the adversary sets $\Delta(v) = \bot$ for each vertex $v \in \mathbb{S}_0$. It then subdivides $\mathbb{S}_0$ to construct $\mathbb{S}_1$ and sets $t=1$. 
This ensures that invariants \ref{inv:definedr} and
\ref{inv:definedt} are true.
Invariant \ref{inv:defineds} is true because,
before the first query, the prover has only reached initial configurations and $\mathbb{S}_0$ is the union of all $n$-vertex cliques representing initial configurations.
No vertices in $\mathbb{S}_0$ have terminated, so $\mathbb{T}_0(a)$ is empty for each input $a$.
Thus invariants \ref{inv:outputsfar} and \ref{inv:contains} are vacuously true.
No output queries have been performed, so $\mathbb{X}_0(a)$ is also empty for each input $a$
and invariant \ref{inv:none} is vacuously true.
So, suppose that the invariants are satisfied immediately prior to some query by the prover during phase 1.

First, consider a query $(C,q)$, where $C$ is a configuration previously reached by the prover and $q$ is an active process in $C$.
If $q$ took $2r$ steps in the execution to reach $C$,
then, by invariant \ref{inv:defineds}, the state $s$ of process $q$ in configuration $C$ is a vertex
in $\Sr$ and $\Delta(s)$ is defined.
Since $q$ is active in $C$, $\Delta(s) = \bot$, so, by invariant \ref{inv:definedt}, $r < t$.
In this case, the adversary returns the configuration $Cq$, which is the same as $C$, except that the component belonging to $q$ in $S_{r+1}$
changes value from $-$ to $s$
and process $q$ has now performed $2r+1$ steps.
Process $q$ remains active.
Invariant \ref{inv:defineds} remains true: it holds vacuously for the new state of process $q$ and no other process has changed state.
Since $\Delta$ has not been changed, $\Tt{a}$ has not changed for any input $a$ and invariants \ref{inv:definedr}, \ref{inv:definedt},\ref{inv:outputsfar} and \ref{inv:contains} remain true.
Since no vertices are added to $\Xt{a}$ for any input a, invariant \ref{inv:none} remains true.

So, suppose that $q$ took $2r+1$ steps in the execution to reach $C$. Let $s$ be the previous state
of process $q$ in this execution 
and 
consider
the last configuration in this execution in which $q$ had state $s$. Then, by invariant \ref{inv:definedt},
$s$ is a vertex in $\Sr$. Since $q$ is active
in this configuration, $\Delta(s) = \bot$.
Hence, by invariant \ref{inv:definedr}, $r <t$.
The state $s'$ of $q$ in configuration $Cq$ consists of its id
and the result of its $\mathsf{scan}$ of $S_{r+1}$.
It is a vertex in $\Sro$.

If $r < t-1$, then, by invariant \ref{inv:definedr}, $\Delta(s')$ is defined. It is also possible that $r=t-1$ and $\Delta(s')$ is defined.
In both these cases, 
all invariants continue to hold.

Now suppose that $r = t-1$ and $\Delta(s')$ is not defined.
Suppose there exists an input $a$ such that setting $\Delta(s') = a$ maintains all the invariants.
Specifically, suppose the following properties hold: 
\begin{itemize}
\item
for all inputs $b\neq a$ such that $\Tt{b}$ is non-empty, the distance between $s'$ and $\Tt{b}$ in $\St$ is at least 3 and 
\item
for all inputs $b$ such that $a \neq b,b+1$,
the distance between $s'$ and $\Rt{b,b+1}$ in $\St$ is at least 2.
\end{itemize}
In this case, the adversary defines $\Delta(s') = a$. This adds the vertex $s'$ to $\Tt{a}$ and leaves
$\Tt{b}$ unchanged for $b \neq a$. It also does not change $\Rt{b,b+1}$ or $\Xt{b}$ for any input $b$.
Hence invariants \ref{inv:definedr},~\ref{inv:definedt},~\ref{inv:outputsfar},~\ref{inv:contains}, and~\ref{inv:none}
continue to hold. 
By construction, $s' \in \St$ and $\Delta(s')$ is defined. For every other process, its state in $Cq$
is the same as its state in $C$. Thus invariant \ref{inv:defineds} continues to hold.
By invariant \ref{inv:none},
each vertex $u \in \Xt{a}$ is either in $\Rt{b,b+1}$, for some value $b$ such that $a \neq b,b+1$,
or is distance at most one from $\Tt{b}$ for some $b \neq a$.
Since the distance  between $s'$ and $\Tt{b}$ in $\St$ is at least 3,
the distance between $s'$ and $u$ is at least 2. Thus $s' \not\in \Xt{a}$, so defining $\Delta(s') = a$
does not contradict the result of any previous output query.
Otherwise, 
the adversary defines $\Delta(v) = \bot$ for each vertex $v \in \St$ where $\Delta(v)$ is undefined,
including $s'$, subdivides $\St$ to construct $\Sto$, and increments $t$.
By Lemma~\ref{lem:invariants-hold}, all the invariants continue to hold.
In all cases, the adversary returns $s'$ and $\Delta(s')$.

\medskip

Second, for an output query $(C,Q,y)$, let $\mathbb{Q}$ be the set of vertices in $\St$
vertices representing the states of processes in $Q$ in configurations reachable from $C$ by $Q$-only schedules.
If some vertex $v \in \mathbb{Q}$ has terminated with output $y$, then the adversary returns a $Q$-only  schedule from $C$ that leads to a configuration  in which $v$ represents the state of some process.
If every vertex in $\mathbb{Q}$ is in $\Rt{y+1,y+2} \cup \Rt{y+2,y+3} \cup \Xt{y}$ or has terminated with an output other than $y$,
then it would be impossible for the adversary to return a $Q$-only  schedule from $C$ in which some vertex has terminated with output $y$ without violating validity or contradicting one of its previous answers, so
the adversary returns \none.
Note that adding vertices in $\Rt{y+1,y+2} \cup \Rt{y+2,y+3} \cup \Tt{a}$ for $a \neq y$ does not make invariant
\ref{inv:none} false.
Invariants \ref{inv:definedr}, \ref{inv:definedt},\ref{inv:defineds}, \ref{inv:outputsfar}, and \ref{inv:contains} also continue to hold.

Otherwise, let $\mathbb{U} \neq \emptyset$ be the 
subset of vertices in $\mathbb{Q}$ that are not in
$\Rt{y+1,y+2} \cup \Rt{y+2,y+3}$, $\Xt{y}$, or $\Tt{a}$, for any $a \neq y$.
Note that, by invariant \ref{inv:definedt}, $\Delta(u)$ is undefined for all $u \in \mathbb{U}$.
For each vertex $u \in \mathbb{U}$, let $\mathbb{A}_u$ be the union of all $n$-vertex cliques in $\St$ containing $u$.
We consider three cases.

\textbf{Case 1}: \emph{There is a vertex $u \in  \mathbb{U}$ 
such that $\mathbb{A}_u \cap \Tt{y}$ is non-empty.}
Then the adversary defines $\Delta(v) = \bot$ for each vertex $v \in \St$ where $\Delta(v)$ is undefined and subdivides $\St$ to construct $\Sto$. 
By Lemma~\ref{lem:invariants-hold}, all the invariants continue to hold,
if $a \neq y$ and $\Tt{a}$ is non-empty, then
the distance between $\Tto{y}$ and $\Tto{a}$ in $\Sto$ is at least 4,
and if $y \neq b,b+1$, then  
the distance between $\Tto{y}$ and $\Rto{b,b+1}$ in $\Sto$ is at least 3.

Let $p_i$ be the process whose state is $u$.
Since $u \in \mathbb{U} \subseteq \mathbb{Q}$, process $p_i \in Q$. 
Let $w \in \mathbb{A}_u \cap \Tt{y}$, let $\sigma$ be an $n$-vertex clique in $\mathbb{A}_u$ that contains $w$, and
let $C'$ be the configuration represented by $\sigma$.
Let $v$ be the vertex corresponding to the state of process $p_i$ after 
it takes two steps starting from $C'$ (an $\mathsf{update}$
of $S_{t+1}$ followed by a $\mathsf{scan}$ of $S_{t+1}$).
Next, the adversary increments $t$. All the invariants continue to hold,
by Lemma~\ref{lem:invariants-hold}.
Finally, the adversary defines $\delta(v) = y$ and
returns a $Q$-only schedule from $C$ that results in process $p_i$  being in state $v$.
This adds vertex $v$ to $\Tt{y}$.
Invariants  \ref{inv:definedr}, \ref{inv:definedt}, \ref{inv:defineds}, and  \ref{inv:none} continue to hold.

Since $w$ is terminated, $w$ is adjacent to every vertex in $\chi(\sigma,\Delta) \subseteq \St$, including $v$.
It follows that, if $\Tt{a}$ is non-empty, then the distance
between $v$ and $\Tt{a}$ in $\St$ is at least 3
and the distance
between $v$ and $\Rt{b,b+1}$ in $\St$ is at least 2.
Thus, invariants \ref{inv:outputsfar} and \ref{inv:contains} hold.
By invariant \ref{inv:none},
every vertex in $\Xt{a}$ is either in 
$\Rt{b,b+1}$, where $a \neq b,b+1$,
or at distance at most one from
$\Tt{b}$, for some $b \neq a$.
Since the distance between $v$ and $\Rt{b,b+1}$ in $\St$ is at least 2 and
the distance between $v$ and $\Tt{b}$ in $\St$ is at least 3,
the distance between $v$ and $\Xt{a}$ is at least 2. Thus $v \not\in \Xt{y}$, so defining $\Delta(v) = y$
does not contradict the result of any previous output query.

\textbf{Case 2}:  \emph{There is a vertex $u \in  \mathbb{U}$ such that no vertex in $\mathbb{A}_u$ is terminated.}
In this case, the adversary defines
$\delta(v) = \bot$ for each vertex $v \in \St$ where $\Delta(v)$ is undefined and subdivides $\St$ to construct $\Sto$. 

Since no vertex in $\mathbb{A}_u$ is terminated and $\mathbb{A}_u$ contains all vertices at distance at most 1 from
$u$ in $\St$, the distance from $u$ to $\Tt{a}$ in $\St$ is at least 2 for all inputs $a$. 
Moreover, since $u \not\in \Rt{y+1,y+2} \cup \Rt{y+2,y+3}$, the distance from $u$ to $\Rt{y+1,y+2} \cup \Rt{y+2,y+3}$ in $\St$ is at least 1.

Let $p_i$ be the process whose state is $u$.
Since $u \in \mathbb{U} \subseteq \mathbb{Q}$, process $p_i \in Q$.
In the configuration represented by any $n$-vertex clique in $\St$, all components of $S_{t+1}$ are $-$.
Thus, the state of $p_i$ after it takes 2 steps starting from any configuration
in $\St$ that contains $u$ is the same. Let $v$ be the vertex in $\Sto$ that 
represents this state.

Consider any vertex $v'$ adjacent to $v$ in $\Sto$ and let $p_j$ be the process whose state is $v'$.
Then there exists 
a configuration in which $p_i$ is in state $v$ and $p_j$ is in state $v'$.
In state $v$, $p_i$ has not seen the $\mathsf{update}$ to $S_{t+1}$ by $p_j$,
so, in state $v'$, $p_j$ has not seen the $\mathsf{update}$ to $S_{t+1}$ by $p_i$. Since $u \not\in \Rt{y+1,y+2} \cup \Rt{y+2,y+3}$,
$v \not\in \Rto{y+1,y+2} \cup \Rto{y+2,y+3}$ and, hence,
$v' \not\in \Rto{y+1,y+2} \cup \Rto{y+2,y+3}$.
Thus the distance from $v$ to $\Rto{y+1,y+2} \cup \Rto{y+2,y+3}$ in  $\Sto$
is at least 2.

Consider any vertex $v''$ adjacent to $v'$ in $\Sto$. Then there exists an $n$-vertex clique $\sigma'$ such that $v',v'' \in \chi(\sigma',\Delta)$.
Since $v' \in \chi(\sigma',\Delta)$, the definition of subdivision implies that $u \in \sigma'$. Hence $\sigma' \subseteq \mathbb{A}_u$ and $v',v'' \in \chi(\mathbb{A}_u,\Delta)$.
Since no vertex in $\mathbb{A}_u$ is terminated, the distance between
$\mathbb{A}_u$ and $\Tt{a}
$ is at least 1 for every input $a$.
By Lemma~\ref{lem:dist}, the distance between $\chi(\mathbb{A}_u,\Delta)$
and $\chi(\Tt{a},\Delta)$ is at least 1.
By Proposition~\ref{prop:subdiv}, $\Tto{a} = \chi(\Tt{a},\Delta)$.
Thus, $v',v'' \not\in \Tto{a}$.
This implies that the distance from $v$ to $\Tto{a}$ in $\Sto$ is at least 3.

Now the adversary increments $t$, so all the invariants continue to hold,
by Lemma~\ref{lem:invariants-hold}. Finally,
the adversary defines $\Delta(v) = y$ and returns a $Q$-only schedule from $C$
that results in process $p_i$ being in state $v$.
This adds vertex $v$ to $\Tt{y}$.
Invariants  \ref{inv:definedr}, \ref{inv:definedt}, \ref{inv:defineds}, and  \ref{inv:none} continue to hold.
Since 
the distance from $v$ to $\Tt{a}$ is at least 3
and the distance from $v$ to $\Rt{y+1,y+2} \cup \Rt{y+2,y+3}$ in $\St$ is at least 2,
invariants \ref{inv:outputsfar} and \ref{inv:contains} hold.
As in the previous case,
defining $\Delta(v) = y$ does not contradict the result of any previous output query.

\textbf{Case 3.} \emph{For every simplex $\sigma \subseteq \mathbb{U}$, some vertex $w \in \mathbb{A}_\sigma$ has terminated with an output other than $y$}.
In this case, the adversary returns $\none$ and adds $\mathbb{U}$ to $\Xt{y}$.
Since each vertex in $\mathbb{U}$ is adjacent to some vertex that has terminated with an output other than $y$, invariant \ref{inv:none} holds.
Since $t$ and $\Delta$ are not changed, invariants  
\ref{inv:definedr}, \ref{inv:definedt}, and \ref{inv:defineds} continue to hold.
Since $\Tt{a}$ and $\Rt{a,a+1}$ are not changed for any input $a$,  
invariants \ref{inv:outputsfar} and \ref{inv:contains} still hold.

\subsection{The prover does not win in phase 1} 
Suppose that the invariants all hold before and after each query made by the prover in phase 1. By invariant~\ref{inv:outputsfar}, at most one value is output in any configuration reached by the prover. Moreover, by invariant~\ref{inv:contains}, if a process outputs a value $a$, then there does not exist $b$ such that $a \neq b,b+1$ and all the inputs are endpoints of the
edge $\{b, b+1\}$. Hence, the prover cannot win in phase~1 by showing that the protocol violates agreement or validity. It remains to  show that the prover cannot win by constructing an infinite chain of queries in phase 1.

\begin{lemma}
\label{lem:finitechains}
Every chain of queries in phase 1 is finite.
\end{lemma}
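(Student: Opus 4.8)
The plan is to show that every chain of queries in phase 1 must terminate, by exhibiting a potential function on configurations that strictly decreases along any chain of queries and is bounded. The natural candidate is built from two quantities: first, the subdivision index — recall that by invariant~\ref{inv:defineds}, a configuration reached by the prover in which some process has taken $2r$ steps has its state on a vertex of $\Sr$, and the adversary only ever increments $t$ (performs subdivisions) finitely often per \emph{response}, but within a single chain of queries the adversary may subdivide repeatedly. So the first thing I would do is argue that along a \emph{single} chain of queries, only finitely many subdivisions can occur. The key observation is that a subdivision is triggered only when the adversary must assign an output value $\Delta(s') = a$ to a fresh active vertex and the distance conditions (invariants~\ref{inv:outputsfar} and~\ref{inv:contains}) would be violated; but subdivision strictly increases the relevant distances (Lemma~\ref{lem:invariants-hold}), while the structure of $\mathbb H$-analogue here — the $4$-cycle input graph — caps how large those distances can be made to matter, because a process that has seen values spanning the whole cycle can safely output anything. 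Concretely, a process whose state has seen at least three of the four input values, or an active vertex adjacent to a terminated one, can be assigned an output without forcing a subdivision.

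The main technical step, then, is a ``no infinite progress without subdivision'' argument: between two consecutive subdivisions, the chain of queries advances processes through at most two steps each within the current $\Sr$ to $\Sro$, and the number of processes is fixed at $n$, and each process can be queried only finitely often before either terminating (so it leaves the chain) or forcing the adversary into a situation where it assigns an output. The crux is that the adversary's strategy in the non-output-query branch (the ``$2r+1$ steps'' case) returns a \emph{new} vertex $s'$ in $\Sro$ and either $\Delta(s')$ is already defined (the process may then continue, but now it is one subdivision-level deeper, i.e.\ $r$ has increased) or the adversary defines $\Delta(s')$, possibly after a subdivision, which \emph{terminates} that process. So each process, after being queried, either becomes terminated or strictly increases its step count toward the current frontier $2t$; since $t$ changes only finitely often along a chain (this is the sub-lemma I flagged above) and a process at step $2t$ that is queried again forces a response at level $t$ — which either terminates it or increments $t$ — the chain cannot be infinite.

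I would organize the proof as follows. First, define the potential of a configuration $C$ reached by the prover to be the tuple $(t_C, \text{(number of active processes in } C), \text{(sum over active processes of } 2t_C - \text{(step count)}))$, ordered lexicographically with $t_C$ treated appropriately — actually, since $t$ can only increase, I would instead fix attention on the suffix of the chain after the last subdivision and show that suffix is finite, then argue there are only finitely many subdivisions. Second, prove the sub-lemma: along any chain, the adversary subdivides only finitely often. Here the obstacle is that naively a prover could query a process, receive a fresh vertex $s'$ at level $r<t$ with $\Delta(s')$ undefined, the adversary assigns it an output, then the prover queries a \emph{different} process to generate yet another fresh vertex, and so on — so I need to argue that generating fresh undefined vertices at level $t-1$ cannot continue indefinitely without the pool of $n$-vertex cliques of $\St$ being exhausted. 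Since $\St$ is a \emph{finite} complex (it is a finite number of subdivisions of the finite input graph $\mathbb S_0$), the set of vertices where $\Delta$ is undefined is finite, so after finitely many assignments, either all of $\St$ has $\Delta$ defined (and no further queries at level $< t$ can produce undefined vertices) or a subdivision happens; and a subdivision happens only under the distance-violation condition, which, by Lemma~\ref{lem:invariants-hold}, makes genuine progress. The hard part will be making precise that ``genuine progress'' bounds the number of subdivisions — I expect to need that the distances involved are bounded by a function of $c=4$ (the diameter of the cycle), so that once the distance between every pair $\Tt{a},\Tt{b}$ and every pair $\Tt{a},\Rt{b,b+1}$ exceeds the relevant threshold ($3$ and $2$ respectively), no future output assignment can ever violate invariants~\ref{inv:outputsfar}–\ref{inv:contains}, hence no future subdivision is triggered by the non-output-query branch; a symmetric finiteness argument handles the subdivisions triggered in Cases~1 and~2 of output queries, of which there is at most one per output query and the prover makes only finitely many output queries per phase by definition.

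Thus, combining: finitely many subdivisions along the chain $\Rightarrow$ the chain has a suffix entirely within a fixed $\St$ with $\Delta$ henceforth only extended by terminating-assignments (never $\bot$-assignments, since those accompany subdivisions) $\Rightarrow$ along that suffix each query either terminates a process or moves it strictly forward in step-count toward the bounded frontier $2t$, and a process at the frontier that is queried must be assigned an output and terminate $\Rightarrow$ the number of active processes is non-increasing and strictly decreases after at most $2t \cdot n$ queries $\Rightarrow$ the suffix, and hence the chain, is finite. I would present it in roughly this order: (1) the finite-subdivision sub-lemma; (2) the per-process progress claim; (3) the lexicographic-descent conclusion.
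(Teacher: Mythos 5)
Your second half—the per-process progress claim that, once the subdivision level is frozen, each query either terminates a process or strictly advances it toward the bounded frontier $2t$, so the chain dies after finitely many further queries—is sound and is essentially how the paper closes the argument. The genuine gap is in your sub-lemma that only finitely many subdivisions occur along a chain, and unfortunately that is where all the content of the lemma lives. Your proposed mechanism is that once the distance between every pair $\Tt{a},\Tt{b}$ and every pair $\Tt{a},\Rt{b,b+1}$ exceeds the thresholds $3$ and $2$, no further assignment can violate the invariants and hence no further subdivision is triggered. But those separations are exactly what invariants~\ref{inv:outputsfar} and~\ref{inv:contains} already guarantee at every moment (and they grow without bound under repeated subdivision by Lemma~\ref{lem:invariants-hold}), yet subdivisions keep being triggered: the trigger is not that the tracked sets are too close to \emph{each other}, but that the fresh vertex $s'$ is simultaneously too close to several mutually incompatible regions—for instance within distance $1$ of both $\Rt{0,1}$ and $\Rt{2,3}$ (forcing $a\in\{0,1\}\cap\{2,3\}=\emptyset$), or within distance $2$ of $\Tt{b}$ while within distance $1$ of some $\Rt{b',b'+1}$ with $b\neq b',b'+1$. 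No amount of mutual separation among the $\Tt{\cdot}$ and $\Rt{\cdot}$ sets excludes such configurations, since the $\Rt{b,b+1}$ are permanently nonempty. Worse, the sub-lemma is essentially equivalent to the lemma itself: in an infinite chain some process is queried infinitely often without terminating, hence repeatedly reaches the frontier, and each time the adversary declines to assign it an output it must subdivide—so an infinite chain forces infinitely many subdivisions, and the sub-lemma cannot be established by any argument that does not already show a frequently-scheduled process must eventually be terminated.

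That last point is the missing idea, and it is where the paper's proof does its work. Arguing by contradiction, let $P$ be the set of processes scheduled infinitely often and let $t_0$ be the level at which the tail of the chain begins; no process outside $P$ ever accesses $S_r$ for $r>t_0$, so the terminated sets are frozen while their separations inflate under subdivision. When a process $q\in P$ with input $a$ first scans $S_{t_0+2}$, either its vertex is within distance $2$ of some $\Tt{b}$—in which case the inflated separations make $b$ a legal assignment and $q$ terminates, a contradiction—or it is at distance at least $3$ from every terminated vertex. In the latter case, because $q$ performed its $\mathsf{update}$ to $S_{t_0+2}$ before any other process scanned it, \emph{every} vertex of every clique reachable from that configuration has seen $a$, so those cliques are at positive distance from $\Rt{a+1,a+2}\cup\Rt{a+2,a+3}$; one further subdivision (via Lemma~\ref{lem:dist}) then pushes all relevant distances past the thresholds, forcing the adversary to assign output $a$ at level $t_0+3$—again a contradiction. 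Your proposal contains no analogue of this ``the process's own input is now seen by all reachable neighbours'' step, which is what certifies that a surviving process drifts away from the forbidden $\Rt{\cdot}$ regions; without it there is no way to show the adversary ever runs out of reasons to subdivide.
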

\begin{proof}
Assume, for a contradiction, that there is an infinite chain of queries, $(C_j,q_j)$, for $j \geq 0$.
Let $P$ be the set of processes that are scheduled infinitely often in this chain. 
Then there exists $j_0 \geq 0$ such that, for all $j \geq j_0$, $q_j \in P$. 
Let $t_0 \geq 1$ be the value of $t$ held by the adversary immediately prior to query $(C_{j_0},q_{j_0})$. 
By invariant~\ref{inv:defineds},  every process has taken fewer than $2t_0+2$ steps in the
schedule to reach configuration $C_{j_0}$, so no process has accessed $S_r$ in this execution,
for all $r \geq t_0+1$.
Thus, during the chain of queries, only processes in $P$ access $S_r$ for $r \geq t_0+1$.
Since all the processes in $P$ eventually are scheduled infinitely often in this chain,
the adversary eventually defines $\Delta(v) = \bot$
for each vertex $v \in \Sr$ where $\Delta(v)$ is undefined
subdivides $\Sr$ to construct $\mathbb{G}_{r+1}$, and increments $t$ from $r$ to $r+1$, for all $r \geq t_0$.

Since no process in $P$ ever terminates, $\Tr{a} = \mathbb{T}_{t_0}{a}$, for all inputs $a$ and all $r > t_0$. 
By invariant~\ref{inv:outputsfar},  if $\mathbb{T}_{t_0}{a}$ and $\mathbb{T}_{t_0}{b}$ are non-empty
and $a\neq b$,  the distance between
$\mathbb{T}_{t_0}(a)$ and $\mathbb{T}_{t_0}(b)$ in $\mathbb{G}_{t_0}$ is at least 3
and, so, by Lemma~\ref{lem:invariants-hold}, the distance between
$\mathbb{T}_{t_0+2}(a)$ and $\mathbb{T}_{t_0}{b}$ in $\mathbb{G}_{t_0+2}$ is at least 5.
Similarly, by invariant \ref{inv:contains} and Lemma~\ref{lem:invariants-hold},
if $\mathbb{T}_{t_0}(a)$ is non-empty, the distance between
$\mathbb{T}_{t_0+2}(a)$ and $\mathbb{R}_{t_0+2}(a+1,a+2) \cup \mathbb{R}_{t_0+2}(a+2,a+3)$
in $\mathbb{G}_{t_0+2}$ is at least 4.

Consider the first $j_1 \geq j_0$ such that process $q_{j_1}$ is poised to $\mathsf{scan}$ the snapshot object $S_{t_0+2}$ in $C_{j_1}$. By invariant~\ref{inv:defineds}, the state of process $q_{j_1}$ in configuration
$C_{j_1+1}= C_{j_1}q_{j_1}$ is a vertex $v$ in $\mathbb{G}_{t_0+2}$.
If there is some input $a$ such that $\mathbb{T}_{t_0+2}(a)$ is non-empty and
the distance from $v$ to $\mathbb{T}_{t_0+2}(a)$ in $\mathbb{G}_{t_0+2}$
is at most 2.
Then the distance from $v$ to $\mathbb{T}_{t_0+2}{b}$ in $\mathbb{G}_{t_0+2}$ is at least 3
for all $b \neq a$ such that $\mathbb{T}_{t_0+2}(b)$ is non-empty
and the distance from $v$ to
$\mathbb{R}_{t_0+2}(a+1,a+2) \cup \mathbb{R}_{t_0+2}(a+2,a+3)$ in $\mathbb{G}_{t_0+2}$ is at least 2.
According to its strategy for phase 1, the adversary defines $\Delta(v) = a$ after query $(C_{j_1},q_{j_1})$.
This contradicts the definition of $P$, since process $q_{j_1}$ terminates.
Thus, the distance from $v$ to any terminated vertex in $\mathbb{G}_{t_0+2}$ is at least 3.
Consider any $n$-vertex clique $\sigma$ in $\mathbb{G}_{t_0+2}$ that represents a configuration reachable from
configuration $C_{j_1+1}$. Since $v$ is a vertex in $\sigma$, the distance from $\sigma$
to any terminated vertex in $\mathbb{G}_{t_0+2}$ is at least 2. In particular, all vertices in $\sigma$ are
active.
Let $a$ be the input of process $q_{j_1}$. Since $q_{j_1}$ performed its $\mathsf{update}$ to $S_{t_0+2}$
prior to configuration $C_{j_1+1}$ and no other process has performed its $\mathsf{scan}$ of $S_{t_0+2}$
prior to $C_{j_1+1}$, all vertices in $\sigma$ have seen $a$. Thus the distance in $\mathbb{G}_{t_0+2}$
between $\sigma$ and $\mathbb{R}_{t_0+2}(a+1,a+2) \cup \mathbb{R}_{t_0+2}(a+2,a+3)$ is at least 1.
Since the distance from $\sigma$
to any terminated vertex in $\mathbb{G}_{t_0+2}$ is at least 2, the first edge on any path from
$\sigma$ to a terminated vertex or a vertex in $\mathbb{R}_{t_0+2}(a+1,a+2) \cup \mathbb{R}_{t_0+2}(a+2,a+3)$
is between active vertices. Therefore, by Lemma~\ref{lem:dist} and Proposition~\ref{prop:subdiv},
the distance in  $\mathbb{G}_{t_0+3}$ between $\chi(\sigma,\Delta)$ and $\chi(\mathbb{T}_{t_0+2}(b), \Delta) = \mathbb{T}_{t_0+3}(b)$ is at least 3 for every input $b$ such that $\mathbb{T}_{t_0+2}(b)$ is non-empty.
Similarly, the distance in  $\mathbb{G}_{t_0+3}$ between $\chi(\sigma,\Delta)$ and
$\chi(\mathbb{R}_{t_0+2}(a+1,a+2) \cup \mathbb{R}_{t_0+2}(a+2,a+3), \Delta) =
\mathbb{R}_{t_0+3}(a+1,a+2) \cup \mathbb{R}_{t_0+3}(a+2,a+3)$ is at least 2.

Consider the first $j_2 > j_1$ such that process $q_{j_2}$ is poised to $\mathsf{scan}$ the snapshot object $S_{t_0+3}$ in $C_{j_2}$. The states of $q_{j_2}$ in $C_{j_2+1}=C_{j_2}q_{j_2}$ is a vertex in $\chi(\sigma,\Delta)$.
According to its strategy for phase 1, the adversary terminates this vertex after query $(C_{j_2},q_{j_2})$. 
This contradicts the definition of $P$.
\end{proof}

\subsection{The adversarial strategy for later phases}
Since the prover does not win in phase 1, it must eventually choose a configuration $C \in \config{A}'(1)$ at the end of phase 1. The adversary will update $\Delta$ one final time. Afterwards, it can answer all future queries by the prover. The prover will eventually be forced to choose a terminal configuration at the end of some future phase and, consequently, will lose in the next phase.

Assume $C$ is a configuration reached by a non-empty schedule $\alpha(2)$ from an initial configuration $C_0 \in \config{A}(1)$. Let $p$ be the first  process in $\alpha(2)$ and let $a$ be its input in configuration $C_0$. 
Let $\mathbb{F}$ denote the union of all $n$-vertex cliques in $\mathbb{G}_1$ that represent configurations reachable by 
a 1-round schedule beginning with $p$ from configuration $C_0$ or an initial configuration that only
differs from $C_0$ by the states of processes that do not occur in $\alpha(2)$.
Since $p$ performs its $\mathsf{update}$ to $S_1$ before any process performs its $\mathsf{scan}$ of $S_1$ in all such schedules,
every vertex in $\mathbb{F}$ has seen $a$.
Thus the distance between $\mathbb{F}$ and 
$\mathbb{R}_{1}(a+1,a+2) \cup \mathbb{R}_{1}(a+2,a+3)$ in $\mathbb{G}_1$ is at least 1.

The adversary defines $\Delta(v) = \bot$ for each vertex $v$ in $\mathbb{G}_t$ where $\Delta(v)$ is undefined, subdivides $\St$ to construct $\Sto$, and increments $t$.
Since all the invariants hold at the end of phase 1, \lemmaref{lem:invariants-hold}
says that they still hold and,  for any two inputs $b \neq b'$ such that  $\Tt{b}$ and $\Tt{b'}$ are non-empty,
the distance between $\Tt{b}$ and $\Tt{b'}$ in $\St$ is at least 4.
In particular, a vertex $v$ in $\St$ is adjacent to a vertex $w \in \Tt{b}$ for at most one input $b$.
Let $\mathbb{F}' = \chi^{t-1}(\mathbb{F},\Delta) \subseteq \St$.
Applying \lemmaref{lem:invariants-hold} $t-1$ times,
it follows that the distance between $\mathbb{F}'$ and $\mathbb{R}_{t}(a+1,a+2) \cup \mathbb{R}_{t}(a+2,a+3)$ in $\St$ is at least 1.

Invariant~\ref{inv:definedt} says that no vertex in $\St$ has $\delta(v) = \bot$.
The adversary has not yet terminated any additional vertices in $\St$,
so, by Proposition~\ref{prop:subdiv},
$\Tt{b} = \mathbb{T}_{t-1}(b)$ for all input values $b$.
For every vertex $v \in \mathbb{F}'$ for which $\Delta(v)$ is undefined, the adversary defines $\Delta(v)$ as follows.
First, for each input value $b$
and each vertex $v \in \mathbb{F}'$  that is distance 1 from $\Tt{b}$ in $\St$ and 
such that $\Delta(v)$ is undefined,
the adversary sets $\Delta(v) = b$. 
By invariant~\ref{inv:contains}, the distance between $\Tt{b}$ and $\mathbb{R}_{t}(b+1,b+2) \cup \mathbb{R}_{t}(b+2,b+3)$ in $\St$ is at least 2. Thus setting $\Delta(v) = b$ does not violate validity.
Since each vertex in $\Xt{b}$ is at least distance 3 from any vertex in $\Tt{b}$,
this assignment does not contradict any output query that returned $\none$.
Moreover, the distance between any two vertices in $\mathbb{F}'$
that have output different values is still at least 2.
Thus, in each $n$-vertex simplex in $\St$, all the terminated vertices have output the same value. 

Finally,
for each vertex $v \in \mathbb{F}'$ where $\Delta(v)$ is still undefined, the adversary sets $\Delta(v) = a$. 
Validity is preserved, since no vertex in $\mathbb{F}'$ is in $\mathbb{R}_{t}(a+1,a+2) \cup \mathbb{R}_{t}(a+2,a+3)$.
Agreement is not violated, since at most two different values are output by the vertices
in each $n$-vertex simplex in $\St$.

In phases $\varphi \geq 2$, the prover can only query configurations reachable from some configuration in $\config{A}(2)$.
By definition, $\config{A}(2)$ is the set of all configurations that are reached by performing $\alpha(2)$ from initial configurations that only differ from $C_0$ by the states of processes that do not occur in $\alpha(2)$.
It follows that, for any process $q$ and 
any extension $\alpha'$ of $\alpha(2)$ from $C' \in \config{A}(2)$, $q$ appears at most $2t$ times in $\alpha(2)\alpha'$ before its state is represented by a vertex in $\mathbb{F}'$. By construction, every vertex in $\mathbb{F}'$ has terminated. 
Thus, eventually, the prover chooses a configuration at the end of some phase in which every process has terminated.
The prover loses in the next phase.

\section{Upper bounds for asynchronous systems}\label{sec:upper}

In this section, we provide upper bounds  
for graphical approximate agreement. 
We give
\begin{itemize}[noitemsep]
\item a 1-resilient algorithm on general graphs for $n \ge 2$ processes (\sectionref{sec:generalgraphs}), and
\item a wait-free algorithm on any nicely bridged graph for $n \ge 2$ processes (\sectionref{sec:bridgedgraphs}).
\end{itemize}

Let $G = (V,E)$ be a connected graph.
For any set $U \subseteq V$, the subgraph of $G$ induced by $U$ is the graph $G[U] = (U,F)$, where $F = \{ e \in E : e \subseteq U \}$. 
The distance between two vertices $u$ and $v$ in $G$ is denoted by $d(u,v)$.
The \emph{eccentricity} $\epsilon(v)$ of a node $v \in V$ is $\max \{ d(u,v) :  u \in V \}$. The \emph{diameter} of $G$ is $\diam(G) = \max \{ \epsilon(v) : v \in V \}$ and the \emph{radius} $\radius(G)$ of $G$ is $\min \{ \epsilon(v) : v \in V \}$. 
For any nonempty set $U \subseteq V$, let $D(U) = \max \{ d(u,v) : u,v \in U \}$. In particular, $\diam(G) = D(V)$. 

\subsection{A 1-resilient algorithm for general graphs}
\label{sec:generalgraphs}
Let $G = (V,E)$ be an arbitrary connected graph, for example, a $c$-cycle for some $c \ge 4$. We show that we can solve the approximate agreement problem on $G$ assuming at most one process crashes. Let $\diam(G)$ denote the diameter of $G$.
The intuitive idea of the algorithm is simple:
First use 2-set agreement to reduce the number of input values to at most 2
and then run approximate agreement on a path for $\lceil \log_2\diam(G) \rceil$ steps.

There is an easy 1-resilient algorithm for 2-set agreement.
However, the second step is not immediate, as there may be many paths of $G$ on which the approximate agreement algorithm could be run. However, since all processes know the graph $G$, we can avoid this difficulty by fixing in advance a shortest path between every pair of vertices. The rest of this section is dedicated to proving the following result.

\begin{theorem}\label{thm:1-resilient-alg}
  Let $G = (V,E)$ be a connected graph. Then for all $n \ge 2$, there exists a 1-resilient algorithm which solves approximate agreement on $G$.
\end{theorem}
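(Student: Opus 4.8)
The plan is to follow the intuitive two-phase idea sketched just before the statement, making each phase precise.

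\textbf{Phase 1: reduce to at most two input values via 1-resilient 2-set agreement.} First I would recall that there is a simple 1-resilient algorithm for 2-set agreement among $n$ processes: each process writes its input to a shared single-writer register, then scans; with at most one crash it sees at least $n-1$ of the $n$ inputs, and a deterministic rule (e.g.\ output the input of the process with smallest id among those seen, or the most-frequent value, breaking ties by id) guarantees the set of outputs has size at most $2$ while every output is some process's input. Running this, each process obtains a value $z_i \in V$ which is the input $x_j$ of some process $p_j$, and the multiset $\{z_i\}$ contains at most two distinct vertices of $G$.

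\textbf{Phase 2: approximate agreement on a fixed path.} Since every process knows $G$, the processes fix, once and for all, a canonical shortest path $P(u,v)$ in $G$ between every ordered pair $u,v \in V$ (say, the lexicographically smallest shortest path under a fixed vertex order). If all $z_i$ are equal, every process outputs $z_i$ and we are done (validity holds trivially: the common value is some input). Otherwise there are exactly two distinct values, call them $u$ and $v$, with $u$ lexicographically smaller; all processes can agree on the ordered pair $(u,v)$ and hence on the path $Q = P(u,v)$, which is a shortest path in $G$ of length $d(u,v) \le \diam(G)$. Now each process with $z_i = u$ takes position $0$ on $Q$ and each with $z_i = v$ takes position $d(u,v)$, and they run the wait-free $O(\log D)$-step approximate agreement algorithm on a path of Attiya--Lynch--Shavit / Schenk (with $D = d(u,v) \le \diam(G)$, so $\lceil \log_2 \diam(G)\rceil$ steps suffice), outputting the vertex of $Q$ at the resulting integer position.

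\textbf{Correctness.} Agreement: the path algorithm guarantees output positions within distance $1$ on $Q$, and consecutive vertices of $Q$ are adjacent in $G$, so the outputs of distinct processes are equal or adjacent in $G$. Validity: every output vertex lies on $Q = P(u,v)$, which is a shortest path in $G$ between $u$ and $v$; moreover $u$ and $v$ are each the input of some process (they came out of the 2-set agreement phase), so every output lies on a shortest path between two input values. In the degenerate all-equal case the common output is literally an input. Liveness/resilience: Phase 1 is 1-resilient and Phase 2 is wait-free, so the composition is 1-resilient; note that when $n = 2$ a single crash leaves one process, which is trivially wait-free, matching the remark after the theorem.

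\textbf{Main obstacle.} The routine parts are standard; the one point requiring care is the handoff between phases. A process that is slow in Phase 1 may start Phase 2 only after others have nearly finished, and different processes must deterministically agree on \emph{which} path to use despite seeing the 2-set-agreement outputs at different times. This is handled by the fact that the path depends only on the \emph{set} $\{u,v\}$ of Phase-1 outputs (canonically ordered), which is determined and size-$\le 2$ regardless of scheduling, so any two processes that proceed to Phase 2 pick the same $Q$; and the path approximate agreement algorithm is itself robust to asynchrony. One must also check that a process whose own Phase-1 output is $u$ but which only learns later that $v$ also occurred still behaves consistently — but since it uses position $0$ for $u$ on the \emph{agreed} path $Q$, and $Q$ is the same for everyone, this is fine. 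The remaining details (specifying the 2-set agreement rule, the canonical path selection, and invoking the cited path algorithm as a black box) are straightforward.
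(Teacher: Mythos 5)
Your two-phase decomposition (1-resilient 2-set agreement, then $\lceil \log_2 \diam(G)\rceil$ halving steps on shortest paths) is exactly the paper's, and Phase~1 is fine. The gap is in the handoff that you flag yourself but do not actually close: you assert that ``all processes can agree on the ordered pair $(u,v)$ and hence on the path $Q = P(u,v)$,'' but no process can locally compute this pair. After Phase~1 a process knows only its \emph{own} output $z_i$; it cannot distinguish the execution in which every process obtained $u$ from the one in which some slow process obtained $v$. So a process holding $u$ does not know $Q$, cannot tell whether it is in the degenerate all-equal case, and cannot later convert ``position $0$ on $Q$'' back to a vertex in a way consistent with the processes that did learn $v$. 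Your obstacle paragraph presupposes that $Q$ ``is the same for everyone'' and is known to everyone, which is precisely the point that needs an argument.

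The paper sidesteps this by never fixing a global path. It pre-selects, for every pair $u,v$, a midpoint $g(u,v)$ of a canonical shortest $u$--$v$ path, and in each of $T=\lceil\log_2\diam(G)\rceil$ rounds each process writes its current vertex to a fresh snapshot object, waits until at most one component is unwritten, and replaces its vertex by $\psi$ of the \emph{set it saw}, where $\psi(\{u\})=u$ and $\psi(\{u,v\})=g(u,v)$. Atomicity of the scans makes the views nested, so at most two distinct vertices survive each round and their distance at least halves (Lemma~\ref{lemma:path-distances-shrink}); validity follows because each new vertex lies on a shortest path between the previous two. Note that the surviving pair, and hence the canonical shortest path whose midpoint is taken, can change from round to round, so the computation is not confined to a single path $Q$ --- another reason the ``fixed path plus integer positions'' framing does not quite fit. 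Your approach could be repaired by having processes disseminate the pair they have seen along with their positions, but spelling that out essentially reproduces the paper's iterated-midpoint algorithm.
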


\paragraph{Solving 2-set agreement.}
Fix a total order on $V$. For any nonempty subset $X \subseteq V$, let $\min(X)$ be the smallest element of $X$ under this order.
Let $x_i(0) \in V$ be the input of process $p_i$ and  let $T = \lceil \log_2\diam(G) \rceil$.
We will use a single-writer atomic snapshot object, $S_0$, whose components are initialised with the special value~$-$.   
Each process $p_i$:
\begin{itemize}[noitemsep]
\item performs $\mathsf{update}$ on the $i$th component of the snapshot object $S_0$, setting it to the value~$x_i(0)$,
\item repeatedly performs $\mathsf{scan}$ on the snapshot object $S_0$ until at least $n-1$ components have values other than $-$,
\item lets $X_i(0)$ be the \emph{set} of vertices returned by its last $\mathsf{scan}$, and
\item lets $x_i(1) = \min(X_i(0))$.
\end{itemize}

\paragraph{Approximate agreement on a path.}
For any two vertices $u,v \in V$, fix a shortest path between $u$ and $v$ in $G$ and let $g(u,v)$ be a fixed node
in the center of this path.
Then $d(u,g(u,v)), d(v,g(u,v)) \leq \lceil d(u,v)/2 \rceil$.
For any nonempty set $X \subseteq V$ of size at most two, define
$\psi(X)=u$ if $X = \{u\}$ and $\psi(X)=g(u,v)$ if $X=\{u,v\}$.
We will use a sequence $S_1, \ldots, S_{T}$ of single-writer atomic snapshot objects, whose components are initialised with the special value~$-$. 
For $t=1,\ldots, T$, each process $p_i$:
\begin{itemize}[noitemsep]
\item performs $\mathsf{update}$ on component $i$ of the snapshot object $S_t$, setting it to the vertex~$x_i(t)$,
\item repeatedly performs $\mathsf{scan}$ on the snapshot object $S_t$ until at least $n-1$ components have values other than $-$,
\item lets $X_i(t)$ be the \emph{set} of vertices returned by its last $\mathsf{scan}$, and
\item lets $x_i(t+1) = \psi(X_i(t))$.
\end{itemize}
The output of process $p_i$ is the value $x_i(T+1)$.

\paragraph{Correctness.}
Let $0 \le t \le T$.
If process $p_i$ crashes before 
computing $X_i(t)$,
we define $X_i(t)$ to be the empty set.
Observe that each process $p_i$ first performs $\mathsf{update}$ on $S_t$ with $x_i(t)$ before performing $\mathsf{scan}$ on $S_t$.
Thus, if $p_i$ computes $X_i(t)$, then $X_i(t)$ is nonempty. 

Each component of $S_t$ is $\mathsf{update}$d at most once.
Since $\mathsf{scan}$ is an atomic operation, the set of vertices returned in a $\mathsf{scan}$ is a subset of the set of vertices returned in any later $\mathsf{scan}$. 
Therefore, $X_i(t) \subseteq X_j(t)$ or $X_j(t) \subseteq X_i(t)$ for any $i$ and~$j$.
Each process continues performing $\mathsf{scan}$ until it crashes or $S_t$ contains at most one $-$.
Thus $\{ X_j(t) : 0 \leq j \leq n-1\}$ contains at most two nonempty sets.
Since $x_j(t+1)$ is a function of $X_j(t)$,
it follows that $\{ x_j(t+1) : X_j(t) \neq \emptyset\}$ contains at most two different vertices. 
These are the only values that are used to $\mathsf{update}$ components of $S_{t+1}$, so
$X_i(t+1) \subseteq \{ x_j(t+1) : X_j(t) \neq \emptyset \}$.
Hence, $|X_i(t+1)| \leq 2$ and, if $X_i(t+1) \neq \emptyset$, then $x_i(t+2)= \psi(X_i(t+1))$ is defined.

Let $X(t) =  \bigcup \{ X_i(t) : 0 \leq i < n\}$.
We use $X(T+1)$ to denote the set of output values.
Note that $X(t) \subseteq V$ for $0 \leq t \leq T+1$ and $X(0)$ is a subset of the input values.
If $t \geq 1$, then $X(t) \subseteq \{x_j(t) : X_j(t-1) \neq \emptyset \}$, so $|X(t)| \leq 2$.

\begin{restatable}{lemma}{shrink}
\label{lemma:path-distances-shrink}
Let $1 \le t \le T$. Then $D(X(t+1)) \le \lceil D(X(t)) / 2 \rceil$.
\end{restatable}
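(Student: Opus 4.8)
The plan is to exploit two facts already established just before the lemma: the sets $X_0(t),\dots,X_{n-1}(t)$ returned by the $\mathsf{scan}$s of $S_t$ form a chain under inclusion (by atomicity of $\mathsf{scan}$), and $X(t)$ has at most two elements for $t \ge 1$. Combining these, I would first note that the nonempty sets among the $X_j(t)$ form a chain whose largest member is $X(t)$ itself, since the empty sets contribute nothing to the union $X(t) = \bigcup_j X_j(t)$, and that $\psi$ is defined on every nonempty $X_j(t)$ because $|X_j(t)| \le 2$. Also recall $X(t+1) \subseteq \{x_j(t+1) : X_j(t) \neq \emptyset\}$ and $x_j(t+1) = \psi(X_j(t))$.

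Next I split into cases on $|X(t)|$. If $|X(t)| \le 1$, then every nonempty $X_j(t)$ equals $X(t)$, so all the values $x_j(t+1) = \psi(X_j(t))$ used to update $S_{t+1}$ coincide; hence $X(t+1)$ has at most one element and $D(X(t+1)) = 0 \le \lceil D(X(t))/2 \rceil$. If $|X(t)| = 2$, write $X(t) = \{u,v\}$ with $u \neq v$, so $D(X(t)) = d(u,v)$. Since the nonempty $X_j(t)$ form a chain of nonempty subsets of $\{u,v\}$, they cannot include both $\{u\}$ and $\{v\}$; without loss of generality each one is either $\{u\}$ or $\{u,v\}$. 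By the definition of $\psi$, $x_j(t+1) \in \{\psi(\{u\}), \psi(\{u,v\})\} = \{u, g(u,v)\}$ for every $j$ with $X_j(t) \neq \emptyset$, and these are the only values used to update components of $S_{t+1}$. Therefore $X(t+1) \subseteq \{u, g(u,v)\}$.

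Finally I would conclude using the defining property of $g$: since $g(u,v)$ lies in the center of a fixed shortest path between $u$ and $v$, we have $d(u, g(u,v)) \le \lceil d(u,v)/2 \rceil$. Hence $D(X(t+1)) \le d(u, g(u,v)) \le \lceil d(u,v)/2 \rceil = \lceil D(X(t))/2 \rceil$, which is the claim.

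I do not expect a genuine obstacle here; the proof is essentially the case analysis above. The only points requiring a little care are the bookkeeping that the nonempty $X_j(t)$ really do form a chain and that $\psi$ is well-defined on all of them — both of which are already available from the preceding discussion — together with the elementary observation that a chain of nonempty subsets of a two-element set avoids one of the two singletons.
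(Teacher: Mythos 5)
Your proof is correct and follows essentially the same route as the paper's: both use the chain structure of the scanned sets to conclude that every nonempty $X_j(t)$ is $\{u\}$ or $\{u,v\}$, hence $X(t+1)\subseteq\{u,g(u,v)\}$, and then apply the defining property of $g$. The only cosmetic difference is that you case-split on $|X(t)|$ while the paper splits on whether some $X_i(t)$ is a proper subset of $X(t)$; the substance is identical.
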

\begin{proof}
If $X_i(t) = X(t)$ for every nonempty set $X_i(t)$, then $x_i(t+1) = \psi(X(t))$.
Hence $X(t+1)$ will contain only one vertex and $D(X(t+1))=0$.
Otherwise, $X_i(t)$ is a nonempty, proper subset of
$X(t)$ for some $0 \leq i < n$.
Recall that $|X(t)| \leq 2$,
so $X_i(t) = \{u\}$ and $X(t) = \{u,v\}$ for some vertices $u \neq v$.
Since $X_j(t) \subseteq X_i(t)$ or 
$X_i(t) \subseteq X_j(t)$
for all $0 \leq j < n$, it follows that
every nonempty set $X_j(t)$ is either equal to
$\{ u\}$ or $\{ u,v\}$ and $x_j(t+1)$ is either
equal to $\psi(\{u\}) = u$ or $\psi( \{u,v\} ) = g(u,v)$.
 By definition of $g$, we have that $d(u,g(u,v)) \leq \lceil d(u,v)/2 \rceil$.
Since $X(t+1) \subseteq \{ u, g(u,v) \}$, it follows that $D(X(t+1)) \le \lceil D(X(t)) /2 \rceil$.
\end{proof}

\begin{proof}[Proof of \theoremref{thm:1-resilient-alg}.]
We verify that the agreement and validity properties of graphical approximate agreement are satisfied.
We proceed by induction to show that vertices in $X(t+1)$ lie on some shortest path between the values in $X(t)$ for all $1 \le t \le T$.
The case $t=1$ is true because $X(1) \subseteq X(0)$.
Suppose the claim holds for some $X(t)$ such that $1 \le t \le T$. 
By definition of $g$ and $\psi$,
all values in $X(t+1)$ lie on some shortest path between the values in~$X(t)$. Thus, validity is satisfied.
Since $X(1) \subseteq V$,
 $D(X(1)) \leq \diam(G)$. As $T = \lceil \log_2 \diam(G) \rceil$, \lemmaref{lemma:path-distances-shrink} implies that the distance $d(u,v)$ between any two output values $u,v \in X(T+1)$ is at most
\[
\max\{ d(u,v) : u,v \in X(T+1) \} = D(X(T+1)) \leq \lceil \diam(G)/2^T \rceil \leq 1. \qedhere
\]
\end{proof}

In \sectionref{sec:synchronous}, we extend the same algorithmic idea to the synchronous message-passing setting under crash faults.

\subsection{A wait-free asynchronous algorithm for nicely bridged graphs}\label{sec:bridgedgraphs}
\label{thm:wait-free-nicely-bridged}

\paragraph{Preliminaries.}
The \emph{center} of $G$ is the set $ \{ v \in V : \epsilon(v) = \radius(G) \}$ of nodes with minimum eccentricity in $G$. A graph $G$ is \emph{$k$-self-centered} if every vertex has eccentricity $k$. This means that every vertex is in the center of $G$ and $\diam(G) = \radius(G) = k$. 
A graph is \emph{chordal} if it does not contain any induced cycles of length greater than three.
The \emph{3-sun}, also known as the Haj\'os graph, is obtained from a triangle $\{u,v,w\}$ by subdividing each of its edges and connecting the resulting three vertices $\{x,y,z\}$ to be a clique. This graph is 2-self-centered and chordal.

A set $K \subseteq V$ of nodes is (shortest path) \emph{convex} if, for any $u,v \in K$, all nodes on all shortest paths between $u$ and $v$ are contained in $K$. For any $U \subseteq V$, the \emph{convex hull} $\langle U \rangle$ of $U$ is the smallest convex superset of $U$. If $A \subseteq B$, then $\langle A \rangle \subseteq \langle B \rangle$. 
A vertex $v$ is \emph{simplicial} in the graph $G$ if the neighbours of $v$ in $G$ form a clique. 

\paragraph{Bridged and nicely bridged graphs.}
A subgraph $H$ of $G$ is \emph{isometric} if the distances between any two vertices of $H$ are the same in $H$ and $G$. A graph is \emph{bridged} if it contains no isometric cycles of length greater than three~\cite{farber1989diameters}. All chordal graphs are bridged, but a bridged graph may contain induced cycles of length greater than five. We say that $G = (V,E)$ is \emph{nicely bridged} if any 2-self-centered subgraph $H = G[S]$, induced by a convex set $S \subseteq V$, is chordal. 
Chordal graphs, $3$-sun-free bridged graphs, and bridged graphs with no four cliques are examples of nicely bridged graphs.

We now list some useful properties of bridged graphs.
Farber gave the following result about the radius and diameter of bridged graphs \cite{farber1989diameters}.

\begin{lemma}\label{lemma:bridged-radius}
For any bridged graph $G$, we have $3 \cdot \radius(G) \le 2 \cdot \diam(G) + 2$.
  If $G$ is bridged and does not contain a 3-sun as an induced subgraph, then $2 \cdot \radius(G) \le \diam(G)+1$ holds.
\end{lemma}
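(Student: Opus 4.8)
We sketch a proof; the result is Farber's~\cite{farber1989diameters}, and in the rest of the paper we use it as stated. The engine is the classical characterisation that a graph is bridged if and only if all of its balls $B_k(v)=\{u : d(u,v)\le k\}$ are convex. Two easy consequences will be used: an intersection of balls is convex, and the intersection of a ball with a shortest path is a subpath, so that for any shortest path $P=(v_0,\dots,v_\ell)$ and any vertex $z$ the function $j\mapsto d(v_j,z)$ is unimodal (each of its sublevel sets is an interval). The plan is to exhibit a single vertex of small eccentricity. Fix $x,y$ with $d(x,y)=\diam(G)=:D$, fix a shortest $x$--$y$ path $P=(x=v_0,\dots,v_D=y)$, and let $m=v_{\lceil D/2\rceil}$ be its midpoint. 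The goal is then to prove $\epsilon(m)\le(2D+2)/3$ in general, and $\epsilon(m)\le(D+1)/2$ when $G$ additionally has no induced $3$-sun.

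To bound $\epsilon(m)$, fix an arbitrary vertex $z$; we must bound $d(m,z)$. Since $d(v_0,z),d(v_D,z)\le D$ and $j\mapsto d(v_j,z)$ is unimodal, the only way $d(m,z)$ can be large is if this function attains a large minimum near $m$. To rule this out one invokes the second half of the bridged characterisation --- the triangle condition (if $vw\in E$ and $d(u,v)=d(u,w)=i>0$, then $v$ and $w$ have a common neighbour at distance $i-1$ from $u$) together with the absence of short induced cycles --- which forces the geodesic triangle spanned by $\{x,y,z\}$ to be \emph{thin}: each of its three sides stays within a controlled distance of the union of the other two. Quantifying this thinness for the side $P$ at its midpoint $m$ gives $d(m,z)\le(2D+2)/3$; in the absence of a $3$-sun the triangle is strictly thinner, giving $d(m,z)\le(D+1)/2$. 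As $z$ was arbitrary, $\radius(G)\le\epsilon(m)$ yields $3\radius(G)\le 2\diam(G)+2$, and the same argument gives $2\radius(G)\le\diam(G)+1$ in the $3$-sun-free case.

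The main obstacle is exactly this quantitative thinness estimate, and in particular pinning down the $3$-sun as the precise extremal obstruction. That it is extremal is already clear from the sun itself: it is bridged with $\radius=\diam=2$, so $3\radius=2\diam+2$ is tight, and it is (of course) not $3$-sun-free, so it is consistent with the sharper bound. A convenient way to organise the estimate is to replace $G$ by the convex hull $\langle\{x,y,z\}\rangle$: convex subgraphs are isometric, and an isometric subgraph of a bridged graph is again bridged (a bridged graph has no isometric cycle of length greater than three, and this is inherited by isometric subgraphs), so one is reduced to analysing a small, three-generated convex set of diameter at most $D$ --- a setting in which the $3$-sun / no-$3$-sun dichotomy can be made visible by hand. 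One expects the worst cases to be chains of $3$-suns glued in a path-like fashion, so that forbidding the $3$-sun is exactly what removes the surplus slack.
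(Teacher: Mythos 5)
The paper does not actually prove this statement: it is quoted verbatim from Farber~\cite{farber1989diameters} and used as a black box, so there is no in-paper argument to compare yours against. Judged on its own terms, your sketch has a genuine gap, and moreover the specific strategy you commit to provably fails. The vertex you propose as a near-center --- the midpoint $m$ of a diametral geodesic --- can have eccentricity equal to $\diam(G)$ in a bridged graph. Take the triangular lattice (a standard bridged graph) and let $H$ be the convex hull of three pairwise-equidistant ``corners'' $x,y,z$ at mutual distance $D$; by the paper's own Lemma~\ref{lemma:induced-bridge}, $H$ is bridged, and one checks $\diam(H)=D$. The midpoint of a geodesic from $x$ to $y$ is still at distance $D$ from $z$, so $\epsilon(m)=D$, far above $(2D+2)/3$. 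The low-eccentricity vertex guaranteed by Farber's theorem sits near the ``centroid'' of the whole triangle (eccentricity about $2D/3$ in this example), not on any one side. The underlying reason is that bridged graphs are not Gromov-hyperbolic: geodesic triangles in them can be as fat as Euclidean ones, so the ``thinness'' you invoke is simply false as a property of the class.

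This also exposes the circularity of the sketch: the step ``quantifying this thinness gives $d(m,z)\le(2D+2)/3$, and $\le(D+1)/2$ in the $3$-sun-free case'' is not a known fact one can cite --- with those constants it \emph{is} the lemma, restated for a particular (and, as above, wrongly chosen) candidate center. The parts of your write-up that are sound --- convexity of balls in bridged graphs, unimodality of $j\mapsto d(v_j,z)$ along a geodesic, the observation that the $3$-sun itself witnesses tightness of $3\radius(G)\le 2\diam(G)+2$ --- are correct but do not get at the real content. Farber's actual argument is a global one about the eccentricity function and the structure of the center of a bridged graph (note, incidentally, that your counterexample region contains induced $3$-suns, which is exactly why it only meets the weaker bound); reproducing it requires more than local convexity plus a thin-triangles heuristic. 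Given that the paper treats the lemma as cited prior work, the honest course is to do the same rather than to gesture at a proof whose key estimate is both unproved and, in the form stated, false.
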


We use the following fact due to Farber and Jamison~\cite[Theorem 6.5]{Farber1987Local}.
\begin{lemma}
\label{lemma:bridged-hull-diameter}
If $G=(V,E)$ is bridged, then $D(\langle U \rangle) = D(U)$ for any nonempty $U \subseteq V$.
\end{lemma}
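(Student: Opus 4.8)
The plan is to derive this from the classical characterisation of bridged graphs via convexity of balls, which is also the route taken in \cite[Theorem~6.5]{Farber1987Local}. One inequality is free: since $U \subseteq \langle U \rangle$, every pair of vertices of $U$ is also a pair of vertices of $\langle U \rangle$, so $D(\langle U \rangle) \ge D(U)$. All the work is in proving $D(\langle U \rangle) \le D(U)$. Write $d = D(U)$ and, for a vertex $v$, let $B_d(v) = \{ w \in V : d(w,v) \le d \}$ denote the ball of radius $d$ around $v$. The one non-formal ingredient I would invoke is that in a bridged graph every such ball is convex; this is a standard property of bridged graphs (equivalently, the $r$-neighbourhood of any convex set is convex), so I would cite it from \cite{Farber1987Local,farber1989diameters} rather than reprove it. Granting this, an arbitrary intersection of balls is convex, being an intersection of convex sets.

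With ball-convexity in hand, I would ``saturate'' $U$ in two rounds. First, set $W = \bigcap_{u \in U} B_d(u)$, which is convex. Since $D(U) = d$, every $v \in U$ satisfies $d(v,u) \le d$ for all $u \in U$, so $U \subseteq W$; by minimality of the convex hull, $\langle U \rangle \subseteq W$, i.e.\ $d(x,u) \le d$ for every $x \in \langle U \rangle$ and every $u \in U$. This alone does not finish the argument, because it only controls distances from hull vertices to vertices of $U$, not between two hull vertices, so a second round is needed. Set $W' = \bigcap_{x \in \langle U \rangle} B_d(x)$, again convex. By the conclusion of the first round, every $u \in U$ lies in $B_d(x)$ for all $x \in \langle U \rangle$, so $U \subseteq W'$, and hence $\langle U \rangle \subseteq W'$. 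Unwinding this inclusion says precisely that $d(x,y) \le d$ for all $x,y \in \langle U \rangle$, i.e.\ $D(\langle U \rangle) \le d = D(U)$. Combined with the easy inequality this gives equality.

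The only real content is the convexity of balls in bridged graphs, which is exactly where the hypothesis ``no isometric cycle of length $\ge 4$'' gets used; the rest is formal manipulation of convex hulls. Since the paper already treats this lemma as a black box from \cite{Farber1987Local}, I would simply cite ball-convexity and present the two-round saturation above. If a self-contained proof were desired instead, the main obstacle would be precisely re-establishing that bridged graphs have convex balls (via the local-to-global geometry of bridged graphs, e.g.\ the behaviour of vertices at minimum distance from a convex set), which is a genuine combinatorial argument rather than a routine computation.
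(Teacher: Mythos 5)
Your argument is correct, but note that the paper does not actually prove this lemma: it is stated as a black box, citing Farber and Jamison directly, so there is no in-paper proof to compare against. What you have done is push the black box one level deeper, from the hull--diameter statement itself down to the convexity of balls in bridged graphs. That trade is sound: ball-convexity around vertices does hold in bridged graphs (it follows from the Farber--Jamison/Soltan--Chepoi characterisation that neighbourhoods of convex sets are convex, and singletons are convex), intersections of geodesically convex sets are convex, and your two-round saturation is carried out correctly --- the first round controls $d(x,u)$ for $x \in \langle U \rangle$, $u \in U$, and the second round bootstraps this to $d(x,y)$ for $x,y \in \langle U \rangle$, giving $D(\langle U \rangle) \le D(U)$; the reverse inequality is immediate from $U \subseteq \langle U \rangle$. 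The only caveat is that the ``real content'' you defer (convexity of balls) is essentially of the same depth as the cited theorem itself, so as a self-contained proof this is incomplete in the same way the paper's treatment is; but as a derivation of the lemma from a standard, arguably more fundamental property of bridged graphs, it is a valid and slightly more informative route than the bare citation the paper uses.
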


Next, we prove the following simple lemma.

\begin{restatable}{lemma}{bridgedinduceddiameter}
\label{lemma:bridged-induced-diameter}
If $G = (V,E)$ is bridged and $H = G[\langle U \rangle]$ for $U \subseteq V$, then $\diam(H)~=~D(\langle U \rangle)$.
\end{restatable}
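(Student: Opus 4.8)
The plan is to show the two inequalities $\diam(H) \le D(\langle U \rangle)$ and $\diam(H) \ge D(\langle U \rangle)$ separately, using the fact that $S := \langle U \rangle$ is convex and that $H = G[S]$.

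For the direction $\diam(H) \ge D(\langle U \rangle)$, note that every vertex of $H$ is a vertex of $G$, so for any $u,v \in S$ we trivially have $d_H(u,v) \ge d_G(u,v)$ (paths in $H$ are paths in $G$). Taking the maximum over $u,v \in S$ gives $\diam(H) \ge \max\{d_G(u,v) : u,v \in S\} = D(S) = D(\langle U \rangle)$.

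For the reverse direction $\diam(H) \le D(\langle U \rangle)$, I would argue that for any two vertices $u,v \in S$, their $G$-distance is realised inside $H$: pick a shortest path $P$ from $u$ to $v$ in $G$; since $S$ is convex, every vertex of $P$ lies in $S$, so $P$ is a path in $H$, giving $d_H(u,v) \le d_G(u,v)$. Combined with the trivial inequality above, this yields $d_H(u,v) = d_G(u,v)$ for all $u,v \in S$, i.e. $H$ is an isometric subgraph of $G$. Hence $\diam(H) = \max\{d_G(u,v) : u,v \in S\} = D(S)$, and by \lemmaref{lemma:bridged-hull-diameter} (Farber--Jamison), $D(S) = D(\langle U \rangle) = D(U)$, so in fact $\diam(H) = D(U)$ as well. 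This last step is where the bridged hypothesis is actually used — the isometry argument itself only needs convexity of the hull.

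The main subtlety, rather than obstacle, is making sure that convexity of $\langle U \rangle$ is invoked with the correct meaning: shortest-path convexity guarantees that \emph{all} vertices on \emph{all} shortest paths between two members lie in the set, which is exactly what is needed to conclude that a shortest $G$-path between $u,v \in S$ stays within $S$ and therefore within $H$. Once this is noted, the proof is essentially the two-line double-inequality argument above, and the appeal to \lemmaref{lemma:bridged-hull-diameter} is only needed to identify $D(\langle U\rangle)$ with $D(U)$ in the statement.
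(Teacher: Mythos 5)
Your proof is correct and follows essentially the same argument as the paper: convexity of $\langle U \rangle$ forces any shortest $G$-path between two of its vertices to stay inside the induced subgraph, which together with the trivial reverse inequality gives $d_H = d_G$ on $\langle U \rangle$ and hence $\diam(H) = D(\langle U \rangle)$. Your observation that the bridged hypothesis is not needed for this (only for the separate identification $D(\langle U\rangle)=D(U)$ via \lemmaref{lemma:bridged-hull-diameter}) is accurate and consistent with the paper's proof, which likewise uses only convexity.
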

\begin{proof}
Let $u,v \in \langle U \rangle$.
Consider any shortest path between $u$ and $v$ in $G$. By definition of $\langle U \rangle$, all vertices on this path are in $\langle U \rangle$. Thus, this is also a path between $u$ and $v$ in $H$.
Since $H$ is an induced subgraph of $G$, any shortest path between $u$ and $v$ in $H$ is also a path between $u$ and $v$ in $G$.
Hence the distance between $u$ and $v$ in
$G$ is the same as the distance between $u$ and $v$ in $H$.
It follows that $D(\langle U \rangle) = \diam(H)$.
\end{proof}

Note that an induced subgraph of a bridged graph is not necessarily a bridged graph. For example, consider wheel graphs.
However, the subgraph of a bridged graph induced by a convex set is bridged.

\begin{restatable}{lemma}{inducedbridged}\label{lemma:induced-bridge}
Let $G = (V,E)$ be a bridged graph and $S \subseteq V$. Then the subgraph $G[\langle S \rangle]$ is bridged.
\end{restatable}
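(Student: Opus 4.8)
The plan is to show that if $G$ is bridged, then $G[\langle S\rangle]$ contains no isometric cycle of length greater than three. The key observation, which I would establish first, is that the convex hull $\langle S\rangle$ is an isometric subgraph of $G$: by definition of convexity, for any $u,v\in\langle S\rangle$, every shortest $u$--$v$ path in $G$ stays inside $\langle S\rangle$, so the distance between $u$ and $v$ in $G[\langle S\rangle]$ equals their distance in $G$. (This is exactly the computation already carried out in the proof of \lemmaref{lemma:bridged-induced-diameter}, specialised to pairs of vertices rather than the diameter.)

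Given that, I would argue by contradiction. Suppose $C$ is an isometric cycle of $H = G[\langle S\rangle]$ of length $c > 3$. ``Isometric in $H$'' means that for any two vertices $x,y$ on $C$, the distance between them along $C$ equals $d_H(x,y)$. Since $H$ is an isometric subgraph of $G$ (the first step), we have $d_H(x,y) = d_G(x,y)$ for all $x,y$ in $\langle S\rangle$, and in particular for all $x,y$ on $C$. Combining, the distance between $x$ and $y$ along $C$ equals $d_G(x,y)$, so $C$ is an isometric cycle of $G$ as well. But $C$ has length $c > 3$, contradicting the assumption that $G$ is bridged. Hence $H$ has no isometric cycle of length greater than three, i.e.\ $H$ is bridged.

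The main subtlety — really the only place where care is needed — is the first step: verifying that $\langle S\rangle$ is isometric in $G$, i.e.\ that taking an induced subgraph on a convex set does not create any ``shortcuts'' or lengthen any distances. Shortcuts are impossible because $H$ is an induced subgraph of $G$, so any $H$-path is a $G$-path and $d_G \le d_H$. Lengthening is impossible precisely because $\langle S\rangle$ is convex: a shortest $u$--$v$ path in $G$ lies entirely within $\langle S\rangle$, hence is also a path in $H$, giving $d_H \le d_G$. Once this is in hand, the rest is a one-line transfer of the isometric-cycle condition from $H$ to $G$. I would also note explicitly that this is where bridgedness (rather than chordality) is used in an essential way: an induced subgraph of a bridged graph need not be bridged, as the wheel example preceding the lemma shows, so convexity of $S$ cannot be dropped.
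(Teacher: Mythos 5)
Your proof is correct and uses the same key idea as the paper's: convexity of $\langle S\rangle$ forces shortest $G$-paths between its vertices to stay inside $\langle S\rangle$, so a cycle of length at least four that is isometric in $G[\langle S\rangle]$ would be isometric in $G$ (the paper phrases this contrapositively, exhibiting the shortcut inside the hull). The extra step of first establishing that $G[\langle S\rangle]$ is an isometric subgraph of $G$ is a clean way to package the same argument.
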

\begin{proof}
Let $H = G[\langle S \rangle]$.
Consider a cycle $C$ of length at least four in $H$. Since $C$ is a cycle in $G$ and $G$ is bridged, there exists vertices $u$ and $v$ in $C$ 
such that the distance between them in $G$ is less than
the distance between them in $C$.
Consider a shortest path between $u$ and $v$ in $G$.
The shortest path convex hull $\langle S \rangle$ contains
this path, since $u$ and $v$ are vertices of $\langle S \rangle$.
Thus $H = G[\langle S \rangle]$ also contains this path.
Hence $C$ is not isometric.
\end{proof}

\paragraph{The algorithm.}
For any nonempty set of vertices $X \subseteq V$,  we choose a vertex $\psi(X)$ from the subgraph $H$ induced by $\langle X \rangle$ as follows: If the center of $H$ contains a vertex that is non-simplicial in $H$, then let $\psi(X)$ be any such vertex. Otherwise, let $\psi(X)$ be any vertex in the center of $H$.
By definition, $\psi(X)$ has minimum eccentricity in the subgraph of $G$ induced by $\langle X \rangle$. 
Since $\psi(X)$ is a vertex in the convex hull of $X$, it is on some shortest path between two vertices in $X$. 

Let $x_i(0)$ be the input of process $p_i$ and let $T^* = \lceil \log_{3/2} \diam(G) \rceil + 1$. 
The processes communicate using a sequence $S_0, \ldots, S_T$ of single-writer snapshot objects, where $T = \max\{ |V|, T^* \}$.
In each iteration $t = 0, \ldots, T$, each process $p_i$:
\begin{itemize}[noitemsep]
\item performs $\mathsf{update}$ on the $i$th component of the snapshot object $S_t$, setting it to the vertex~$x_i(t)$,
\item performs $\mathsf{scan}$ on the snapshot object $S_t$,
\item defines $X_i(t)$ be the \emph{set} of vertices returned by its $\mathsf{scan}$, and
\item sets $x_i(t+1) = \psi(X_i(t))$.
\end{itemize}
Once $p_i$ has computed $x_i(T+1)$, the process outputs this vertex and terminates. 

\paragraph{Correctness.}
As before, if $p_i$ crashes before computing the set $X_i(t)$,
we define $X_i(t)$ to be the empty set. 
Let $X(t) = \bigcup \{ X_i(t) : 0 \leq i < n\}$. 
Note that $X(t) \subseteq V$ for $0 \leq t \leq T+1$ and $X(0)$ is a subset of the input values.
In particular, $X(t)$ is the set of values returned by the last $\mathsf{scan}$ performed on the snapshot object $S_t$.
Observe that if $y \in X(t)$, then $y = x_i(t)$ for some $0 \leq i < n$.
This is because each process $p_j$ that performs $\mathsf{update}$ on $S_{t}$ does so only with value $x_j(t)$. Thus, component $j$ of $S_t$ is either $x_j(t)$ or the special initial value $-$.

We use $X(T+1)$ to denote the set of output vertices.
We show that $X(T+1)$ satisfies agreement (all the values are contained in a clique) and validity (all the values are in the shortest path convex hull of the inputs) of approximate agreement on $G$. We start with validity. 

\begin{lemma}[Validity]\label{lemma:validity-of-bridged}
Let $0 \le t \le T+1$. Then $X(t) \subseteq \langle X(0) \rangle$.
\end{lemma}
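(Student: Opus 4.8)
The plan is to prove the statement $X(t) \subseteq \langle X(0)\rangle$ by induction on $t$. The base case is immediate: $X(0)$ is a subset of the input values, and every set is contained in its own convex hull, so $X(0) \subseteq \langle X(0)\rangle$; the case $t=1$ also follows since $X(1) \subseteq X(0)$ as each $x_i(1) = \psi(X_i(0))$ and, more to the point, the elements of $X(1)$ are values $x_i(1)$ that were used to update $S_1$, and these lie in $\langle X_i(0)\rangle \subseteq \langle X(0)\rangle$. So really the induction hypothesis to carry is just $X(t) \subseteq \langle X(0)\rangle$ for a fixed $t$ with $0 \le t \le T$, and we must deduce $X(t+1) \subseteq \langle X(0)\rangle$.

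For the inductive step, fix any $y \in X(t+1)$. As observed in the text just before the lemma, $y = x_i(t+1)$ for some process $p_i$, and by definition $x_i(t+1) = \psi(X_i(t))$, which is a vertex of the subgraph $H$ induced by $\langle X_i(t)\rangle$; in particular $y \in \langle X_i(t)\rangle$. Now $X_i(t) \subseteq X(t)$ by definition of $X(t)$ as the union of the $X_j(t)$. By the induction hypothesis $X(t) \subseteq \langle X(0)\rangle$, and since convex hull is monotone ($A \subseteq B$ implies $\langle A\rangle \subseteq \langle B\rangle$, as recalled in the preliminaries) together with idempotence of the hull operator ($\langle \langle U\rangle\rangle = \langle U\rangle$, since $\langle U\rangle$ is already convex), we get
\[
y \in \langle X_i(t)\rangle \subseteq \langle X(t)\rangle \subseteq \langle\langle X(0)\rangle\rangle = \langle X(0)\rangle.
\]
Since $y \in X(t+1)$ was arbitrary, $X(t+1) \subseteq \langle X(0)\rangle$, completing the induction.

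I do not expect any genuine obstacle here: the whole argument is a chained application of monotonicity and idempotence of the shortest-path convex hull, plus the fact (already made explicit in the paragraph preceding the lemma) that every element of $X(t+1)$ is some process's $\psi$-value and hence lies in the hull of that process's observed set. The only point that requires a line of care is making sure we quote the right basic facts about $\langle\,\cdot\,\rangle$ — monotonicity is stated in the preliminaries, and idempotence is immediate because $\langle U\rangle$ is convex by definition — and that we correctly invoke $X_i(t) \subseteq X(t)$, which is just the definition of $X(t)$. One could alternatively phrase the induction starting the nontrivial step at $t = 0$ rather than $t = 1$, which makes the write-up marginally shorter since the $t=1$ remark becomes unnecessary.
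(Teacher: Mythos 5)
Your proof is correct and is essentially the same induction as the paper's: the paper also takes $y = x_i(t+1) = \psi(X_i(t)) \in \langle X_i(t)\rangle$, uses $X_i(t) \subseteq X(t)$ and monotonicity of the hull, and concludes via the induction hypothesis (leaving the idempotence step $\langle\langle X(0)\rangle\rangle = \langle X(0)\rangle$ implicit where you spell it out). No differences of substance.
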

\begin{proof}
We proceed by induction on $t$.
  For $t=0$, we have $X(0) \subseteq \langle X(0) \rangle$.
  Suppose the claim holds for some $0 \le t \le T$.
  Let $y \in X(t+1)$.
  Since $y = x_i(t+1) = \psi(X_i(t))$, for some $0 \le i < n$, the vertex $y$ is in
  $\langle X_i(t) \rangle$. 
  As $X_i(t) \subseteq X(t)$, it follows that $\langle X_i(t) \rangle \subseteq \langle X(t) \rangle$. 
  Thus, $y \in \langle X_i(t) \rangle \subseteq \langle X(t) \rangle$.  By the induction hypothesis, $X(t) \subseteq \langle X(0) \rangle$,
  so $y \in \langle X(t) \rangle \subseteq \langle X(0) \rangle$.
\end{proof}

We show that, for $0 \le t \le T^*$, if the set of values $X(t)$ does not form a clique, then the diameter of $X(t+1)$ is roughly half the diameter of $X(t)$.
Recall that $D(U)$ is the maximum distance in $G$ between the any two nodes in $U$.
Note that $D(U') \le D(U)$ for $U' \subseteq U$.
As in \sectionref{sec:generalgraphs}, $X_i(t) \subseteq X_j(t)$ or $X_i(t) \subseteq X_j(t)$ for $i,j \in\{0,\ldots,n-1\}$.

\begin{lemma}\label{lemma:diameter-shrinks-bridged}
Let $0 \le t \le T^*$. Then $D(X(t+1)) \le \frac{2}{3}( D( X(t)) + 1 )$. 
Moreover, if $\langle X(t) \rangle$ does not contain a 3-sun as an induced subgraph, then
$D(X(t+1)) \le \frac{1}{2}( D(X(t)) + 1)$.
\end{lemma}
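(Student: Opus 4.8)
The plan is to bound the distance $d(y,y')$ for every pair of values $y,y'\in X(t+1)$ and then take the maximum. Recall that every element of $X(t+1)$ has the form $\psi(X_i(t))$ for some process $p_i$ with $X_i(t)\neq\emptyset$, and that the sets $X_i(t)$ are totally ordered by inclusion. So, given any two values $y,y'\in X(t+1)$, we may write $y=\psi(X_i(t))$ and $y'=\psi(X_k(t))$ for some processes $p_i,p_k$ with $X_i(t),X_k(t)\neq\emptyset$, and we may assume without loss of generality that $X_i(t)\subseteq X_k(t)$. Write $H_k=G[\langle X_k(t)\rangle]$. The crucial point is that $y'$ is, by definition of $\psi$, a vertex of minimum eccentricity in $H_k$, while $y=\psi(X_i(t))\in\langle X_i(t)\rangle\subseteq\langle X_k(t)\rangle$ also lies in $H_k$ (by monotonicity of the convex hull); so $y$ is reachable from the ``centre'' vertex $y'$ within $\radius(H_k)$ steps \emph{inside} $H_k$.

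First I would note that $H_k$ is bridged, since it is induced by a convex set (\lemmaref{lemma:induced-bridge}), and that distances in $H_k$ agree with distances in $G$ between vertices of $\langle X_k(t)\rangle$ --- this is exactly the argument in the proof of \lemmaref{lemma:bridged-induced-diameter} (a shortest $G$-path between two vertices of a convex set stays inside it, and an $H_k$-path is in particular a $G$-path). Consequently $d(y,y')=d_{H_k}(y,y')\le\radius(H_k)$.

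It then remains to bound $\radius(H_k)$ in terms of $D(X(t))$. Combining \lemmaref{lemma:bridged-hull-diameter} and \lemmaref{lemma:bridged-induced-diameter} gives $\diam(H_k)=D(\langle X_k(t)\rangle)=D(X_k(t))\le D(X(t))$. Farber's inequality (\lemmaref{lemma:bridged-radius}) then yields $\radius(H_k)\le\tfrac23(\diam(H_k)+1)\le\tfrac23(D(X(t))+1)$, and taking the maximum over all pairs $y,y'$ proves the first bound. For the second bound, observe that $\langle X_k(t)\rangle\subseteq\langle X(t)\rangle$, so $H_k$ is an induced subgraph of $G[\langle X(t)\rangle]$; hence if the latter contains no induced $3$-sun, neither does $H_k$, and the sharper case of \lemmaref{lemma:bridged-radius} gives $\radius(H_k)\le\tfrac12(\diam(H_k)+1)\le\tfrac12(D(X(t))+1)$.

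The only place the argument is delicate is the direction of the inclusion: one must measure the distance from $y'$, the centre of the \emph{larger} hull $\langle X_k(t)\rangle$, to the point $y$ of the \emph{smaller} hull, rather than bounding both $y$ and $y'$ to a common centre, which would cost a factor of two and only give $D(X(t+1))\le\tfrac43(D(X(t))+1)$. I do not expect the non-simplicial preference in the definition of $\psi$ to play any role in this lemma; it is needed later, for the agreement property.
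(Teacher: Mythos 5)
Your proof is correct and follows essentially the same route as the paper's: take the two values $\psi(X_i(t))$ and $\psi(X_k(t))$ with the index sets nested by inclusion, work inside the induced subgraph of the convex hull of the larger set, note that one value is its centre and the other lies inside it, and then chain Lemmas~\ref{lemma:induced-bridge}, \ref{lemma:bridged-hull-diameter}, \ref{lemma:bridged-induced-diameter}, and \ref{lemma:bridged-radius}. Your closing remarks --- that the asymmetry (centre of the larger hull versus point of the smaller) is the delicate step, and that the non-simplicial preference in $\psi$ is irrelevant here --- are both accurate.
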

\begin{proof}
Let $x_i(t+1), x_j(t+1) \in X(t+1)$. 
Recall that, by definition, $x_i(t+1) = \psi( X_i(t) ) \in \langle X_i(t) \rangle$ and $x_j(t+1) = \psi( X_j(t) ) \in \langle X_i(t) \rangle$.
Without loss of generality, assume that $X_j(t) \subseteq X_i(t)$.
Let $H$ be the subgraph of $G$ induced by $\langle X_i(t) \rangle$.
Since $G$ is bridged, the induced subgraph $H$ is also bridged, by \lemmaref{lemma:induced-bridge}. 
Since $X_j(t) \subseteq X_i(t) \subseteq \langle X_i(t) \rangle$, 
both $x_j(t+1)$ and $x_i(t+1)$ are vertices of~$H$.
Moreover, $X_i(t) \subseteq X(t)$ implies  
that $D(X_i(t)) \le D(X(t))$. By 
\lemmaref{lemma:bridged-hull-diameter} and \lemmaref{lemma:bridged-induced-diameter},
\[
  \diam(H) = D( \langle X_i(t) \rangle) = D( X_i(t) ) \le D(X(t)).
\]
By definition, $x_i(t+1) = \psi(X_i(t))$, which is a vertex in the center of $H$.
Hence, by \lemmaref{lemma:bridged-radius},
\[
d\left( x_i(t+1), x_j(t+1)\right) \le \radius\left( H \right) \le  \frac{2}{3}\left(\diam\left( H \right) + 1\right) \le \frac{2}{3}\left(D(X(t)) + 1\right). 
\]
For the second claim, if $\langle X(t) \rangle$ does not contain a 3-sun as an induced subgraph, \lemmaref{lemma:bridged-radius} yields 
\[
d\left( x_i(t+1), x_j(t+1)\right) \le \radius\left( H \right) \le  \frac{1}{2}\left(\diam\left( H \right) + 1\right) \le \frac{1}{2}\left(D(X(t)) + 1\right). \qedhere
\]
\end{proof}

We can apply \lemmaref{lemma:diameter-shrinks-bridged} repeatedly to ensure that we quickly end up in a subgraph with diameter at most two.

\begin{lemma}
\label{lemma:bridged-shrinkage}
The set 
$X(T^*)$
has diameter at most two.
\end{lemma}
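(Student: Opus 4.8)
The plan is to write $d_t = D(X(t))$ and iterate the recursion supplied by \lemmaref{lemma:diameter-shrinks-bridged}, namely $d_{t+1} \le \tfrac{2}{3}(d_t + 1)$ for every $0 \le t \le T^*$. The key observation is that the map $x \mapsto \tfrac{2}{3}(x+1)$ has fixed point $x = 2$, so rather than reasoning about $d_t$ directly I would track $d_t - 2$; subtracting $2$ from both sides of the recursion gives the clean inequality $d_{t+1} - 2 \le \tfrac{2}{3}(d_t - 2)$.

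First I would record the base case $d_0 = D(X(0)) \le \diam(G)$, which holds because $X(0) \subseteq V$. Put $M = \max\{0,\, \diam(G) - 2\}$. A one-line induction on $t$ then establishes $d_t \le 2 + (2/3)^t M$ for all $0 \le t \le T^*$: the base case follows from $d_0 \le \diam(G) \le 2 + M$, and the step is $d_{t+1} \le \tfrac{2}{3}(d_t + 1) \le \tfrac{2}{3}\bigl(3 + (2/3)^t M\bigr) = 2 + (2/3)^{t+1} M$. Taking $t = T^*$, if $\diam(G) \le 2$ then $M = 0$ and $d_{T^*} \le 2$ immediately. Otherwise $M = \diam(G) - 2 < \diam(G)$, and from $T^* = \lceil \log_{3/2}\diam(G)\rceil + 1$ we get $(3/2)^{T^*} \ge (3/2)^{\log_{3/2}\diam(G) + 1} = \tfrac{3}{2}\diam(G) \ge \diam(G)$, hence $(2/3)^{T^*} M < (2/3)^{T^*}\diam(G) \le 1$, so $d_{T^*} < 3$. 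Since $D(X(T^*))$ is a non-negative integer this forces $d_{T^*} \le 2$.

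I do not anticipate a real obstacle: the argument is a routine geometric-decay/fixed-point computation. The two small points that need attention are centering the recursion at its fixed point $2$ (so it is not literal halving), and the final integrality step that upgrades $d_{T^*} < 3$ to $d_{T^*} \le 2$ after isolating the trivial case $\diam(G) \le 2$.
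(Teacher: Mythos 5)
Your proof is correct and follows essentially the same route as the paper: the paper's induction hypothesis $D(X(t)) \le (2/3)^t(\diam(G)-2) + 2$ is exactly your fixed-point-centered recursion, and the final step (bounding $(2/3)^{T^*}(\diam(G)-2)$ by a quantity less than $1$ and invoking integrality) matches the paper's concluding computation. Your explicit $M = \max\{0,\diam(G)-2\}$ and the separate treatment of $\diam(G) \le 2$ are minor tidy-ups of a detail the paper glosses over, not a different argument.
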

\begin{proof}
  First, we show by induction that for all $0 \leq t \le T^*$, we have
  \[
  D(X(t)) \le \left(\frac{2}{3}\right)^t \left( \diam(G) - 2 \right) + 2.
  \]The base case $t=0$ is vacuous as the distance between any two vertices is at most the diameter $\diam(G)$. 
  For the inductive step, suppose the claim holds for some $0 \le t < T^*$.
  By \lemmaref{lemma:diameter-shrinks-bridged},
  \begin{align*}
    D( X(t+1) ) &\le \frac{2}{3}\left[ D(X(t)) + 1 \right] 
    \le \frac{2}{3}\left[ \left(\frac{2}{3}\right)^t( \diam(G) - 2) + 3 \right] 
    = \left(\frac{2}{3}\right)^{t+1}( \diam(G) - 2 ) + 2.
 \end{align*}
   Since $T^* = \lceil \log_{3/2}\diam(G) \rceil + 1$ and the diameter is an integer, we get that    
   \[
   D(X(T)^*) \le \left\lfloor \frac{2}{3 \diam(G)}\left(\diam(G)-2\right) + 2 \right\rfloor \le  \left\lfloor \frac{2}{3} + 2 \right\rfloor = 2. \qedhere
   \] 
\end{proof}

\begin{lemma}\label{lemma:radius-one}
  If the subgraph induced by $\langle X(t) \rangle$ has radius one, then $X(t+1)$ is a clique.
\end{lemma}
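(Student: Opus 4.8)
The plan is as follows. Set $H = G[\langle X(t)\rangle]$ and, since $H$ has radius one, fix a vertex $c$ of $H$ of eccentricity one; thus $c$ is adjacent in $G$ to every other vertex of $\langle X(t)\rangle$, and $H$ has at least two vertices. From the correctness discussion we know that every vertex of $X(t+1)$ equals $x_i(t+1) = \psi(X_i(t))$ for some $i$ with $X_i(t) \neq \emptyset$, so it suffices to show that any two such vertices $\psi(X_i(t))$ and $\psi(X_j(t))$ are equal or adjacent. Because the nonempty sets among the $X_k(t)$ are totally ordered by inclusion, I may assume $X_j(t) \subseteq X_i(t)$, so $\langle X_j(t)\rangle \subseteq \langle X_i(t)\rangle \subseteq \langle X(t)\rangle$; writing $H_i = G[\langle X_i(t)\rangle]$, both $\psi(X_i(t))$ and $\psi(X_j(t))$ are vertices of $H_i$, and $\psi(X_i(t))$ lies in the center of $H_i$ by the definition of $\psi$.

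I would then split into two cases according to whether the dominating vertex $c$ lies in $\langle X_i(t)\rangle$. If it does, then $c$ dominates $H_i$ as well, since adjacency is inherited by induced subgraphs; hence $H_i$ has radius at most one, so either $H_i$ is a single vertex (in which case $X_i(t) = X_j(t)$ and the two values coincide) or the center of $H_i$ consists precisely of the vertices dominating $H_i$, and then $\psi(X_i(t))$, being in that center, is equal or adjacent to the vertex $\psi(X_j(t))$ of $H_i$. If instead $c \notin \langle X_i(t)\rangle$, I claim $\langle X_i(t)\rangle$ is a clique of $G$: any two distinct $u,v \in \langle X_i(t)\rangle \subseteq \langle X(t)\rangle$ are both adjacent to $c$, so if $u$ and $v$ were non-adjacent then $d(u,v) = 2$ and $u$--$c$--$v$ would be a shortest path in $G$ through $c$, contradicting convexity of $\langle X_i(t)\rangle$; hence $H_i$ is complete and again $\psi(X_i(t))$ and $\psi(X_j(t))$ are equal or adjacent.

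Combining the two cases, every pair of vertices in $X(t+1)$ is equal or adjacent, i.e.\ $X(t+1)$ is a clique. The only step that requires an idea is the second case: recognising that a convex set omitting a vertex that dominates the ambient graph must itself be a clique. I expect this to be the main (small) obstacle; everything else is bookkeeping with the inclusion order on the $X_i(t)$ and the definition of $\psi$. I note in passing that this argument never uses bridgedness of $G$, only convexity of the hull and the existence of a dominating vertex, so it also covers the radius-one graphs treated elsewhere in the paper.
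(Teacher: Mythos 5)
Your proof is correct and follows the same route as the paper's: reduce to a pair of outputs with $X_j(t) \subseteq X_i(t)$ and use that $\psi(X_i(t))$ lies in the center of $G[\langle X_i(t)\rangle]$, hence is adjacent to $\psi(X_j(t))$. You are in fact more careful than the paper, whose one-line argument tacitly assumes that $G[\langle X_i(t)\rangle]$ still has radius at most one; your two-case analysis (the dominating vertex $c$ either lies in $\langle X_i(t)\rangle$ and dominates it, or a convex set omitting $c$ must be a clique by the shortest-path argument) supplies exactly the justification the paper leaves implicit, and correctly uses only convexity rather than bridgedness.
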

\begin{proof}
Let $x_i(t+1), x_j(t+1) \in X(t+1)$ and assume without loss of generality that $X_j(t) \subseteq X_i(t)$.
Now $\langle X_j(t) \rangle \subseteq \langle X_i(t) \rangle$.
Since $x_i(t+1) = \psi ( X_i(t) )$ is a vertex in the center of subgraph induced by $\langle X_i(t) \rangle$, it is  adjacent to the vertex $x_j(t+1)$. Hence, any two values in $X(t+1)$ are adjacent.
\end{proof}

Thus, after reaching a subgraph of radius 1, one more iteration suffices.
Moreover,
the algorithm solves the problem on any (possibly non-bridged) graph of radius one.
If the graph does not contain a 3-sun as an induced subgraph, then the algorithm converges in $T^*+1$ iterations. However, the above lemmas do not guarantee progress when the convex hull of $X(t)$ has diameter two and radius two. We handle this case next. 

\paragraph{Handling 2-self-centered graphs.}
In bridged graphs, the algorithm converges either to a clique or to a set whose convex hull induces a 2-self-centered subgraph. We show that if $G$ is nicely bridged, i.e., any 2-self-centered convex subgraph is chordal, our algorithm makes progress. However, our approach does not work for all bridged graphs, as there are non-chordal 2-self-centered bridged graphs which do not have any simplicial vertices. (For example, see
\appendixref{apx:nicely-bridged}.)

Recall that, if the center of $\langle X \rangle$ contains a non-simplicial vertex, then $\psi(X)$ is a non-simplicial vertex from the center of the subgraph induced by $\langle X \rangle$. This allows us to exclude simplicial vertices, which always exist in any chordal graph~\cite{dirac1961rigid}. By removing any simplicial vertex, the convex hull shrinks, as shown by the next lemma.

\begin{restatable}{lemma}{chordalremovingsimplicial}\label{lemma:chordal-removing-simplicial}
Let $U$ be a convex set. If $s \in U$ is simplicial in $U$, then 
 $\langle U \setminus \{s \} \rangle = U \setminus \{s \}$. 
\end{restatable}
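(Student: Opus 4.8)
The plan is to show that removing a simplicial vertex $s$ from a convex set $U$ leaves a set that is still convex, hence equals its own convex hull. Since $\langle U \setminus \{s\} \rangle \supseteq U \setminus \{s\}$ always holds, it suffices to show that $U \setminus \{s\}$ is convex, i.e.\ that for any two vertices $u, v \in U \setminus \{s\}$, every shortest path between $u$ and $v$ in $G$ avoids $s$. So suppose, for contradiction, that some shortest $u$--$v$ path $P$ passes through $s$. Since $s$ is an interior vertex of $P$, it has two distinct neighbours $a, b$ on $P$, one on each side of $s$.

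First I would observe that $a$ and $b$ both lie in $U$: indeed $a, b$ are on a shortest $u$--$v$ path with $u, v \in U$, and $U$ is convex, so all vertices of $P$ are in $U$. Because $s$ is simplicial in $G$ (and the neighbourhood of $s$ within $U$ is a subset of $N_G(s)$, which is a clique — here one should double-check whether ``simplicial in $U$'' means simplicial in the induced subgraph $G[U]$; in either reading $a$ and $b$, being neighbours of $s$ inside $U$, are adjacent to each other). Hence $\{a, b\} \in E$. But then replacing the subpath $a, s, b$ of $P$ by the single edge $\{a, b\}$ yields a strictly shorter $u$--$v$ walk, contradicting the assumption that $P$ is a shortest path. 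Therefore no shortest path between two vertices of $U \setminus \{s\}$ uses $s$, so $U \setminus \{s\}$ is convex.

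Consequently $\langle U \setminus \{s\} \rangle = U \setminus \{s\}$, as claimed.

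The main subtlety, and the one step I would be careful about, is the precise meaning of ``simplicial in $U$'': the argument needs that any two neighbours of $s$ that both lie in $U$ are adjacent. If ``simplicial in $U$'' is read as ``simplicial in the induced subgraph $G[U]$'', this is immediate since $N_{G[U]}(s)$ is by definition a clique in $G[U]$, hence in $G$; if it is read as ``simplicial in $G$ and $s \in U$'', it is also immediate since $N_G(s)$ is a clique and $a, b \in N_G(s)$. Either way the shortcut $a, b$ exists. Everything else — that all internal vertices of a shortest path between members of a convex set lie in that set, and that a chord across an internal vertex of a geodesic contradicts minimality — is routine.
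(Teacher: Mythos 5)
Your proof is correct and follows essentially the same route as the paper's: assume a shortest path between two vertices of $U\setminus\{s\}$ passes through $s$, note its two neighbours on the path lie in $U$ by convexity, use simpliciality to get a chord, and derive a contradiction with minimality of the path. Your explicit remark that the neighbours of $s$ on the path lie in $U$ (so that simpliciality in $G[U]$ applies) is a point the paper leaves implicit, but the argument is the same.
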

\begin{proof}
Suppose $s \in \langle U \setminus \{s \}\rangle$. This means that there are two vertices $u,v \in U \setminus \{s \}$ such that the vertex~$s$ lies on some shortest path $u = w_0,, \ldots, w_k = v$ between $u$ and $v$.
Let $w_i = s$ for some $0 < i < k$.
Since $w_i$ is simplicial in $H$, the vertices $w_{i-1}$ and $w_{i+1}$ are adjacent in $H$. 
But now the path $w_0, \ldots, w_{i-1}, w_{i+1}, \ldots, w_k$ is a shorter path from $u$ to $v$, a contradiction.
\end{proof}

\begin{lemma} {\rm \cite{dirac1961rigid}} 
\label{lemma:simplicial-dirac}
Every chordal graph $G$ has a simplicial vertex.
If $G$ is not a clique, then it has two non-adjacent simplicial vertices.
\end{lemma}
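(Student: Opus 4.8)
The plan is to prove the (stronger) second assertion: any chordal graph $G$ that is not a clique has two non-adjacent simplicial vertices. The first assertion then follows immediately, since in a clique every vertex is trivially simplicial. I would proceed by induction on $|V(G)|$. Graphs on at most two vertices are cliques, so there is nothing to prove. For the inductive step, assume $G$ is chordal but not a clique, and fix two non-adjacent vertices $a$ and $b$. Choose an inclusion-minimal set $S \subseteq V \setminus \{a,b\}$ that separates $a$ from $b$, and let $C_a$ and $C_b$ be the connected components of $G - S$ containing $a$ and $b$, respectively.

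The crux is to show that $S$ induces a clique. By minimality of $S$, each vertex of $S$ has a neighbour in $C_a$ and a neighbour in $C_b$ (otherwise it could be dropped from $S$). Suppose $x,y \in S$ are non-adjacent. Let $P_a$ be a shortest among all $x$--$y$ paths whose internal vertices all lie in $C_a$, and define $P_b$ analogously inside $C_b$; such paths exist because $C_a$ and $C_b$ are connected and $x,y$ reach both. Each of $P_a$ and $P_b$ is induced (being shortest), each has at least one internal vertex (since $x \not\sim y$), and the internal vertices of $P_a$ are non-adjacent to those of $P_b$ (they lie in distinct components of $G - S$). Hence $P_a$ and $P_b$ together form an induced cycle of length at least $4$, contradicting chordality. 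Therefore $S$ is a clique.

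Now consider the induced subgraph $G_a = G[C_a \cup S]$, which is again chordal. If $G_a$ is a clique, then $a$ is simplicial in $G_a$; since $S$ separates $C_a$ from the rest of $G$, we have $N_G(a) \subseteq C_a \cup S$, so $a$ is also simplicial in $G$, and we take $v := a \in C_a$. Otherwise, the induction hypothesis gives two non-adjacent simplicial vertices of $G_a$; because $S$ is a clique, at most one of them lies in $S$, so some simplicial vertex $v$ of $G_a$ lies in $C_a$. For such a $v$, every $G$-neighbour lies in $C_a \cup S$, so $N_G(v) = N_{G_a}(v)$ and $v$ is simplicial in $G$. Running the same argument on $G_b = G[C_b \cup S]$ yields a simplicial vertex $w$ of $G$ with $w \in C_b$. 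Since $C_a$ and $C_b$ are distinct components of $G - S$, the vertices $v$ and $w$ are non-adjacent in $G$, completing the induction.

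The main obstacle is the clique-separator claim, which is the only place where chordality is genuinely used; closing the induced cycle there relies on the minimality of $S$ to guarantee that both $x$ and $y$ reach each side of the separator. Everything after that is bookkeeping, the one subtlety being the remark that a vertex which is simplicial in $G[C_a \cup S]$ and lies in the interior component $C_a$ has the same neighbourhood in $G$ as in the subgraph, so simpliciality transfers upward.
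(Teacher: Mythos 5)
Your proof is correct. Note, however, that the paper does not prove this lemma at all: it is quoted as a known result with a citation to Dirac (1961), so there is nothing to compare against. What you have written is the classical argument for Dirac's theorem: take a minimal separator $S$ for two non-adjacent vertices, show $S$ is a clique by gluing two induced $x$--$y$ paths through the two sides into a chordless cycle of length at least $4$, and then apply induction to $G[C_a\cup S]$ and $G[C_b\cup S]$, using that a clique separator can absorb at most one of the two non-adjacent simplicial vertices supplied by the induction hypothesis. All the delicate points are handled: the restricted shortest paths are indeed induced, the minimality of $S$ guarantees each separator vertex has neighbours on both sides, and simpliciality transfers from $G[C_a\cup S]$ to $G$ because a vertex of $C_a$ has all its $G$-neighbours inside $C_a\cup S$. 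The only (immaterial) quibble is the base case: the edgeless graph on two vertices is not a clique, so "graphs on at most two vertices are cliques" is not literally true; but that graph has two non-adjacent simplicial vertices anyway, so the induction still starts correctly.
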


\begin{lemma}\label{lemma:chordal-slow-shrink}
Let $0 \le t \le T$. If $D(X(t)) \ge 2$, then
$\langle X(t+1) \rangle \subsetneq \langle X(t) \rangle$.
\end{lemma}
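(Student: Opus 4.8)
The plan is to show that $\psi(X(t))$ can be excluded from the convex hull $\langle X(t) \rangle$ at the next step. Let $Y = X(t)$, let $H = G[\langle Y \rangle]$, and let $v = \psi(Y)$; by the algorithm, $v$ is non-simplicial in $H$ whenever the center of $H$ contains a non-simplicial vertex. First I would observe that $v = x_i(t+1)$ for some process $p_i$, hence $v \in X(t+1)$, and more importantly $v \in X_j(t+1)$ for every process $p_j$ whose scan of $S_{t+1}$ sees $p_i$'s update --- but a cleaner route is to argue directly about $\langle X(t+1) \rangle$: since $X(t+1) \subseteq \langle X(t) \rangle$ (by \lemmaref{lemma:validity-of-bridged}, or by the containment $\langle X_i(t) \rangle \subseteq \langle X(t) \rangle$ used there), it already follows that $\langle X(t+1) \rangle \subseteq \langle X(t) \rangle$. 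So the whole content of the lemma is the \emph{strictness}, i.e.\ exhibiting a vertex of $\langle X(t) \rangle$ that does not lie in $\langle X(t+1) \rangle$.

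The vertex I would try to exclude is a \emph{simplicial} vertex $s$ of $H$ that is not chosen by any process, together with \lemmaref{lemma:chordal-removing-simplicial}. Concretely: since $D(X(t)) \ge 2$, the subgraph $H$ is not a clique, so if $H$ is chordal, \lemmaref{lemma:simplicial-dirac} gives two non-adjacent simplicial vertices $s_1, s_2$ of $H$. Every chosen value $x_j(t+1) = \psi(X_j(t))$ lies in $\langle X_j(t) \rangle \subseteq \langle X(t) \rangle$, and I claim each such value is non-simplicial in $H$, or at least avoids at least one of $s_1, s_2$. The key point is that $\psi$ is designed to pick a non-simplicial center vertex when one exists; the remaining worry is the case where the center of some $\langle X_j(t) \rangle$ has \emph{only} simplicial vertices. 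Here I would use that $\langle X_j(t) \rangle \subseteq \langle X(t) \rangle$ and that a vertex simplicial in a convex subgraph of $H$ and in the center of that subgraph is forced, via the chordality and the diameter-$\ge 2$ hypothesis, to be non-simplicial in the larger graph $H$ --- or, if both chosen values coincide with simplicial vertices of $H$, they cannot be both $s_1$ and $s_2$ simultaneously unless $|X(t+1)| \le 2$ lands exactly on two simplicial vertices, in which case one shows $\langle X(t+1)\rangle$ still misses the \emph{other} simplicial vertex. Applying \lemmaref{lemma:chordal-removing-simplicial} to $U = \langle X(t) \rangle$ and the missed simplicial vertex $s$ then gives $\langle X(t+1) \rangle \subseteq \langle X(t) \rangle \setminus \{s\} = \langle \langle X(t)\rangle \setminus \{s\}\rangle \subsetneq \langle X(t) \rangle$.

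The main obstacle I anticipate is handling the case where $H = G[\langle X(t)\rangle]$ is \emph{not} chordal. By \lemmaref{lemma:induced-bridge}, $H$ is bridged, and if $D(X(t)) \ge 3$ then $\radius(H) \le \tfrac{2}{3}(D(X(t))+1) < D(X(t))$ strictly by \lemmaref{lemma:bridged-radius} and \lemmaref{lemma:bridged-hull-diameter}, so a center vertex of $H$ is at distance strictly less than $D(X(t))$ from the diametral pair --- but "strictly closer to everything'' is not literally what we need; we need a vertex \emph{outside} the new hull. I would therefore argue that when $\radius(H) < D(H)$, a diametral vertex $s$ realizing $D(H)$ has the property that $s \notin \langle X(t+1)\rangle$, because every $x_j(t+1)$ has eccentricity $\le \radius(H) < D(H)$ in $H$ and hence cannot "see past'' $s$; formally, $s$ lies on no shortest path between two vertices all of whose eccentricities are $< d(s,\cdot)$-bounded, so $s$ is not in the hull of the chosen vertices. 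The one genuinely delicate sub-case is $D(X(t)) = 2$ with $H$ 2-self-centered and non-chordal, where $H$ may have no simplicial vertex and $\radius(H) = D(H)$; here the lemma's hypothesis combined with the hypotheses elsewhere forces $G$ nicely bridged, making $H$ chordal, so this sub-case does not arise under the standing assumptions --- but since this lemma is stated for general $G$ and $0 \le t \le T$, I would need to either add the nicely-bridged hypothesis or restrict to $D(X(t)) \ge 3$ here and treat $D(X(t)) = 2$ separately in the convergence proof. I expect reconciling the exact hypotheses to be the place where care is needed.
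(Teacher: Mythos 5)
Your skeleton matches the paper's proof: reduce to the case where $H = G[\langle X(t)\rangle]$ has diameter two and radius two, use nicely-bridgedness to conclude $H$ is chordal, get two non-adjacent simplicial vertices from \lemmaref{lemma:simplicial-dirac}, and convert a missed simplicial vertex into strict shrinkage via \lemmaref{lemma:chordal-removing-simplicial}. Two of your side worries are easily dispatched: the lemma sits in \sectionref{sec:bridgedgraphs} where $G$ is nicely bridged throughout, so the non-chordal 2-self-centered subcase cannot arise; and for the cases $D(X(t))\ge 3$ or $\radius(H)=1$ you do not need your ``diametral vertex'' eccentricity argument (which is not sound as stated, since $x_j(t+1)$ is a center of $\langle X_j(t)\rangle$, not of $H$, so its eccentricity in $H$ need not be bounded by $\radius(H)$). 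The clean route, which the paper uses implicitly, is that \lemmaref{lemma:diameter-shrinks-bridged} (resp.\ \lemmaref{lemma:radius-one}) gives $D(X(t+1)) < D(X(t))$, and then \lemmaref{lemma:bridged-hull-diameter} forces $D(\langle X(t+1)\rangle) < D(\langle X(t)\rangle)$, so the nested hulls cannot be equal.

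The genuine gap is at the crux: you never actually prove that some simplicial vertex of $H$ is missed by $X(t+1)$. You offer three possible outs and none of them works as written. The claim that a vertex simplicial in (and central to) some $\langle X_j(t)\rangle$ is forced to be non-simplicial in $H$ is unsubstantiated and is not the mechanism; the fallback that $\langle X(t+1)\rangle$ ``misses the other simplicial vertex'' fails when $H$ has exactly two simplicial vertices and both are chosen. What the paper proves is that $S\cap X(t+1)$ is empty or a clique, hence a \emph{proper} subset of $S$, and the proof uses the snapshot nesting property essentially: if $x_i(t+1)$ and $x_j(t+1)$ are non-adjacent and both simplicial in $H$, then WLOG $X_j(t)\subseteq X_i(t)$, so both lie in $\langle X_i(t)\rangle$; since that hull has diameter two they have a common neighbour $v$ in it; $v$ is non-simplicial (it is adjacent to two non-adjacent vertices) and, because $x_i(t+1)$ is a center vertex of eccentricity two, $v$ also has eccentricity two and hence lies in the center of $\langle X_i(t)\rangle$; but then the definition of $\psi$ would have made $x_i(t+1)=\psi(X_i(t))$ non-simplicial, a contradiction. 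Your proposal never invokes the nesting of the scans and never derives this contradiction, so the strictness --- which, as you yourself observe, is the entire content of the lemma --- is not established.
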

\begin{proof}
Let $H$ be the subgraph induced by $\langle X(t) \rangle$.
Assume that $H$ has diameter and radius two; otherwise, the claim follows from  \lemmaref{lemma:bridged-shrinkage} and \lemmaref{lemma:radius-one}.
Since $G$ is nicely bridged, $H$ is chordal.
Let $S$ be the set of vertices that are simplicial in $H$.
Since $H$ is chordal and has diameter at least two, $S$ has two non-adjacent vertices, by \lemmaref{lemma:simplicial-dirac}. 
Let $S' = S \cap X(t+1)$. Observe that if $S' \subsetneq S$, then  \lemmaref{lemma:chordal-removing-simplicial} implies $\langle X(t+1) \rangle \subsetneq \langle X(t) \rangle$.
We show that the set $S' = S \cap X(t+1)$ is either empty or a clique, which implies that $S' \subsetneq S$.

For the sake of contradiction, let $x_i(t+1), x_j(t+1) \in S' \subseteq X(t+1)$ be two non-adjacent vertices.
We may assume that $X_j(t) \subseteq X_i(t)$, which implies that $\langle X_j(t) \rangle \subseteq \langle X_i(t) \rangle$.
Since $x_i(t+1)$ and $x_j(t+1)$ are non-adjacent, they are connected by a vertex $v \in \langle X_i(t) \rangle$.
Note that $v$ is not simplicial in $\langle X_i(t+1) \rangle$, but $x_i(t)$ is simplicial in $\langle X_i(t) \rangle$.
By definition, $x_i(t+1) = \psi( X_i(t) )$ is a vertex in the center of $\langle X_i(t) \rangle$, so it has eccentricity two. This means that $v$ also has eccentricity two and $v$ is also in the center. But, now, the center of $\langle X_i(t) \rangle$ contains a non-simplicial vertex $v$, which means that $x_i(t+1) = \psi( X_i(t) )$ is non-simplicial in $\langle X_i(t) \rangle$ by definition of $\psi$.
\end{proof}

\begin{proof}[Proof of \theoremref{thm:wait-free-nicely-bridged}.]
It remains to verify that agreement and validity conditions of graphical approximate agreement are satisfied. 
\lemmaref{lemma:validity-of-bridged} shows that validity is satisfied.
Repeated application of \lemmaref{lemma:chordal-slow-shrink} implies that $X(T+1)$ is a clique and, thus, agreement is satisfied. 
\end{proof}

\section{Upper bound for synchronous message-passing systems}\label{apx:synchronous}
\label{sec:synchronous}

Finally, we adapt the algorithm of \sectionref{sec:generalgraphs} to the synchronous message-passing setting under crash faults. This establishes the following upper bound, almost matching the lower bound given by \corollaryref{corollary:synchronous-lb}.

\begin{restatable}{theorem}{synchronous}
\label{thm:synchronous-upper}
Let $G$ be a connected graph.
For any $0 \le f < n$, there exists an $f$-resilient synchronous message-passing algorithm for $n$ processes that solves approximate agreement on $G$ in $\lfloor f/2 \rfloor + \lceil \log_2 \diam(G) \rceil + 1$ rounds.
\end{restatable}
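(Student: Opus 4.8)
The plan is to transfer the two-phase $1$-resilient scheme of \sectionref{sec:generalgraphs} to the synchronous model. In the first $\lfloor f/2 \rfloor + 1$ rounds each process runs a textbook synchronous $2$-set agreement protocol (fix a total order $\prec$ on $V$; in every round broadcast the $\prec$-smallest vertex heard so far and adopt the minimum of all vertices received); after these rounds the set of vertices held by the still-active processes is a set of at most two input vertices, and in particular is contained in the input set, so validity is not jeopardised. In the remaining $\lceil \log_2 \diam(G) \rceil$ rounds the processes run the path-halving procedure: fix in advance, for every pair $u,v$, a shortest $u$--$v$ path and a vertex $g(u,v)$ on it with $d(u,g(u,v)), d(v,g(u,v)) \le \lceil d(u,v)/2\rceil$ and $g(u,u)=u$, choosing these paths consistently so that a subpath of a chosen path is again a chosen path; in each round a process broadcasts its current vertex, lets $X_i$ be the set of vertices it receives, and moves to $\psi(X_i)$, where $\psi(\{u\}) = u$ and $\psi(\{u,v\}) = g(u,v)$. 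Each process outputs its vertex after the final round.

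Validity is the routine part: by induction on the round number, exactly as in the proof of \theoremref{thm:1-resilient-alg}, every vertex ever held lies on a shortest path between two inputs, and hence so does every output. For agreement one needs the synchronous counterpart of \lemmaref{lemma:path-distances-shrink}: writing $X(t)$ for the union over all processes of the vertex-sets they hold at the start of round $t$, the aim is $D(X(t+1)) \le \lceil D(X(t))/2 \rceil$ for the halving rounds, so that after $\lceil\log_2\diam(G)\rceil$ such rounds $D(X(\cdot)) \le 1$ and the held vertices form a clique.

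The main obstacle is that the synchronous model provides no atomic snapshots, so the property that carried the argument in \sectionref{sec:generalgraphs} --- that the received sets $\{X_i(t)\}$ contain at most two nonempty sets and are linearly ordered by inclusion --- is no longer automatic. I would recover it as follows. In any round in which \emph{at most one} process crashes, every surviving process hears from all processes that were active at the start of the round except possibly the single crasher, so the received vertex-sets differ by at most one vertex and are nested; consequently $|X(\cdot)| \le 2$ is preserved across such a round and $D$ halves just as in \lemmaref{lemma:path-distances-shrink}. A round in which two or more processes crash can fail to make progress, but it consumes at least two of the at most $f$ crashes available in the whole execution, so at most $\lfloor f/2 \rfloor$ rounds are of this kind. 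Since the rounds allotted to $2$-set agreement already tolerate up to $f$ crashes regardless of how they are distributed, the bad rounds can be charged against the crash budget that the first phase is built to absorb, leaving $\lceil\log_2\diam(G)\rceil$ effective halving rounds, which is precisely what the stated round count affords. The points that require care --- and where I expect the real work to be --- are: making this crash accounting precise, in particular coordinating the budget across the two phases (or interleaving them) so that the halving phase still receives $\lceil\log_2\diam(G)\rceil$ productive rounds; checking that a bad round never inflates the number of distinct held vertices beyond what the following halving round can absorb, which is arranged by having each process first reduce its received set to its two $\prec$-extreme vertices before applying $\psi$; and verifying that the consistent choice of shortest paths keeps all held vertices on a common shortest path throughout the halving phase, so that repeated applications of $\psi$ compose and the diameter along that path genuinely halves.
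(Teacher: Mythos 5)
Your two-phase structure (a $\lfloor f/2\rfloor+1$-round FloodMin-style 2-set agreement followed by $\lceil\log_2\diam(G)\rceil$ rounds of broadcast-and-move-to-$\psi$) is exactly the paper's algorithm, and your validity argument is fine. The gap is in the agreement argument for the halving phase. You assume that a round in which two or more processes crash ``can fail to make progress'' and then try to repair this with a crash-accounting scheme; that scheme does not close as described. The first phase genuinely needs all $\lfloor f/2\rfloor+1$ of its rounds to solve 2-set agreement (this matches the known lower bound), so it has no slack to absorb ``bad'' halving rounds, and the total budget leaves exactly $\lceil\log_2\diam(G)\rceil$ rounds for the second phase. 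Charging second-phase crashes against the first phase would require redesigning or interleaving the two phases, which you do not do; as written, a single multi-crash halving round already breaks the round count.

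The resolution is that there are no bad halving rounds: your premise is false. In the synchronous crash model, any process that completes round $t$ (i.e.\ computes a nonempty $X_i(t)$) did not crash during round $t$, so its broadcast of $x_i(t)$ was delivered to \emph{every} process. Hence all nonempty received sets contain the common core $Y=\{x_i(t): X_i(t)\neq\emptyset\}$, and each is contained in $X(t)$. Once the first phase guarantees $|X(t)|\le 2$, say $X(t)\subseteq\{u,v\}$, the only possibilities are that every nonempty $X_i(t)$ equals $\{u,v\}$ (when $|Y|=2$), or that each is $\{u\}$ or $\{u,v\}$ (when $Y=\{u\}$). In either case the nonempty received sets form a chain with at most two distinct members, $X(t+1)\subseteq\{u,g(u,v)\}$, and $D(X(t+1))\le\lceil D(X(t))/2\rceil$ in \emph{every} round, regardless of how many processes crash. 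This is precisely the content of \lemmaref{lemma:synchronous-xsize} and \lemmaref{lemma:synchronous-path-distances-shrink}, and it makes your auxiliary devices unnecessary: truncating the received set to its two $\prec$-extreme vertices is never needed because $|X(t)|\le 2$ is maintained automatically, and no consistency condition on the chosen shortest paths is required, since splicing a shortest $a$--$b$ subpath into a shortest $u$--$v$ path containing $a$ and $b$ again yields a shortest $u$--$v$ path, so the iterated midpoints remain on shortest paths between the two surviving input vertices.
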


To show this, we use the following result~\cite{chaudhuri2000tight}.
\begin{lemma}\label{lemma:synchronous-2-set}
For any $0 \le f < n$, there exists an $f$-resilient synchronous message-passing algorithm for $n$ processes that solves 2-set agreement in $\lfloor f/2 \rfloor + 1$ rounds.
\end{lemma}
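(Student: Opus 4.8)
The plan is to establish this special case ($k=2$) of the classical tight bound for $k$-set agreement via the full-information flooding algorithm together with a pigeonhole-plus-monotonicity argument. Concretely, I would run the protocol for $R = \lfloor f/2\rfloor + 1$ rounds, where each process $p_i$ maintains the set $V_i$ of all input values it has heard of, initialised to its own input. In each round $p_i$ broadcasts $V_i$ to all processes, then replaces $V_i$ by the union of $V_i$ with all sets it receives; after round $R$ it outputs $\min V_i$ under some fixed total order on inputs. Termination is immediate, since exactly $R$ rounds are executed, and validity is immediate, since $V_i$ never contains anything but input values and the output is $\min V_i$. Thus the entire content is the agreement condition: at most two distinct values are output.

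First I would isolate a \emph{sparse} round. Let $f_r$ denote the number of processes that crash during round $r$, so that $\sum_{r=1}^{R} f_r \le f$. Since $2R = 2\lfloor f/2\rfloor + 2 \ge f+1 > f$, not every round can have two or more crashes; hence there is a round $r^*$ with $f_{r^*} \le 1$. Next I would bound the spread of minima immediately after this round. Let $S$ be the set of processes alive at the start of round $r^*$ that do not crash during it; each such process broadcasts successfully, so every process alive at the end of round $r^*$ knows $\bigcup_{q\in S} V_q$, and in particular knows $\mu := \min \bigcup_{q\in S} V_q$. The only additional values such a process can hold come from the at most one process $c$ that crashes during round $r^*$. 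Consequently the minimum held by any process alive at the end of round $r^*$ is either $\mu$ or $\min V_c$, so the set $M_{r^*}$ of distinct minima among processes alive at the end of round $r^*$ satisfies $|M_{r^*}| \le 2$.

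Finally I would show this spread can never grow in later rounds. For $r \ge r^*$, let $M_r$ be the set of distinct minima among processes alive at the end of round $r$. Any process $p$ alive at the end of round $r+1$ computes its new minimum as $\min(\min V_p^{(r)}, \min_q \min V_q^{(r)})$, where $q$ ranges over the senders $p$ heard from; each such $q$, and $p$ itself, was alive at the end of round $r$, so every value in this expression lies in $M_r$. As the minimum of a subset of $M_r$ is again an element of $M_r$, this gives $M_{r+1} \subseteq M_r$, and by induction $|M_R| \le |M_{r^*}| \le 2$. Since every output is the round-$R$ minimum of some surviving process, at most two distinct values are output, which is agreement.

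I expect the main obstacle to be the bound $|M_{r^*}| \le 2$. It is the only step where the numerical threshold enters, and it requires care about the crash-during-a-round semantics: a process that fails during round $r^*$ may deliver its message to an arbitrary subset of the recipients, so the argument must rely on the fact that all \emph{non}-crashing senders reach everyone, while each of the (at most $k-1 = 1$) crashing senders contributes at most one extra candidate minimum. Getting this accounting precise is exactly where the general $\lfloor f/k\rfloor + 1$ bound specialises to $\lfloor f/2\rfloor + 1$, and it is the part I would write out most carefully.
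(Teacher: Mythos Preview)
Your proof is correct, but note that the paper does not actually prove this lemma: it simply cites it as a known result from \cite{chaudhuri2000tight} (the tight bounds for $k$-set agreement due to Chaudhuri, Herlihy, Lynch, and Tuttle). What you have written is essentially the standard upper-bound argument from that line of work, specialised to $k=2$: flood for $\lfloor f/2\rfloor+1$ rounds and output the minimum seen; pigeonhole guarantees a round with at most one crash, after which the set of per-process minima has size at most two; and this set is monotone non-increasing under further flooding. All three steps are sound as you have stated them, including the accounting in the sparse round (every survivor hears all of $S$, and the single crashing sender contributes at most one additional candidate minimum). So your proposal supplies a self-contained proof where the paper is content to invoke the literature.
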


\paragraph{Overview.}
The synchronous message-passing algorithm for graphical approximate agreement on $G$
follows the same idea as the asynchronous algorithm given in \sectionref{sec:generalgraphs}. All processes:
\begin{itemize}[noitemsep]
    \item use 2-set agreement to reduce the size of the set of inputs to at most 2, and then
    \item run approximate agreement on a path for $\lceil \log_2 \diam(G) \rceil$ steps.
\end{itemize}
By \lemmaref{lemma:synchronous-2-set}, the first part takes $\lfloor f/2 \rfloor + 1$ rounds.
By using similar arguments as in \sectionref{sec:generalgraphs}, we show that the second part takes $\lceil \log_2 \diam(G) \rceil$ rounds, and that the validity and agreement properties of graphical approximate agreement are satisfied.

\paragraph{Preliminaries.}
Let $x_i(0)$ be the input of process $p_i$ for the graphical approximate agreement problem.
As in \sectionref{sec:generalgraphs}, we let $g(u,v)$ be a fixed node in the center of a shortest path between $u$ and $v$, so $d(u,g(u,v)), d(v,g(u,v)) \leq \lceil d(u,v)/2 \rceil$ holds. For any nonempty set $X \subseteq V$ of size at most two, define
\[
\psi(X) = \begin{cases}
  u & \textrm{if } X = \{u\}, \\
  g(u,v) & \textrm{if } X = \{u,v\}
\end{cases}
\]

\paragraph{The algorithm.}
Let $T= \lceil \log_2 \diam(G) \rceil$. Each process $p_i$: 
\begin{enumerate}
    \item runs the 2-set agreement algorithm of \lemmaref{lemma:synchronous-2-set} with input $x_i(0)$ for $\lfloor f/2 \rfloor + 1$ rounds,
    \item lets $x_i(1)$ be its output in the 2-set agreement algorithm,
    \item for $t = 1, \ldots, T$ rounds,
    \begin{itemize}
        \item sends the value $x_i(t)$ to all processes in the system,
        \item receives a set $X_i(t)$ of values from other processes, 
        \item lets $x_i(t+1) = \psi(X_i(t))$, and
    \end{itemize}
    \item outputs the value $x_i(T+1)$.
\end{enumerate}
\paragraph{Correctness.}
The proof of correctness of the synchronous algorithm closely follows the proof of correctness of the asynchronous algorithm given in \sectionref{sec:generalgraphs}.
If process $p_i$ crashes before computing $X_i(t)$, we define $X_i(t)$ to be the empty set.
Let $X(t) = \bigcup \{ X_i(t) : 0 \leq i < n\}$
be the set of values received by any process
during the $t$th round of Step (3).
Note that each non-faulty process $p_i$ always sends the value $x_i(t)$ to itself, so $X_i(t)$ is nonempty if $p_i$ has not crashed by round $t$ of Step (3).  
We use $X(T+1)$ to denote the set of output values.

\begin{restatable}{lemma}{synchxsize}
\label{lemma:synchronous-xsize}
Let $1 \le t \le T$. Then $1 \le |X(t+1)| \le 2$.
\end{restatable}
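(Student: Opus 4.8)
The plan is to mirror the asynchronous argument from \sectionref{sec:generalgraphs} almost verbatim, since the synchronous round structure of Step (3) gives us the same algebraic relationships between the sets $X_i(t)$. First I would establish the lower bound $|X(t+1)| \ge 1$: in round $t+1$ of Step (3), every non-crashed process $p_i$ sends $x_i(t+1)$ to itself, so whenever $X(t+1)$ is defined (i.e.\ some process survives into that round) it is nonempty; more directly, $x_i(t+1) = \psi(X_i(t))$ is well-defined because $X_i(t)$ is a nonempty set of size at most two, which follows inductively. The base of that induction is that $X(1)$, the set of outputs of the 2-set agreement subroutine, has size at most $2$ by \lemmaref{lemma:synchronous-2-set}.

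For the upper bound $|X(t+1)| \le 2$, the key observation is the synchronous analogue of the nested-sets property used in \sectionref{sec:generalgraphs}: the set $X_i(t)$ of values that process $p_i$ receives in round $t$ of Step (3) consists of $x_j(t)$ for every process $p_j$ that has not crashed before sending in that round. Since a process that has crashed by some round has crashed in all later rounds, the collection $\{X_i(t) : 0 \le i < n\}$ is totally ordered by inclusion — if $p_i$ and $p_j$ both survive, each sees a superset of the values seen by processes that crashed earlier, and in a synchronous system with reliable delivery two surviving processes in the same round see exactly the same set. So in fact the distinct nonempty sets among $\{X_i(t)\}$ are totally ordered, hence $X(t) = \bigcup_i X_i(t)$ equals the largest of them. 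Then $\{x_j(t+1) : X_j(t) \neq \emptyset\} = \{\psi(X_j(t)) : X_j(t) \neq \emptyset\}$, and since $\psi$ is a function and its arguments range over a chain of sets each of size $\le 2$, at most two distinct values arise (the same dichotomy as in \lemmaref{lemma:path-distances-shrink}: either all $X_j(t)$ equal $X(t)$, giving one value, or some equals a singleton $\{u\}$ and $X(t) = \{u,v\}$, giving values in $\{u, g(u,v)\}$). These are the only values written in round $t+1$, so $X(t+1) \subseteq \{x_j(t+1) : X_j(t) \neq \emptyset\}$ has size at most $2$.

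The main thing to get right — and the only place where the synchronous setting differs in substance from the asynchronous one — is the claim that processes surviving the same round observe identical sets. In the asynchronous snapshot model this came from atomicity of $\mathsf{scan}$; here it comes from the round synchrony and the assumption that any message sent by a non-faulty process in round $r$ arrives by the end of round $r$. A process that crashes may crash mid-round, so it sends to some but not all destinations; this is why the sets $X_i(t)$ need not all be equal, only nested. I would state carefully that if $p_j$ does not crash in round $t$ of Step (3) then $x_j(t) \in X_i(t)$ for every $p_i$ receiving in that round, and if $p_j$ crashes during that round then $x_j(t)$ may or may not be in $X_i(t)$, but the set of processes whose messages were delivered to $p_i$ is monotone in the sense needed. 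With that in hand the chain structure and hence $|X(t+1)| \le 2$ follow exactly as above. I expect this bookkeeping about partial sends by crashing processes to be the only subtle point; everything else is a direct transcription of the asynchronous proof.
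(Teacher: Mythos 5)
Your proposal is correct and follows essentially the same route as the paper: induction on $t$, with the base case supplied by the agreement property of the 2-set agreement subroutine and the inductive step resting on the fact that the nonempty sets among $\{X_i(t)\}$ form a chain, so that $\psi$ produces at most two distinct values. One small caution on the point you rightly single out as subtle: the chain property does not come from any monotonicity of the delivered-message sets across receivers (a crashing sender may reach $p_i$ but not $p_j$ and another crashing sender the reverse, so those sets are genuinely incomparable); it comes from combining your earlier observation that every non-crashing sender's value lands in \emph{every} receiver's set (including its own) with the inductive bound $|X(t)|\le 2$, which together rule out two disjoint singletons $\{u\}$ and $\{v\}$ occurring simultaneously.
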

\begin{proof}
We proceed by induction on $t$. 
For the base case $t=1$, observe that $|X(1)| \le 2$ holds by the agreement property of 2-set agreement.
For the inductive step, suppose that $1 \le |X(t)| \le 2$ holds for some $1 \le t \le T$.
Since $X_i(t) \subseteq X(t)$, it follows that $\{ X_i(t) : 0 \leq i \leq n-1\}$ contains at most two nonempty sets.
Moreover, since $x_i(t+1) = \psi(X_i(t))$, it follows that $X(t+1)$ will contain at most two different values.
\end{proof}

If the set of values $X(t)$ does not form a clique, then the diameter of $X(t+1)$ is roughly half the diameter of $X(t)$

\begin{lemma}\label{lemma:synchronous-path-distances-shrink}
Let $1 \le t \le T$. Then $D(X(t+1)) \le \lceil D(X(t)) / 2 \rceil$.
\end{lemma}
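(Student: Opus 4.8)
The plan is to mirror almost verbatim the argument for \lemmaref{lemma:path-distances-shrink} in \sectionref{sec:generalgraphs}, since the synchronous ``path'' step in Step~(3) behaves identically to the asynchronous snapshot iteration. First I would recall that, by the proof of \lemmaref{lemma:synchronous-xsize}, the nonempty sets among $\{X_i(t) : 0 \le i < n\}$ are totally ordered by inclusion and there are at most two distinct values in $X(t+1)$. The key structural fact I would extract is the same dichotomy used before: either every nonempty $X_i(t)$ equals $X(t)$, or some nonempty $X_i(t)$ is a proper subset of $X(t)$.

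In the first case, every process $p_i$ with $X_i(t) \ne \emptyset$ computes the same value $x_i(t+1) = \psi(X(t))$, so $X(t+1)$ is a singleton and $D(X(t+1)) = 0 \le \lceil D(X(t))/2\rceil$. In the second case, since $|X(t)| \le 2$ (using \lemmaref{lemma:synchronous-xsize}), the only possibility is $X(t) = \{u,v\}$ with $u \ne v$ and the proper subset is $X_i(t) = \{u\}$ (or symmetrically $\{v\}$). Because the $X_j(t)$ are totally ordered by inclusion, every nonempty $X_j(t)$ is either $\{u\}$ or $\{u,v\}$, hence $x_j(t+1) \in \{\psi(\{u\}), \psi(\{u,v\})\} = \{u, g(u,v)\}$. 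Then $X(t+1) \subseteq \{u, g(u,v)\}$, and by the defining property of $g$, $d(u, g(u,v)) \le \lceil d(u,v)/2 \rceil = \lceil D(X(t))/2 \rceil$, so $D(X(t+1)) \le \lceil D(X(t))/2 \rceil$.

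There is essentially no obstacle here — the only point requiring a word of care is that in the synchronous model the ``inclusion chain'' property of the $X_i(t)$ is not automatic from atomicity of snapshots (as it was in \sectionref{sec:generalgraphs}) but instead follows from the synchronous round structure: a process that is non-faulty in round $t$ hears from every process non-faulty in round $t$, and a crashed process's message reaches a subset of the processes that a later-crashing (or non-crashing) process's message reaches. I would state this inclusion-chain fact explicitly (or cite the corresponding observation already made in the correctness discussion preceding the lemma), and then the rest of the proof is word-for-word the proof of \lemmaref{lemma:path-distances-shrink}.

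\begin{proof}
By \lemmaref{lemma:synchronous-xsize}, $|X(t)| \le 2$, and the nonempty sets among $\{X_i(t) : 0 \le i < n\}$ are totally ordered by inclusion: in the synchronous model, if $p_j$ is non-faulty through round $t$ of Step~(3), then $X_j(t)$ contains every value sent in that round, and the set of values received by a process that crashes during the round is contained in $X_j(t)$; between two processes that both crash, one received set is contained in the other, by the same reasoning applied to whichever crashes later.

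If $X_i(t) = X(t)$ for every nonempty set $X_i(t)$, then $x_i(t+1) = \psi(X(t))$ for all such $i$, so $X(t+1)$ is a singleton and $D(X(t+1)) = 0 \le \lceil D(X(t))/2 \rceil$.

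Otherwise, $X_i(t)$ is a nonempty proper subset of $X(t)$ for some $0 \le i < n$. Since $|X(t)| \le 2$, we have $X_i(t) = \{u\}$ and $X(t) = \{u,v\}$ for some vertices $u \ne v$. As the nonempty sets $X_j(t)$ form an inclusion chain, each of them equals $\{u\}$ or $\{u,v\}$, so $x_j(t+1) \in \{\psi(\{u\}), \psi(\{u,v\})\} = \{u, g(u,v)\}$. By definition of $g$, $d(u, g(u,v)) \le \lceil d(u,v)/2 \rceil$. Since $X(t+1) \subseteq \{u, g(u,v)\}$, it follows that $D(X(t+1)) \le \lceil D(X(t))/2 \rceil$.
\end{proof}
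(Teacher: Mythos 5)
Your overall structure matches the paper's: both proofs reduce to showing that the nonempty views $X_i(t)$ take at most the two values $\{u\}$ and $\{u,v\}$, so that $X(t+1)\subseteq\{u,g(u,v)\}$ and the defining property of $g$ finishes the argument. The final computation is correct.

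However, the one step you single out as ``requiring a word of care'' is justified by a claim about the synchronous crash model that is false. You assert that the receive-sets are totally ordered by inclusion because ``a crashed process's message reaches a subset of the processes that a later-crashing (or non-crashing) process's message reaches.'' In the standard crash model a process that crashes mid-round delivers its message to an \emph{arbitrary} subset of the other processes, and there is no containment between the recipient sets of two different crashing senders: $p_a$ may deliver only to $p_i$ and $p_b$ only to $p_j$, giving incomparable $X_i(t)$ and $X_j(t)$. (This is exactly the adversarial behaviour that makes the $\lfloor f/2\rfloor+1$ lower bound for $2$-set agreement work.) The fact you actually need is different: every process that completes round $t$ delivers its message to every other process that completes round $t$, so the set $Y=\{x_j(t): X_j(t)\neq\emptyset\}$ is contained in \emph{every} nonempty $X_i(t)$. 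Combined with $|X(t)|\le 2$ from \lemmaref{lemma:synchronous-xsize}, this rules out the problematic configuration in which both $\{u\}$ and $\{v\}$ occur as views (which would allow $X(t+1)\supseteq\{u,v\}$ and no shrinkage), and yields that the nonempty views are among $\{u\}$ and $\{u,v\}$. Note the chain property you want is thus a consequence of $|X(t)|\le 2$ together with this mutual-visibility fact, not a model-level delivery guarantee. The paper's own proof is admittedly terse here as well (it simply says ``without loss of generality'' the two nonempty sets are $\{u\}$ and $\{u,v\}$), but your proof replaces that gap with an explicit incorrect justification, so you should repair this step before the rest of the argument, which is otherwise word-for-word the proof of \lemmaref{lemma:path-distances-shrink}, goes through.
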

\begin{proof}
  By \lemmaref{lemma:synchronous-xsize}, we have $1 \le |X(t)| \le 2$ for $0 \le t \le T$. 
  There are two cases to consider.
  First, suppose $X(t) = \{ u \}$ for some node $u$. 
  If $X_i(t)$ is nonempty, then $X_i(t) = \{ u \}$ and $x_i(t+1) = \psi( \{u \} ) = u$.
  Hence, $X(t+1)$ will contain only node $u$ and $D(X(t+1)) = 0$.
  
 Next, suppose that $X(t) = \{ u,v \}$ for some nodes $u \neq v$. 
 The set $\{ X_i(t) : 0 \leq i \leq n-1\}$ contains at most two nonempty sets.
 Without loss of generality suppose that these are $\{u\}$ and $\{u,v\}$.
 If $X_i(t)$ is nonempty, then $x_i(t+1)$ is either $\psi( \{u \}) = u$ or $\psi( \{u,v\} ) = g(u,v)$.
 By the     definition of $g$, we have that $d(u,g(u,v)) \leq \lceil d(u,v)/2 \rceil$.
Since $X(t+1) \subseteq \{ u, g(u,v) \}$, it follows that $D(X(t+1) \le \lceil D(X(t)) /2 \rceil$.
\end{proof}

\begin{proof}[Proof of \theoremref{thm:synchronous-upper}]
By construction, the algorithm takes  
\[
\lfloor f/2 \rfloor + 1 + T  = \lfloor f/2 \rfloor + \lceil \log_2 \diam(G) \rceil + 1
\]
rounds.
Thus, we only need to verify that the outputs satisfy agreement and validity of graphical approximate agreement.

We proceed by induction to show that nodes in $X(t+1)$ lie on some shortest path between the values in $X(t)$ for all $0 \le t \le T$.
The case $t=1$ is true because by validity of 2-set agreement $X(1)$ consists only of initial input values. 
Suppose the claim holds for some $X(t)$ such that $0 \le t \le T$. 
By definition of $g$ and $\psi$,
all values in $X(t+1)$ lie on some shortest path between the values in $X(t)$. 
As shortest paths are also minimal paths, the set $X(T+1)$ of outputs satisfies validity. 
Now $D(X(0)) \leq \diam(G)$, since $X(0) \subseteq V$.
By \lemmaref{lemma:synchronous-path-distances-shrink}, the distance $d(u,v)$ between any two output values $u,v \in X(T+1)$ is at most
\[
\max\{ d(u,v) : u,v \in X(T+1) \} = D(X(T+1)) \leq \lceil \diam(G)/2^T \rceil \leq 1,
\]
since $T = \lceil \log_2 \diam(G) \rceil$.
\end{proof}

\section*{Acknowledgements}

We thank anonymous reviewers for their insightful comments and suggestions.
This project has received funding from the European Research Council (ERC) under the European Union's Horizon 2020 research and innovation programme (grant agreement No.\ 805223 ScaleML) and under the Marie Sk{\l}odowska-Curie grant agreement No.\ 840605
and from the Natural Science and Engineering Research Council of Canada grant RGPIN-2020-04178.

\bibliographystyle{plainnat}
\bibliography{references}

\appendix

\section{Examples of nicely bridged graphs \label{apx:nicely-bridged}}

In this section, we give some sufficient conditions for a bridged graph to be nicely bridged. If $C \cup \{x \}$ is an induced wheel of $G$, then we say that the wheel $C \cup \{x \}$ is \emph{uniquely centered} in $G$ if there is no $y \neq y$ such that $C \cup \{ y \}$ is also an induced wheel in $G$.

\begin{restatable}{theorem}{nicegraphs}
\label{thm:nicely-bridged-examples}
Let $G$ be a bridged graph. Then $G$ is nicely bridged if any of the following hold:
  \begin{enumerate}[label=(\alph*),noitemsep]
      \item $G$ is chordal.
      \item $G$ does not contain a 3-sun as an induced subgraph.
      \item Every wheel of $G$ is uniquely centered.
      \item $G$ has no cliques of size four.
  \end{enumerate}
\end{restatable}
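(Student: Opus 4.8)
The plan is to handle the four sufficient conditions separately; (a) and (b) are short, while (c) and (d) both reduce to one structural lemma about bridged graphs. For (a): an induced subgraph of a chordal graph is chordal, so if $G$ is chordal then every $G[S]$ -- in particular every $2$-self-centered one -- is chordal, and $G$ is nicely bridged. For (b): let $H=G[S]$ with $S$ convex. By \lemmaref{lemma:induced-bridge} (applied with $\langle S\rangle=S$) the graph $H$ is bridged, and $H$ contains no induced $3$-sun because it is an induced subgraph of $G$. If $H$ were also $2$-self-centered then $\radius(H)=\diam(H)=2$, contradicting the inequality $2\cdot\radius(H)\le\diam(H)+1$ of \lemmaref{lemma:bridged-radius}. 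Hence $G$ has \emph{no} $2$-self-centered convex-induced subgraph, so it is vacuously nicely bridged.

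For (c) and (d) I would prove the following \emph{structural lemma}: if $H$ is bridged, $2$-self-centered and not chordal, then a shortest induced cycle $C=v_0v_1\cdots v_{k-1}$ of $H$ of length at least $4$ admits at least two distinct vertices that are adjacent to every vertex of $C$. First, $k\ge 6$, since an induced $4$-cycle or $5$-cycle is isometric and thus forbidden in a bridged graph. Next, $C$ has at least one such vertex: since $\ecc_H(v_0)=2$ and $v_0,v_3$ are non-adjacent (their cyclic distance on $C$ is $3$), they have a common neighbour $w$; a case analysis of the possible adjacencies of $w$ to $C$ -- repeatedly using that $H$ has no induced $4$-cycle, no induced $5$-cycle, and no induced cycle of length at least $6$ shorter than $C$ -- forces $w$ to be adjacent to all of $C$, so $C\cup\{w\}$ is an induced wheel. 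Finally, because $\ecc_H(w)=2$ while $w$ dominates $C\cup\{w\}$, there is a vertex $u$ with $d_H(u,w)=2$, so $u\notin C\cup\{w\}$; examining how $u$, together with the further common neighbours that $\ecc_H(u)=2$ forces it to have in order to reach the vertices of $C$ opposite its neighbours, can attach to $C$ subject to the same forbidden-subgraph constraints, yields a second vertex $w'\ne w$ adjacent to all of $C$.

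Given the structural lemma, (c) and (d) follow at once. Suppose $G$ satisfies (c) or (d) but is not nicely bridged; then some $H=G[S]$ with $S$ convex is $2$-self-centered and non-chordal, and $H$ is bridged by \lemmaref{lemma:induced-bridge}. Applying the structural lemma, we get a shortest induced cycle $C$, of length at least $6$, with two distinct universal vertices $w\ne w'$. If $w\not\sim w'$ then $\{w,w',v_0,v_2\}$ is an induced $4$-cycle of $H$ -- impossible in a bridged graph -- so $w\sim w'$, and then $\{w,w',v_0,v_1\}$ is a $K_4$ while $C\cup\{w\}$ and $C\cup\{w'\}$ are two distinct induced wheels on $C$. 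Since an induced subgraph of $H=G[S]$ is an induced subgraph of $G$, this $K_4$ and this non-uniquely-centered wheel lie in $G$, contradicting (d) and (c) respectively.

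I expect the main obstacle to be the last step of the structural lemma: extracting the second universal vertex $w'$ from the mere existence of the distance-$2$ vertex $u$. This is a finite but delicate case analysis -- one must rule out every way the neighbourhoods of $u$ and of the auxiliary common neighbours could meet $C$ without either creating a forbidden short induced cycle or forcing a second domination of $C$, and the bookkeeping has to be made uniform in $k\ge 6$. That this cannot be avoided without extra hypotheses on $G$ is precisely what the example in this appendix shows: there exist non-chordal, $2$-self-centered bridged graphs with no simplicial vertex, and such a graph, viewed as $G$, is not nicely bridged.
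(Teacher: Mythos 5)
Your treatment of (a) and (b) is correct and matches the paper's: (a) is immediate from heredity of chordality, and (b) follows from \lemmaref{lemma:induced-bridge} together with the 3-sun-free case of \lemmaref{lemma:bridged-radius}, which forces any diameter-two convex-induced subgraph to have radius one. Your reduction of (c) and (d) to a single structural lemma is also sound and is essentially a repackaging of the paper's route: the paper proves the contrapositive form (bridged $+$ 2-self-centered $+$ non-chordal $+$ uniquely centered wheels $\Rightarrow$ a forbidden short induced cycle), and derives (d) from (c) via exactly your dichotomy ($w\sim w'$ gives a $K_4$, $w\not\sim w'$ gives an induced $4$-cycle, which is isometric and hence impossible in a bridged graph). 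Stating the conclusion as ``the shortest long induced cycle has two hubs'' is a legitimate equivalent formulation.

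The genuine gap is that the structural lemma is asserted, not proved. Its first half (existence of one hub $w$) is the paper's \lemmaref{lemma:2-sc-bridged-cycles} and your sketch of it is adequate, but the second half --- extracting a second hub $w'$ from the vertex $u$ at distance $2$ from $w$ --- is precisely the content of the paper's \lemmaref{lemma:2-sc-cycle-adjacent} and \lemmaref{lemma:2-sc-uniquely-centered}, which occupy roughly a page of case analysis: one must first show that any non-hub vertex meeting $C$ does so in a path of at most three consecutive rim vertices (this is where the ``not a second hub'' alternative enters), then show $A(u)=\emptyset$, then introduce the auxiliary common neighbours forced by $\ecc(u)=2$ and eliminate every configuration of their traces on $C$ by exhibiting an induced cycle of length $4$, $5$, or $k-1$. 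You explicitly flag this as ``the main obstacle'' and ``a finite but delicate case analysis'' without carrying it out, so the combinatorial core of parts (c) and (d) --- which is the substance of the theorem --- is missing from your argument. Until that case analysis is written down (and made uniform in $k\ge 6$, including the degenerate interactions when $k=6$), the proof is incomplete.
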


The 3-sun is depicted in \figureref{fig:nice-and-not-so-nice}(a). It is chordal, and hence, nicely bridged.
\figureref{fig:nice-and-not-so-nice}(b) shows a bridged graph whose every wheel is uniquely centered. In contrast, \figureref{fig:nice-and-not-so-nice}(c) shows a bridged graph with a wheel that is not uniquely centered. This graph is also not nicely bridged. \figureref{fig:nice-and-not-so-nice}(d) gives an example of a bridged graph which has no simplicial vertices.

Recall that bridged graphs do not contain any induced cycles of length $4 \le k \le 5$, as every such cycle would be an isometric cycle of length at least four. In particular, any non-chordal bridged graph will have an induced cycle of length at least six. To establish \theoremref{thm:nicely-bridged-examples}, we start with the following lemma.

\begin{lemma}\label{lemma:2-sc-bridged-cycles}
  Suppose $G$ is a non-chordal bridged graph of diameter two.
  Let $C$ be a shortest induced $k$-cycle of length $k \ge 6$ in $G$.
  Then there exists a vertex $x$ such that $C \cup \{x \}$ induces a $k$-wheel in $G$.
\end{lemma}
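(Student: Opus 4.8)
The plan is to locate the hub of the wheel directly, using two mechanisms repeatedly: the absence of induced $4$- and $5$-cycles in the bridged graph $G$, and the diameter-two hypothesis. Write $C = v_0 v_1 \cdots v_{k-1} v_0$ and, for a vertex $w$ of $G$, let $N_C(w)$ denote the set of vertices of $C$ adjacent to $w$. The basic tool is an \emph{arc rule}: if $w \notin V(C)$ is adjacent to $v_i$ and to $v_{i+2}$ (indices modulo $k$), then $w$ is adjacent to $v_{i+1}$. Indeed $v_i v_{i+2} \notin E$ since $C$ is induced and $k \ge 4$, so $w\,v_i\,v_{i+1}\,v_{i+2}\,w$ is a $4$-cycle, and as $G$ is bridged this cycle has a chord, which can only be the edge $w v_{i+1}$. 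A first consequence: for any $w \notin V(C)$, the set $V(C) \setminus N_C(w)$ has no isolated vertex in the cyclic order on $C$, so it is a disjoint union of arcs of $C$ each containing at least two vertices.

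Next I would establish a dichotomy: \emph{for every vertex $w$ of $G$, either $|N_C(w)| \le 3$ or $w$ is adjacent to every vertex of $C$}. Suppose, for contradiction, that $4 \le |N_C(w)| \le k-1$. Then $w \notin V(C)$, and by the previous remark $V(C) \setminus N_C(w)$ contains an arc $v_{q+1}, \dots, v_{q+m}$ with $2 \le m \le k - |N_C(w)| \le k - 4$ and $v_q, v_{q+m+1} \in N_C(w)$. Since $w$ is adjacent to none of $v_{q+1}, \dots, v_{q+m}$, since $v_q \not\sim v_{q+m+1}$ (this would require $m = k-2$), and since the arc $v_q,\dots,v_{q+m+1}$ is a proper subarc of the induced cycle $C$, the cycle $D = w\,v_q\,v_{q+1}\cdots v_{q+m+1}\,w$ is an \emph{induced} cycle of $G$ of length $m+3$. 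Because $G$ is bridged it has no induced $4$- or $5$-cycle, so $m \ne 2$ and hence $m + 3 \ge 6$; but $m + 3 \le k-1$, so $D$ is an induced cycle of length between $6$ and $k-1$, contradicting the choice of $C$ as a \emph{shortest} induced cycle of length at least $6$. This proves the dichotomy.

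Finally I would exhibit the hub. Fix four consecutive vertices $v_0, v_1, v_2, v_3$ of $C$; since $k \ge 6$ they are distinct and $v_0 v_3 \notin E$ because $C$ is induced. As $\diam(G) = 2$, the vertices $v_0$ and $v_3$ have a common neighbour $y$, and $y \notin V(C)$ since for $k \ge 6$ no vertex of $C$ is adjacent to both $v_0$ and $v_3$. Then $y\,v_0\,v_1\,v_2\,v_3\,y$ is a $5$-cycle, so, $G$ being bridged, it has a chord; the pairs $v_0 v_2$, $v_1 v_3$ and $v_0 v_3$ are non-edges since $C$ is induced, so the chord is $y v_1$ or $y v_2$. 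If $y v_1 \in E$, then $y$ is adjacent to $v_1$ and $v_3$, hence, by the arc rule, also to $v_2$; if $y v_2 \in E$, then $y$ is adjacent to $v_0$ and $v_2$, hence also to $v_1$. Either way $y$ is adjacent to all of $v_0, v_1, v_2, v_3$, so $|N_C(y)| \ge 4$, and the dichotomy forces $y$ to be adjacent to \emph{every} vertex of $C$. Since $C$ has no chords, $V(C) \cup \{y\}$ induces a $k$-wheel, and we may take $x = y$.

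I expect the main obstacle to be the interplay of the last two steps: one has to realise that minimality of $C$ among induced cycles of length at least six is exactly what forces $|N_C(w)|$ to jump directly from $3$ to $k$, and then that the diameter-two hypothesis is exactly what pushes some vertex across that jump (via the $5$-cycle together with the arc rule). The recurring mechanisms — the $4$-cycle arc rule and the use of bridgedness to kill short induced cycles — are routine once set up, but the bookkeeping needed to guarantee that the auxiliary cycle $D$ is genuinely \emph{induced}, so that it can legitimately be played off against $C$, requires care with the indices.
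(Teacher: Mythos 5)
Your proof is correct and follows essentially the same route as the paper's: use the diameter-two hypothesis to find a common neighbour of two cycle vertices at cycle-distance three, use the absence of induced $4$- and $5$-cycles to force adjacency to four consecutive vertices, and then invoke minimality of $C$ to bootstrap to full adjacency. Your arc-rule/dichotomy formulation is a more carefully organised version of the same argument (the paper's own write-up is terser and elides the bootstrapping step).
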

\begin{proof}
Let $C = \{ c_1, \ldots, c_k \}$ be the shortest induced cycle of length $k \ge 6$.
Since $G$ has diameter two and $C$ is an induced cycle, we have $d(c_1,c_{k-2})=2$. Thus, there is some vertex $y$ connecting $c_1$ and $c_{k-2}$.
Since $G$ is bridged, then no subset of $\{ c_1, x, c_{k-2}, c_{k-1}, c_k \}$ can form an induced four or five cycle. Thus, so $x$ is adjacent to $c_k$ and $c_{k-1}$.
The $(k-1)$-cycle $\{ c_1, \ldots, c_{k-2}, x \}$ cannot be an induced cycle either, as the shortest induced cycle had length $k$.
This also implies that $y$ has to be adjacent to each $c_1, \ldots, c_{k-2}$ and $C \cup \{x \}$ induces a $k$-wheel. 
\end{proof}

\begin{lemma}\label{lemma:2-sc-cycle-adjacent}
  Suppose $G = (V,E)$ is a bridged graph that contains an induced cycle of length $k>3$. Let $C$ be the shortest such cycle and suppose $C \cup \{x \}$ induces a uniquely centered wheel. For any $v \in V$, let $A(v)$ be the neighbours of $v$ in $C$. If $A(v) \neq \emptyset$, then the following hold:
  \begin{enumerate}[noitemsep,label=(\alph*)]
  \item The set $A(v)$ induces a path of length at most three. 
  \item If $v$ is not adjacent to $x$, then $A(v)$ induces a path of length at most two.
  \end{enumerate}
\end{lemma}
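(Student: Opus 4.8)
The plan is to analyse how $A(v)$ is distributed around the cyclically ordered cycle $C = c_1 c_2 \cdots c_k$. Since $G$ is bridged it contains no induced cycle of length $4$ or $5$, and since $C$ is a shortest induced cycle of length greater than three we have $k \ge 6$ and no induced cycle of length $\ell$ with $4 \le \ell \le k-1$. I will in fact prove the stronger uniform statement that whenever $v \notin V(C) \cup \{x\}$, the set $A(v)$ is a single vertex, a single edge of $C$, or a path on three consecutive vertices of $C$; both (a) and (b) then follow immediately. (If $v \in V(C)$ the claim is about the two $C$-neighbours of $v$, and $v = x$ must be excluded since $A(x) = V(C)$.)

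First I would set up a gap decomposition of $A(v)$. Assume $v \notin V(C)$ and $A(v) \ne \emptyset$. List the elements of $A(v)$ in cyclic order as $c_{a_1}, \dots, c_{a_m}$, and for each $j$ let $g_j \ge 1$ be the length of the arc of $C$ from $c_{a_j}$ to $c_{a_{j+1}}$ (indices mod $m$) whose interior contains no element of $A(v)$, so $\sum_j g_j = k$. If $m = 1$ there is nothing to prove. Otherwise the crucial observation is: if $2 \le g_j \le k-2$ for some $j$, then $c_{a_j}$ and $c_{a_{j+1}}$ are non-adjacent in $G$ (they are non-neighbours on $C$ in both directions, and $C$ is induced), so the closed walk $v, c_{a_j}, \dots, c_{a_{j+1}}, v$ that follows this arc is an \emph{induced} cycle: the only candidate chords would be an edge from $v$ to an interior arc vertex, impossible since such vertices lie outside $A(v)$, or a chord inside the arc, impossible since $C$ is induced. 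This induced cycle has length $g_j + 2 \in \{4, \dots, k\}$; by minimality of $C$ this forces $g_j + 2 = k$, i.e. $g_j = k-2$.

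The remaining work is an elementary count on the $g_j$. By the previous step each $g_j$ lies in $\{1, k-2, k-1\}$. Two gaps of size at least $k-2$ would sum to at least $2k-4 > k$ (using $k \ge 6$), so at most one gap is large. If all gaps equal $1$, then $m = k$ and $A(v) = V(C)$, whence $C \cup \{v\}$ is an induced wheel and, by unique-centeredness of $C \cup \{x\}$, $v = x$, contradicting our assumption. If some gap equals $k-1$, the remaining $m-1 \ge 1$ gaps sum to $1$, forcing $m = 2$ and $A(v)$ to be a single edge of $C$. If some gap equals $k-2$, the remaining gaps sum to $2$ and each must be $1$ (none can be $k-2 \ge 4$), forcing $m = 3$ and $A(v)$ to be a path on three consecutive vertices. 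In all cases $A(v)$ induces a path of length at most two, giving (b), and (a) follows a fortiori.

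The main obstacle is the careful case bookkeeping underlying these steps: verifying that the cycle built along a gap arc is genuinely induced (this is precisely where "$C$ is induced" and "the arc interior misses $A(v)$" enter), handling the degenerate gap length $k-1$ correctly (there the two endpoints \emph{are} $C$-adjacent, so no induced cycle is created and no constraint is obtained), and tracking the boundary cases $m = 1$ and $A(v) = V(C)$. No heavier machinery than the non-existence of short induced cycles in bridged graphs and the minimality of $C$ is needed.
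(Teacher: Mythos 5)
Your gap-decomposition argument for part (a) is correct and takes a genuinely different (and arguably cleaner) route than the paper's: the paper first shows $A(v)$ consists of consecutive vertices by picking a closest non-adjacent pair inside $A(v)$, and then separately bounds its size by routing a cycle through the complementary arc, whereas you obtain both conclusions at once from the constraint that every gap length lies in $\{1,k-2,k-1\}$. The delicate points (the constructed cycle along a gap arc being induced, the degenerate gap of length $k-1$, the case $m=1$, and all gaps equal to $1$ forcing $v=x$ by unique centering) are all handled correctly.

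However, part (b) is not actually proven. In this paper the ``length'' of the path counts vertices, not edges: the paper's own proof of (a) derives a contradiction from $|A(v)|>3$, and the downstream proof of \lemmaref{lemma:2-sc-uniquely-centered} invokes (b) in the contrapositive form ``if $|A(v)|=3$ then $v$ is adjacent to $x$'' (e.g.\ ``\dots and $v$ is adjacent to $x$ by \lemmaref{lemma:2-sc-cycle-adjacent}''). So (b) asserts $|A(v)|\le 2$ whenever $v$ is not adjacent to $x$. Your argument never uses the hypothesis that $v$ is not adjacent to $x$, and your case analysis explicitly permits $A(v)$ to be three consecutive vertices; calling that a ``path of length at most two'' only works under the edge-counting convention, which is not the one the statement (or its later use) relies on. The missing step is short: if $A(v)=\{c_{i-1},c_i,c_{i+1}\}$ and $v$ is not adjacent to $x$, then $\{x,c_{i-1},v,c_{i+1}\}$ induces a $4$-cycle ($x$ is adjacent to every vertex of $C$; $c_{i-1}$ and $c_{i+1}$ are non-adjacent because $C$ is induced of length $k\ge 6$; and $x$ is not adjacent to $v$ by hypothesis), contradicting bridgedness. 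Adding this one observation completes the proof.
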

\begin{proof}
Suppose $A(v)$ is nonempty and does not induce a path. Choose from $A(v)$ a pair of two such vertices $c_i$ and $c_j$ which have the shortest distance in $C$. Without loss of generality, we may assume these are the vertices $c_1$ and $c_i$ for some $2 < i \le k/2 + 1$. Now $\{v, c_1, \ldots, c_{i}\}$ induces a cycle of length $3 < i+1 \le k/2 + 2 < k$. This contradicts the fact that $k$ was the length of the shortest induced cycle of length at least four.

Next we show that the path induced by $A(v)$ has length at most three. Without loss of generality, assume that $A(v) = \{c_1, \ldots, c_h \}$. For the sake of contradiction, assume that $h > 3$.
Since $C$ is uniquely centered, $v$ cannot be adjacent to all vertices of $C$. Hence, this path has length $3 < h < k$.  Now $\{c_1, v, c_h, \ldots, c_{h+1}, \ldots c_k\}$ induces a cycle of length $3 < k - h + 2 < k -1$, which is a contradiction. 
For the last claim, observe that if $v$ is not adjacent to $x$ and $A(u) = \{c_1, \ldots, c_3\}$, then $\{x,c_1,v,c_3\}$ induces a four cycle.
\end{proof}

\begin{figure}
\begin{center}
  \includegraphics[page=3,width=0.85\textwidth]{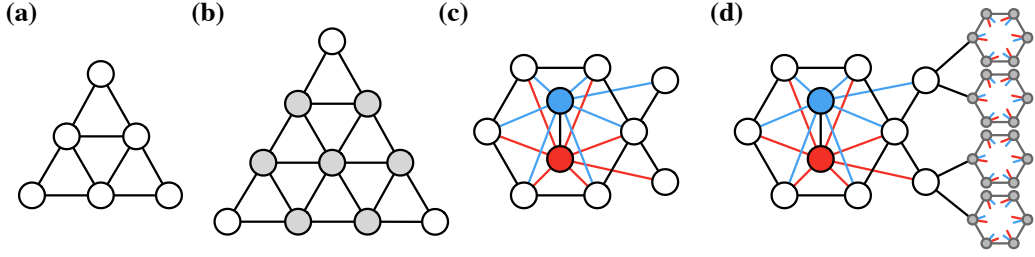}
  \caption{Examples of nicely bridged and not nicely bridged graphs. (a) The $3$-sun is chordal  and nicely bridged. (b) A nicely bridged graph. The grey vertices form a uniquely centered wheel. (c) A non-chordal 2-self-centered bridged graph. The cycle has a wheel that is not uniquely centered: both red and blue verticess are both axles of the wheel.
  (d) A 2-self-centered bridged graph with no simplicial vertices. Each grey vertex is connected to the red and blue vertex. \label{fig:nice-and-not-so-nice}}
\end{center}
\end{figure}

\begin{lemma}\label{lemma:2-sc-uniquely-centered}
  If $G$ is bridged and its every induced wheel is uniquely centered, then $G$ is nicely bridged.
\end{lemma}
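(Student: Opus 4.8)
The plan is to argue by contradiction. Suppose $G$ is bridged, every induced wheel of $G$ is uniquely centered, but $G$ is not nicely bridged. Then there is a convex set $S \subseteq V$ for which $H = G[S]$ is $2$-self-centered but not chordal; by \lemmaref{lemma:induced-bridge}, $H$ is bridged, and being $2$-self-centered it has diameter and radius $2$. Since a bridged graph has no induced $4$- or $5$-cycle, the non-chordal graph $H$ has an induced cycle of length at least $6$; let $C = (c_1,\dots,c_k)$, $k \ge 6$, be a shortest induced cycle of $H$. By \lemmaref{lemma:2-sc-bridged-cycles}, applied to the non-chordal bridged diameter-two graph $H$, there is a vertex $x \in S$ such that $C \cup \{x\}$ induces a $k$-wheel in $H$, hence also in $G$ (as $H$ is an induced subgraph of $G$). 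By hypothesis this wheel is uniquely centered in $G$, and therefore uniquely centered in $H$ as well.

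Next I would locate an auxiliary vertex far from $x$. Since $H$ is $2$-self-centered, $x$ has eccentricity $2$, so there is $u \in S$ with $d_H(x,u)=2$; then $u \ne x$, $u \not\sim x$, and $u \notin C$ because $x$ is adjacent to all of $C$. Writing $A(v)$ for the set of neighbours of $v$ on $C$, \lemmaref{lemma:2-sc-cycle-adjacent}(b) (which applies in $H$, since $C$ is the shortest induced cycle and $C \cup \{x\}$ is a uniquely centered wheel) shows that $A(u)$ is contained in an arc of at most three consecutive vertices of $C$. Because $k \ge 6$, the set of vertices of $C$ within cycle-distance $1$ of this arc has at most five elements, so there is a vertex $c^* \in C$ at cycle-distance at least $2$ from every vertex of $A(u)$. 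Then $u \not\sim c^*$, so $d_H(u,c^*)=2$ and $u,c^*$ have a common neighbour $w$ in $H$; one checks $w \notin C \cup \{x\}$ (if $w=x$ then $u \sim x$, and if $w=c_j \in C$ then $c_j$ is a $C$-neighbour of $c^*$ lying in $A(u)$, contradicting the choice of $c^*$). Applying \lemmaref{lemma:2-sc-cycle-adjacent} to $w$, the set $A(w)$ is contained in an arc of at most four consecutive vertices of $C$ (at most three if $w \not\sim x$), and it contains $c^*$.

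Finally I would derive the contradiction from the rim attachments of $u$ and $w$. The idea is to reroute $C$ through $u$ and $w$: from a suitable endpoint of the arc $A(u)$, go to $u$, then to $w$, then to a suitable endpoint of the arc $A(w)$, and close the path back along $C$. Since $|A(u)| \le 3$ and $|A(w)| \le 4$ while the two arcs lie far apart on $C$, a case analysis on the sizes and relative positions of $A(u)$ and $A(w)$ (and on whether $w \sim x$) shows that this always produces either an induced cycle of $H$ shorter than $C$, contradicting the minimality of $C$, or an induced $4$- or $5$-cycle of $H$, contradicting that $H$ is bridged. In the cases where the rerouted cycle is not short, one instead exhibits such a forbidden short cycle directly on four or five vertices drawn from $\{u,w\}$ together with two or three consecutive rim vertices lying between $A(u)$ and $A(w)$ (for instance, if $c_q$ is the endpoint of $A(u)$ nearest $c^*$ with $c_{q+1} \notin A(w)$ and $c_{q+2} \in A(w)$, then $\{u,c_q,c_{q+1},c_{q+2},w\}$ induces a $5$-cycle). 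Either way the assumed $H$ cannot exist, so $G$ is nicely bridged.

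The main obstacle is this last step: organising the case analysis on how $u$ and $w$ attach to $C$ so that a forbidden cycle is produced in every configuration. The delicate cases are the small ones, in particular $k=6$, where $A(u)$ and $A(w)$ may each cover up to half the rim, and the cases where the bounds of \lemmaref{lemma:2-sc-cycle-adjacent} are nearly tight, for which the rerouting (or the explicit four or five vertices) must be chosen with care.
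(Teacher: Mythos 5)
Your setup coincides with the paper's: assume a $2$-self-centered non-chordal convex subgraph $H$, note it is bridged by \lemmaref{lemma:induced-bridge}, take a shortest induced cycle $C$ of length $k\ge 6$, obtain a wheel centre $x$ from \lemmaref{lemma:2-sc-bridged-cycles}, pick a vertex at distance $2$ from $x$, and control rim attachments via \lemmaref{lemma:2-sc-cycle-adjacent}. Up to that point the argument is sound (modulo a slightly loose reading of the bounds in \lemmaref{lemma:2-sc-cycle-adjacent}: part (b) gives an arc of at most \emph{two} vertices for a vertex not adjacent to $x$, and part (a) gives at most three, not four).

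However, there is a genuine gap exactly where you flag it: the concluding case analysis is the substance of the proof and is not carried out. The single rerouting heuristic (``arc of $A(u)$ $\to u \to w \to$ arc of $A(w)$, close along $C$'') does not suffice, and the paper needs several further ideas that your sketch does not anticipate. First, the paper proves that the distance-$2$ vertex $y$ has \emph{empty} rim attachment, $A(y)=\emptyset$, via an auxiliary common neighbour of $y$ and $c_5$ and an explicit $(k-1)$-cycle; you instead allow $A(u)$ to be a nonempty arc, which makes the subsequent configurations strictly harder and different from the ones the paper analyses. Second, after reducing to $A(y)=\emptyset$, the paper works with \emph{two} neighbours $u,v$ of $y$ attached near $c_1$ and $c_5$ (which need not be adjacent to each other, so no rerouted cycle through both is available a priori); in one branch it shows $u\not\sim v$, extracts a new induced $6$-cycle $C'=\{y,u,c_1,x,c_5,v\}$, forces $k=6$, and invokes \lemmaref{lemma:2-sc-bridged-cycles} a \emph{second} time to produce a second wheel centre $z$ for $C'$, from which the contradiction follows; the other branch requires a third auxiliary vertex $w$ and a further chain of adjacency deductions. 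None of this is routine, and precisely the $k=6$ cases you identify as ``delicate'' are where the simple rerouting degenerates (the arcs $A(u)$ and $A(w)$ can abut or overlap, so the rerouted walk need not be a shorter induced cycle at all). As written, the proposal establishes the scaffolding but not the theorem.
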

\begin{proof}
Suppose $G$ is not nicely bridged, that is, there is a $H$ is 2-self-centered convex subgraph that is not chordal. By \lemmaref{lemma:induced-bridge} the graph $H$ is bridged. Since $H$ is bridged, but not chordal, $H$ contains some induced cycle of length at least six. Let $k > 5$ be the length of  the shortest induced cycle in $H$. Fix $C = \{ c_1, \ldots, c_k \}$ to be some induced cycle of length $k$. By \lemmaref{lemma:2-sc-bridged-cycles} there exists some vertex $x$ in $H$ such that $C \cup \{ x \}$ induces a $k$-wheel.

Because $H$ is 2-self-centered, $x$ has eccentricity two. Thus, there exists some vertex $y$ in $H$ such that $d(x,y)=2$. We show that the existence of such $y$ leads to the existence of a induced cycle of length 4, 5, or $k-1$, which contradicts the assumption that $C$ was the shortest cycle of length $k > 3$.
Clearly, $y \notin C$. By \lemmaref{lemma:2-sc-cycle-adjacent}, $y$ is adjacent to at most three consecutive vertices $A(y)$ of $C$. Without loss of generality, assume that $A(y) \subseteq \{c_1,c_2,c_3\}$. 

First, we show that $A(y)$ must be empty. Observe that $d(y,c_5) = 2$.
Hence, there is some vertex $v \notin C \cup \{x \}$ that is adjacent to both $y$ and $c_5$. 
However, $v$ can be adjacent to at most three consecutive vertices $A(v)$ of $C$. If $v$ is not adjacent to $c_3$, then either $\{v,y,c_3,c_4\}$ induces a four cycle or $\{v,y,c_3,c_4,c_5\}$ induces a five cycle. Thus, $v$ is adjacent to $c_3$. But then $\{c_1,y,z,c_5,\ldots, c_k\}$ induces a $(k-1)$-cycle. Thus, $A(y)$ must be empty.

Since $A(y)$ is empty, the vertex $y$ is not adjacent to any vertex $c_i \in C$. But since $H$ is 2-self-centered, $d(y,c_i)=2$ for all $c_i \in C$. Choose a neighbor $u$ of $y$ that is connected to $c_1$. By \lemmaref{lemma:2-sc-cycle-adjacent} the set $A(u)$ induces a path. Without loss of generality, assume that $A(u) = \{ c_1, \ldots, c_j\}$ for some $1 \le j \le 3$.  Since $d(y,c_5)=2$, there is some $v$ that is adjacent to $y$ and $c_5$. By \lemmaref{lemma:2-sc-cycle-adjacent} the set $A(v)$ induces a path of length at most three and so $A(v) \subseteq  \{ c_3, \ldots, c_k, c_1 \}$.
Note that $A(u) \cap A(v)$ can intersect either at $c_1$ or $c_3$, since $C$ has length $k \ge 6$. 

\begin{enumerate}
\item Consider the case $A(u) \cap A(v) \neq \{ c_1 \}$.
Now $A(v) \subseteq \{ c_3, \ldots, c_h \}$ for $5 \le h \le k$.
If $u$ and $v$ are adjacent, then 
$\{c_1,u,v,c_h, \ldots, c_k \}$ is an induced cycle of length at least four and less than $k - (h - 5) \le k$, which is a contradiction. Hence, $u$ and $v$ are not adjacent.
This means that a subset of $C' = \{ y, u, c_1, x, c_5, v \}$ induces a cycle of length at least four.
Since $G$ cannot have any induced cycles of length four or five, $C'$ must be an induced 6-cycle.
Since $C$ was the shortest induced cycle of length $k \ge 6$, it follows that $k=6$.
By \lemmaref{lemma:2-sc-bridged-cycles} there is some $z \neq x$ connected to all vertices of $C'$. Now $c_1, c_5 \in A(z)$. Since $A(z)$ is an induced path of length at most three, this implies that $c_6 \in A(z)$. Now either $\{ y, v, c_6, c_1, u \}$ is an induced 5-cycle or $\{ y, v, c_1, u \}$ is an induced 4-cycle, a contradiction.

\item Consider the case $A(u) \cap A(v) = \{ c_1 \}$. This means that $k=6$ and $A(v) = \{ c_5, c_6, c_1\}$ and $v$ is adjacent to $x$ by \lemmaref{lemma:2-sc-cycle-adjacent}.
Since $d(y,c_3)=2$, there is some $w$ adjacent to $y$ and $c_3$. Now $A(w)$ induces a path of length at most three. 
Suppose $w$ is not adjacent to $v$. Then either $\{ y, w, c_3, x, v \}$ or $\{ y, w, x, v \}$ is an induced cycle. Hence, $w$ is adjacent to $v$. This implies that $A(w) = \{ c_3, c_4, c_5 \}$, as otherwise we could find another induced cycle of length either four or five. Thus $w$ is adjacent to $x$ by \lemmaref{lemma:2-sc-cycle-adjacent}.

If $w$ is not adjacent to $u$, then either $\{ y, u, x, w \}$ or $\{ y, u, c_1, x, w \}$ is an induced cycle of length four or five, respectively. Thus, $w$ is adjacent to $u$. If $j \le 2$ we have that $\{u, c_j, \ldots, c_3, w \}$ is an induced cycle of length four or five. Hence $A(u) = \{ c_1, c_2, c_3 \}$. But then $\{ u, c_3, c_4, c_5, v \}$ is an induced 5-cycle, which is a contradiction. \qedhere
\end{enumerate}
\end{proof}

\begin{proof}[Proof of \theoremref{thm:nicely-bridged-examples}]
(a)~The claim follows from the fact that every induced subgraph of a chordal graph is also chordal. Hence, this also holds for any subgraph induced by a convex set.

(b)~Suppose $H$ is a diameter two subgraph of $G$ induced by a convex set $S$. By \lemmaref{lemma:induced-bridge} $H = G[S]$ is bridged. Since $G$ does not contain an induced 3-sun, neither does $H$. Thus, by \lemmaref{lemma:bridged-radius} we have $\radius(H) \le (\diam(H) + 1)/2 = 3/2$. Since the radius must be integral, $H$ has radius one, and cannot be 2-self-centered. Therefore, $G$ is nicely bridged.

(c)~This is the claim from \lemmaref{lemma:2-sc-uniquely-centered}.

(d) We show the claim by establishing that every wheel of $G$ is uniquely centered. Suppose there exists a $k$-wheel for $k>3$ that is not uniquely centered. If no such wheel exists, then $G$ is chordal and it follows from (a) that $G$ is also nicely bridged. Let $C = \{c_0, \ldots, c_{k-1}\}$ be the induced $k$-cycle forming the wheel and $x \neq y$ be two vertices such that $C \cup \{x \}$ and $C \cup \{y\}$ both induce a $k$-wheel. Note that $x$ and $y$ are not adjacent, as otherwise $\{c_0,c_1,x,y\}$ would be a clique of size four. But since  $x$ and $y$ are not adjacent, the set $\{x,c_0,y,c_2\}$ induces a four cycle, which contradicts the fact that $G$ was bridged.
\end{proof}

\end{document}